\documentclass{article}

\usepackage[dvipdfmx]{graphicx}
\usepackage[margin=3cm]{geometry}

\usepackage{mymacro}

\usepackage{capt-of}
\usepackage{autonum}
\usepackage{authblk}

\usepackage[colorinlistoftodos]{todonotes}
\setuptodonotes{inline, inlinewidth=5cm, noinlinepar}

\title{Persistence-based topological optimization: a survey}
\author[1]{Mathieu Carri\`ere}
\author[2]{Yuichi Ike}
\author[3]{Th\'eo Lacombe}
\author[4]{Naoki Nishikawa}
\affil[1]{\texttt{mathieu.carriere@inria.fr}, DataShape, Centre Inria d'Université Côte d'Azur, Sophia-Antipolis, France}
\affil[2]{\texttt{ike@ms.u-tokyo.ac.jp}, Graduate School of Mathematical Sciences, The University of Tokyo, 3-8-1 Komaba Meguro-ku Tokyo 153-8914, Japan}
\affil[3]{\texttt{theo.lacombe@univ-eiffel.fr}, Laboratoire d'Informatique Gaspard Monge, Université Gustave Eiffel, CNRS, F-77454 Marne-la-Vallée, France}
\affil[4]{\texttt{nishikawa-naoki259@g.ecc.u-tokyo.ac.jp},  Graduate School of Information Science and Technology, The University of Tokyo, 7-3-1 Hongo Bunkyo-ku Tokyo 113-0033, Japan; RIKEN AIP, Nihonbashi 1-chome Mitsui Building, 15th floor,1-4-1 Nihonbashi, Chuo-ku, Tokyo 103-0027, Japan}
\date{}

\graphicspath{{images/}{images/topoae/}}

\begin{document}

\maketitle

\begin{abstract}
Computational topology provides a tool, \emph{persistent homology}, to extract quantitative descriptors from structured objects (images, graphs, point clouds, etc). These descriptors can then be involved in optimization problems, typically as a way to incorporate topological priors or to regularize machine learning models. This is usually achieved by minimizing adequate, topologically-informed losses based on these descriptors, which, in turn,
naturally raises theoretical and practical questions about the possibility of optimizing such loss functions using gradient-based algorithms. This has been an active research field in the topological data analysis community over the last decade, and various techniques have been developed to enable optimization of persistence-based loss functions with gradient descent schemes. 

This survey presents the current state of this field, covering its theoretical foundations, the algorithmic aspects, and showcasing practical uses in several applications. 
It includes a detailed introduction to persistence theory and, as such, aims at being accessible to mathematicians and data scientists newcomers to the field. It is accompanied by an open-source library\footnote{\url{https://github.com/git-westriver/benchmark_ph_optimization}} which implements the different approaches covered in this survey, providing a convenient playground for researchers to get familiar with the field.
\end{abstract}

\paragraph{Keywords:} Topological Data Analysis, Computational Topology,  Persistent Homology, Optimization, Machine Learning.

\paragraph{Acknowledgments.} 
YI and NN are supported by JSPS Grant-in-Aid for Transformative Research Areas (A) Grand Number JP22H05107 and JST, CREST Grant Number JPMJCR24Q1, Japan. 
TL is supported by l’Agence Nationale de la Recherche (ANR) under grant TheATRE ANR-24-CE23-7711.
MC is supported by l’Agence Nationale de la Recherche (ANR) under grants TopModel ANR-23-CE23-0014 and 3IA ANR-23-IACL-0001.

\clearpage 
\tableofcontents

\section{Introduction}

Topological Data Analysis (TDA) is a field of data science that focuses on the characterization, inference, and encoding of topological features (such as connected components, branches, loops, voids, etc.) in structured objects (such as graphs, point clouds, time series, etc.). 
Such topological features have often been shown to carry complementary information to traditional data descriptors in downstream machine learning tasks \cite{liu2016applying,rieck2019persistent,zhao2020persistence,akai2021experimental,verma2024topological}, and to be able to significantly improve models in terms of predicting performance
in a wide range of applications including computer graphics \cite{pascucci2010topological,carriere2015stable,tierny2017topology}, analysis of machine learning models \cite{rieck2019neural,lacombe2021topological,birdal2021intrinsic,andreeva2024topological}, material and molecular science \cite{hiraoka2016hierarchical,saadatfar2017pore,olejniczak2023topological}, or computational biology \cite{bukkuri2021applications,aukerman2022persistent,chung2024morphological} for example.

However, the advent of deep learning has shifted the machine learning paradigm from manually crafting data features to the automatic learning of such features through the optimization of neural network architectures with \emph{gradient descent}. 
Hence, among the different directions driving current TDA research, the ability to \emph{differentiate functions depending on topological features} has recently appeared as an important question for smooth integration of TDA into modern deep learning pipelines. 
This field of research, called {\em persistence-based topological optimization}, has recently seen several breakthroughs, both in terms of theoretical developments and numerical
implementations and experiments. 
In essence, persistence-based topological optimization involves the theoretical derivation of {\em differentiability properties} of topological features built using an algebraic machinery called \emph{persistent homology}, and practical 
implementations of corresponding, well-defined {\em gradients} associated to their computation. 
Note that this concept should not be confused with ``topological optimization'' as used in material science \cite{sigmund2013topology}, which consider the optimization of topological features not based on persistent homology (though persistent homology in increasingly used in that field as well, see \cite{sugai2024data,kii2025data} and \cref{sec:applications} for instance). 
For the sake of concision, we will nonetheless use the terminology \emph{topological optimization} to refer to persistence-based topological optimization in this survey. 
\Cref{fig:exdiff} provides a general overview on the methodology we will present in this survey.

\begin{figure}
	\centering
	\includegraphics[width=.9\textwidth]{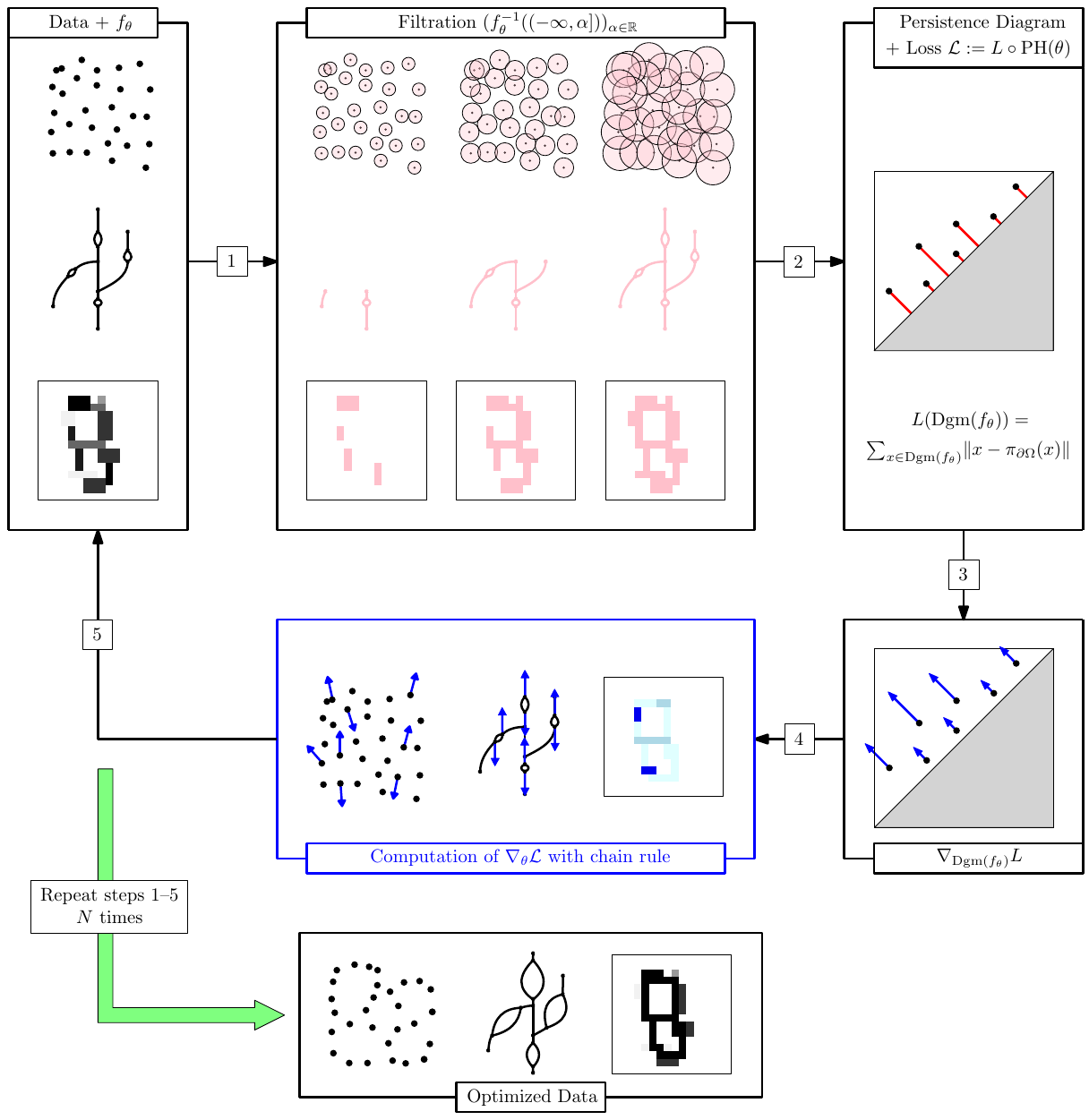}
	\caption{\label{fig:exdiff} Standard topological optimization scheme typically used in deep learning pipelines, with examples on point cloud, graph and image data. Note that step $1$ needs not be at the beginning of the pipeline, and can be incorporated at any stage (this happens when input data are computed as the outputs of some other models, such as, e.g., image filters automatically computed by CNNs). The blue square is the main theoretical question that we discuss in this survey.}
\end{figure}

Such gradients can then be used subsequently in several different tasks based on gradient descent, such as 
filtration learning~\cite{Carriere2021a, pmlr-v119-hofer20b, horn2021topological} or model regularization~\cite{chen2019topological, LCLO2023gradient}.
While vanilla methods for computing these topological gradients have been identified since 2016 thanks to the seminal work of Gameiro, Hiraoka and Obayashi~\cite{gameiro2016continuation},
it quickly appeared that such natural first approaches were limited both in terms of lack of theoretical guarantees and erratic behaviors in numerical experiments.
The difficulty associated to addressing these limitations left these questions open in the TDA community for a few years until several answers were proposed recently, with new developments still appearing as of today.

In this survey, our aim is to present in a unified and consistent framework the different methods, with their theoretical results and algorithms, that are currently available in the literature for performing
topological optimization, as well as to provide a all-in-one-place public library\footnote{\url{https://github.com/git-westriver/benchmark_ph_optimization}} where these different approaches are implemented. 

\paragraph{Outline.} \cref{sec:background} provides a general introduction to Topological Data Analysis, with a focus on the tools required later in this survey. We recall the construction of persistence diagrams (PDs) from filtrations on simplicial complexes, including their algorithmic computation as it will be core to define and compute gradients involving PDs.

\cref{sec:differentiability} is dedicated to the construction of a differential structure for maps valued in and from the space of persistence diagrams, following the work of Leygonie, Oudot, and Tillman \cite{LOT2022}. 
In essence, this construction allows the user to treat the byproduct of persistence homology, a topological descriptor called \emph{persistence diagram} (PD)---when it comes to compute differentials---as vectors in an Euclidean space by identifying them with their \emph{lift}. The main result is that even if several lifts are possible, the choice has no impact when computing the (usual) gradient of composite maps going through the non-linear space of persistence diagrams.  
This formal construction theoretically justifies practical implementations in which persistence diagrams are indeed manipulated as arrays (i.e.,~vectors). 

\cref{sec:gradient_descent} makes use of the now defined gradients to perform (gradient-based) topological optimization. 
We start by reviewing the so-called \emph{vanilla} gradient descent (with fixed or decreasing step size) and its practical limitations, and then present several variations of it that have been recently introduced in the literature and that either strengthen the theoretical guarantees or the numerical efficiency of the vanilla approach. 

\cref{sec:applications} proposes an overview of practical applications involving topological optimization in a wide range of settings: filtration learning, dimensionality reduction, computer vision and regularization of objective functions used in machine learning. 

Eventually, \cref{sec:expes} showcases the different methods presented in \cref{sec:gradient_descent} in practice, in the context of topological optimization for point clouds. 
All methods have been (re)implemented in an all-in-one Python library that may be of its own interest for future research in the field.

\paragraph{Notations.} We summarize below the core notation that will be used throughout the paper:
\begin{itemize}
    \item $\bN = \{0,1,2,\dots \}$ denotes the set of natural numbers including $0$ and $\bR$ denotes the set of real numbers. 
    \item For a finite set $A$, the cardinality of $A$ is denoted by $|A|$.
    \item $\pdspace$ denotes the space of persistence diagrams (PDs), that are finite multisets\footnote{i.e.,~sets where repetition of points is allowed.} supported on the closed extended half-plane $\groundspace \coloneqq \{(b,d) \in \overline{\bR}^2 \,|\, b \leq d\}$ (\Cref{def:PD}). The notation $\pdspace^o$ denotes ordinary persistence diagrams, i.e.,~PDs with points supported on the open half-plane $\{(b,d) \in \bR^2,\  b < d \}$. 
    It is equipped with partial matching metrics denoted by $\FG_q,\ q \in [1,+\infty]$ (\Cref{def:dist_wasserstein}). The empty persistence diagram is denoted by $\varnothing$.
    \item If $x \in \groundspace$ belongs to the support of a persistence diagram $\alpha \in \pdspace$, $m(x)$ denotes its multiplicity.
    \item In equations, $\Pers$ typically denotes a map from a manifold $M$ (typically $M = \bR^d$) valued in $\pdspace$, $L$ typically denotes a map from $\pdspace$ to some manifold $N$ (typically $N = \bR^{d'}$ with $d'=1$), and $\cL$ denotes a composite map $\cL \coloneqq L \circ \Pers \colon M \to N$. 
    \item Given a simplicial complex $K$ (\cref{def:simpcomp}), the set of filtrations on $K$ is denoted by $\Filt_K$ (\cref{def:filtration}). Given $f \in \Filt_K$, $\Dgm(f) \in \pdspace$ denotes the persistence diagram induced by $f$ on $K$ (\Cref{def:filtration}), and $\Dgm_p(f)$ denotes its persistence diagram restricted to homology dimension $p$. 
    \item $\bF_2$ denotes the field with two elements, i.e.,~$\bZ/2 \bZ$.
\end{itemize}

\section{Background on Topological Data Analysis}
\label{sec:background}

The main backbone of TDA is the so-called {\em persistent homology} (PH) theory, which can be used to define the main
TDA descriptors called {\em persistence diagrams} (PDs).
Hence, in this section, we briefly recall the basics associated to the construction of PDs from PH. We
will then discuss their differentiability properties in the following sections. 

\subsection{Simplicial complexes, homology groups and filtrations}\label{subsec:background}

Generally speaking, the main difficulty for defining quantitative topological descriptors comes from the fact that topological
information is usually only well-defined for continuous spaces, as opposed to discrete data (such as point clouds, graphs, meshes, etc.) that one 
has to deal with in data science. 
As a solution to this issue, the so-called {\em simplicial complexes} are the most
basic bricks of TDA, as they are formal objects that can be handled by computers thanks to their combinatorial nature,
yet for which topological information, in the form of {\em homology groups}, can still be defined. 

\begin{definition}\label{def:simpcomp}
    Let $V$ be a finite set. A {\em simplicial complex} whose vertex set is $V$ is a collection $K$ of subsets of $V$ satisfying the following conditions:
    \begin{enumerate}
        \item[(i)] $\varnothing \not\in K$;
        \item[(ii)] for any $v \in V$, $\{v\} \in K$;
        \item[(iii)] if $\sigma \in K$ and $\varnothing \neq \tau \subseteq \sigma$, then $\tau \in K$. 
    \end{enumerate}
    The set $V$ is called the {\em vertex set} of $K$, and is denoted by $V(K)$. 
    Moreover, any $\sigma\in K$ with cardinality $|\sigma|=p+1$ is called a {\em simplex of dimension $p$}, or $p$-simplex for short, and the 
    set of all $p$-simplices, denoted by $\Sk_p(K)$, is called the {\em $p$-skeleton} of $K$.
    The dimension of $K$ is defined as the maximum of the dimension of $\sigma \in K$. 
\end{definition}

\begin{example}
    Graphs (represented as a set of vertices $V$ and a set of edges $E \subset V \times V$) are particular cases of simplicial complexes, namely simplicial complexes of dimension $1$.
\end{example}

We can now define the {\em homology groups} of simplicial complexes.
In essence, these groups aim at encoding the numbers of {\em holes} of the simplicial complexes using linear algebra. Indeed, these holes, or {\em cycles}, can be entirely
characterized as the elements of the kernel of a linear map called the {\em boundary operator}. In words, cycles are linear combinations of simplices, or {\em chains},
whose boundaries are null.

\begin{definition}
Let $K$ be a simplicial complex. 
Let $\bF_2$ be the field consisting of two elements.
The vector space of \emph{$p$-chains} of $K$, denoted by $C_p(K)$, is the $\bF_2$-vector space with basis $\Sk_p(K)$.
The \emph{boundary operator} $\partial_p \colon C_p(K)\to C_{p-1}(K)$ is the linear operator defined on any $p$-simplex $\sigma=\{v_0,v_1,\dots,v_p\}$ as:
\begin{equation}
    \partial_p(\sigma)=\sum_{i=0}^p \{v_0,\dots,v_{i-1},v_{i+1},\dots, v_p\}.
\end{equation}
Finally, we call $\Ker \partial_p$ the vector space of \emph{$p$-cycles} of $K$, and $H_p(K) \coloneqq \Ker \partial_p/\Image \partial_{p+1}$ the \emph{$p$-th homology group}\footnote{Note that $H_1$ is actually a vector space, due to the fact that chains are built on top of the \emph{field} $\bF_2$. See \cref{rem:field}.} of $K$. 
\end{definition}

Considering the quotient of $\Ker \partial_p$ by $\Image \partial_{p+1}$ is motivated by the relation $\partial_p\circ \partial_{p+1}=0$. 
Doing so, non-zero elements of $H_p(K)$ correspond to (equivalence class of) $p$-cycles that are not obtained as boundaries of $(p+1)$-chains.  

\begin{figure}[ht]
    \centering
    \begin{tikzpicture}
            \draw (0,0) node[label=below: {$v_0$}] {} -- (2,0) node[label=below: {$v_1$}] {}; 
            
            \draw (2.5,0) node {$\mapsto$};
            \draw (2.5,0.5) node {$\partial_1$};

            \fill (3.,0) circle [radius=2pt] node[label=below: {$v_0$}] {};
            \fill (5.,0) circle [radius=2pt] node[label=below: {$v_1$}] {};

            \fill[lightgray] (7,-1) -- (8,1) -- (9,-1) --cycle;
            \draw (7,-1) node[label=below: {$v_0$}] {};
            \draw (9,-1) node[label=below: {$v_1$}] {};
            \draw (8,1) node[label=above: {$v_2$}] {};
            
            \draw (9.5,0) node {$\mapsto$};
            \draw (9.5,0.5) node {$\partial_2$};
            
            \draw (10,-1) node[label=below: {$v_0$}] {} -- (12,-1) node[label=below: {$v_1$}] {} -- (11,1) node[label=above: {$v_2$}] {} -- cycle;
    \end{tikzpicture}
    \caption{Boundary operator}
    \label{fig:boundary}
\end{figure}
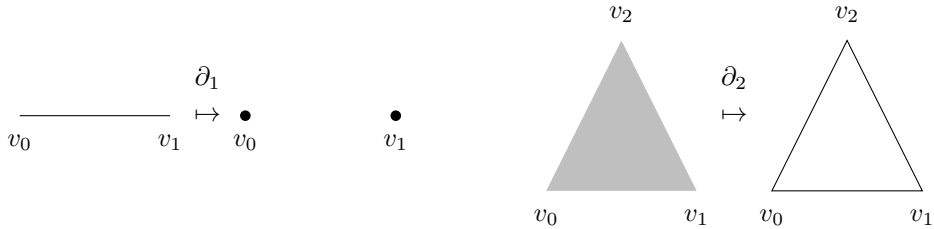

\begin{example} Let $\{v_0,v_1\}$ and $\{v_0,v_1,v_2\}$ be the $1$-simplex and the $2$-simplex displayed in \cref{fig:boundary}. We have 
\begin{equation}
    \partial_1 \{v_0,v_1 \} = \{v_1\} + \{v_0\}
\end{equation}
and 
\begin{equation}
    \partial_2 \{v_0,v_1,v_2\} = \{v_1,v_2\} + \{v_0,v_2\} + \{v_0,v_1\}.
\end{equation}
These imply
\begin{equation}
    \partial_1 \circ \partial_2 (\{v_0,v_1,v_2\}) 
    = 
    (\{v_2\} + \{v_1\}) + (\{v_2\}+\{v_0\}) + (\{v_1\}+\{v_0\})
    = 0,
\end{equation}
since the coefficient field is $\bF_2$.
\end{example}

\begin{example}\label{example:torus_CS}
\begin{figure}
    \centering
    \includegraphics[width=0.3\linewidth]{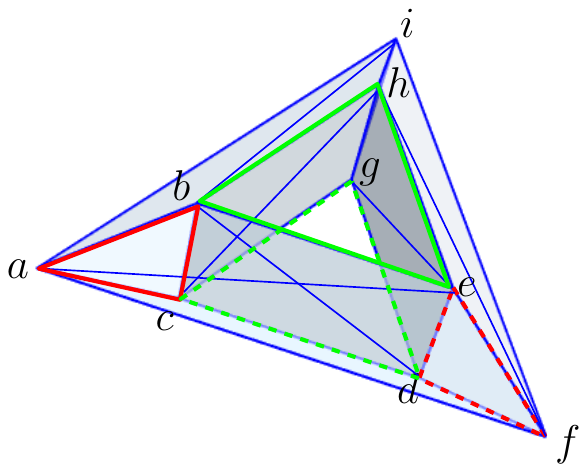}
    \caption{A simplicial complex (whose geometric realization is that of a torus embedded in $\bR^3$).}
    \label{fig:torus_SC}
\end{figure}
    \cref{fig:torus_SC} showcases a simplicial complex $K$ embedded in $\bR^3$, topologically equivalent to a torus. 
    Here, the set of vertices $V$ is $\{ \{a\}, \{b\},\dots , \{i\}\}$. 
    Its $1$-skeleton (edges) are depicted by blue lines, and its $2$-skeleton (faces, e.g.,~$\{ a,b,e \}$ and $\{a,e,f\}$) are depicted in shadow-gray. 
    There is no higher dimensional simplex, i.e.,~$K$ is of dimension $2$. 
    The $1$-chains $\{a,b\}+\{b,c\} +\{c,a\}$ and $\{d,e\}+\{e,f\} + \{f,d\}$, displayed in red are both $1$-cycles (i.e.,~belong to $\ker \partial_1$), but happen to be equivalent in the homology group $H_1$ as they are both in the boundary of a same $2$-chain (the ``cylinder'' obtained by restricting $K$ to $\{a,b,c,d,e,f\}$). 
    A similar comment holds for the $1$-cycles $\{b,h,e\}$ and $\{c,d,g\}$ (displayed in green). 
    Furthermore, red and green cycles are nonetheless independents from each others, and thus yield two different elements in $H_1$, which are actually the two generators of that group, which is thus a vector space of dimension $2$. 
\end{example}

\begin{remark}\label{rem:field}
    One can also define homology groups over a ring other than $\bF_2$, e.g.,~$\bZ$, for which we need to equip each simplex with an orientation and modify the boundary operator with $\pm$-signs. 
    This allows to identify \emph{torsion subgroups} in the homology groups. 
    This allows to distinguish between, e.g., a torus and a Klein bottle. 
    As most applications of topological data analysis use $\bF_2$, we specify our exposition to this choice in this survey and refer to \cite{obayashi2023field} for a thorough discussion on that topic. 
    Note that using a field (instead of a more general ring) as coefficients to build our chains implies that the homology groups are actually (finite dimensional) \emph{vector spaces}. 
    The list of dimensions of $H_p$ for $p \in \bN$ are called \emph{Betti numbers} of $K$, denoted by $(\beta_p)_p$. 
    Betti numbers are topological invariants, in that two objects that are homotopy equivalent (i.e.,~one can be obtain from the other by a smooth deformation, see \cite[Ch.~III.2]{edelsbrunner2010computational} for a formal definition) have the same Betti numbers. 
    In \cref{example:torus_CS}, one obtains $\beta_0 = 1$, $\beta_1 = 2$, $\beta_2 = 1$, and $\beta_p = 0$ for $p \geq 3$, accounting the for that that a torus has one connected component, two (independent) generating loops, and one cavity.
\end{remark}

In general, datasets are not made of objects represented as simplicial complexes, preventing one from straightforwardly use this construction in practice. 
Furthermore, building a specific simplicial complex $K$ on top of objects (say, point clouds) often relies on user-based arbitrary choices (e.g.,~connecting points that are closer than an arbitrary threshold $\epsilon$). 
In order to avoid depending on such priors, the main idea of
\emph{persistent homology} (PH) is to look at the homology groups of a \emph{growing sequence} of subcomplexes of a fixed (usually large) complex $K$, called a \emph{filtration} of $K$.

\begin{definition}\label{def:filtration}
    A \emph{filtration} of a simplicial complex $K$ is an indexed family of complexes $(K_t)_{t\in\bR}$ satisfying $K_t\subseteq K$ for all $t$, and $K_s\subseteq K_t$ whenever $s\leq t$. 
    Whenever the intervals $I_\sigma \coloneqq \{t\in\bR\,|\,\sigma\in K_t\}$ attain their infimums, 
    a filtration can be described by the function $f \colon K\to\bR$ defined with $f(\sigma) \coloneqq \min I_\sigma$. 
    Reciprocally, any function $f \colon K \to \bR$ satisfying $f(\sigma) \le f(\sigma')$ for any pair $\sigma \subseteq \sigma'$
    induces a filtration with $K_t\coloneqq\{\sigma\in K \mid f(\sigma)\leq t\}$.
    The set of filtrations on $K$ is denoted by $\Filt_K$. 
\end{definition}

\begin{remark}
In practice, most of standard filtrations in TDA are provided through functions. 
Hence, we will use the terms filtration and function satisfying the condition in \Cref{def:filtration} interchangeably hereafter, even though the former is more general than the latter.
From now on, we will implicitly assume that all filtrations considered are actually induced by functions, i.e.,~the intervals $I_\sigma$ defined above do attain their infimums. 
\end{remark}

Practically speaking, filtrations can be interpreted in different ways. For instance, in the case of Vietoris--Rips filtrations (see~\cref{def:vr} below),
these growing complexes can be seen as approximations of a topological space at different scales; computing the homology groups for all these complexes 
is thus a way to avoid computing the homology groups at a single arbitrary scale.
We provide below several examples of filtration that are routinely used in the TDA literature for various types of data (point clouds, images, etc.). 

\begin{example}[\v Cech filtration on point clouds] \label{example:Cech}
\begin{figure}
    \centering
    \includegraphics[width=\linewidth]{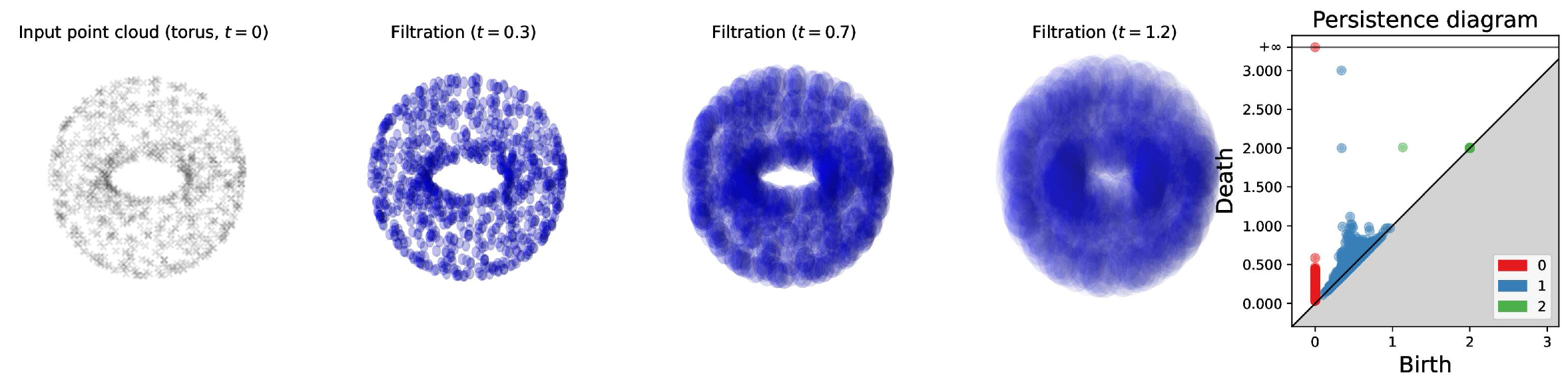}
    \caption{Illustration of the \v Cech filtration and resulting persistence diagram. From left to right, an input point cloud (sample of $n=1000$ points on a torus), increasing sublevel sets of the distance function to that point cloud, i.e.,~$\bigcup_{x \in X} B(x,t)$, and eventually on the right the corresponding persistence diagram (see \cref{def:PD}) Colors in the persistence diagram correspond to different homology dimension; the prominent point in red (death = $+\infty$) accounts for the unique connected component of the (underlying) manifold. The two prominent points in blue correspond to the two generating loops in $H_1$ (recall \cref{example:torus_CS})---their death coordinate indicating a ``small'' radius $r_1 = 2$ and a large radius of $r_2 = 5$, and the green one accounts for the cavity ($H_2$, also filled in at $t=2$). Points closer to the diagonal $\{b=d\}$ (in particular, red and blue ones) correspond to less persistent features accounting for the sampling (they would vanish, i.e.,~collapse on the diagonal, when $n \to +\infty$).}
    \label{fig:illu_cech_PD}
\end{figure}

    Let $X = \{x_1,\dots,x_n\} \subset \bR^d$ be a finite point cloud and $t \in \bR$. The \v Cech simplicial complex $C[X]_t$ is defined by
    \begin{equation}\label{eq:cech_characterization}
        \{x_{i_1},\dots,x_{i_p}\} \in C[X]_t \Longleftrightarrow \bigcap_{j=1}^p B(x_{i_{j}}, t) \neq \varnothing.
    \end{equation}
    The family $(C[X]_t)_{t \in \bR}$ is a filtration, which is called the \emph{\v Cech filtration} of $X$. 
    An important result referred to as the \emph{nerve theorem} states that $C[X]_t$ has the same homology as the set $\bigcup_{x \in X} B(x,t) \eqqcolon \cC[X]_t \subset \bR^d$ (see \cite[\S III.2]{edelsbrunner2010computational} and references therein). 
    When $X$ is sampled almost-uniformly on a sufficiently regular submanifold $M \subset \bR^d$ with $n$ large enough, $\cC[X]_t$ has itself the same homology as $M$ for a wide range of values of $t$  \cite{niyogi2008finding}, making this filtration a natural candidate to build estimators to infer topological properties of $M$. 
    See \cite{chazal2009sampling,boissonnat2018geometric,divol2021minimax,cohen2022lexicographic} and \cref{fig:illu_cech_PD}.
\end{example}

\begin{remark}
    An important drawback of the \v Cech filtration that limits its use in practical applications lies in its computational efficiency: asserting whether the intersection of balls in \eqref{eq:cech_characterization} is non-empty becomes quickly intractable when the ambient dimension $d$ increases.
\end{remark}

\begin{example}[Vietoris--Rips (VR) filtration on point clouds]
\label{def:vr}
    Let $X=\{x_0,\dots,x_n\} \subset \bR^d$ be a finite point cloud and $t \in \bR$. 
    The Vietoris--Rips simplicial complex $\Rips{X}_t$ at scale $t$ is the simplicial complex with vertex set $V(\Rips{X}_t) = X$ and whose $p$-skeleton is prescribed by: 
    \begin{equation}
        \{x_{i_0},\dots,x_{i_p} \} \in \Rips{X}_t 
        \Longleftrightarrow \text{$\|x_{i_j}-x_{i_k}\| \le 2t$ for any $j,k \in \{0,\dots,p\}$}.
    \end{equation}
    The family $\{\Rips{X}_t\}_{t \in \bR}$ is a filtration, which is called the \emph{Vietoris--Rips (VR) filtration} of $X$. It can be equivalently described with the function $f_{\rm VR}(X)$ defined as 
    $f_{\rm VR}(X)(\{x_{i_0},\dots,x_{i_p}\}) = \max \{\|x_{i_j}-x_{i_k}\|/2\,|\,j,k \in \{0,\dots,p\}\}$.
\end{example}

\begin{remark}
    An important property of the VR filtration is that---in contrast to the \v Cech filtration---it only depends on the pairwise distance matrix $(\|x_i - x_j\|)_{1 \leq i,j \leq n}$. 
    This has two interesting consequences: $(i)$ computing $R[X]_t$ becomes linear in the ambient dimension\footnote{Note however that the \emph{statistical efficiency} of the Vietoris--Rips filtration---when it comes to infer the topology of an underlying submanifold $M \subset \bR^d$ on which $X$ may be sampled---decreases when the \emph{intrinsic} dimension of $M$ increases \cite{balakrishnan2012minimax,divol2021minimax}.} $d$ (the cost of computing a distance), a sharp advantage over the \v Cech filtration, $(ii)$ the Rips filtration can be faithfully adapted to general metric spaces (e.g.,~a Riemannian manifold equipped with its geodesic distance). 
    This flexibility of the VR filtration enables the design of many variants: considering the geodesic distance on a triangulated mesh \cite{oudot2010geodesic}, e.g.,~weighting the radius by some local information based on density \cite{anai2020dtm,nishikawa2024adaptive}, etc. 
\end{remark}

\begin{example}[Height filtration] \label{example:height_filtration}
    Let $K$ be a simplicial complex embedded in $\bR^d$ and let $x_1,\dots,x_n$ denote its vertices. Pick a parameter $t \in \bR$ and a direction $\theta \in \bS^{d-1}$ (the unit sphere in $\bR^d$). 
    Then, define
    \begin{equation}
        H[K,\theta]_t \coloneqq \left\{ \{x_{i_1},\dots,x_{i_p}\} \in K \relmid \max_{1\leq j\leq p} \braket{x_{i_{j}}, \theta} \leq t \right\}.
    \end{equation}
    The family $(H[K,\theta]_t)_{t \in \bR}$ defines the \emph{height filtration} of $K$ in the direction $\theta$. 
    Looking at all possible directions $\theta \in \bS^{d-1}$ is analogous to considering a Radon transform of $K$, and it can be proved that this transform is injective when $d=2,3$, i.e.,~$ (H[K,\theta]_t)_t = (H[K',\theta]_t)_t,\ \forall \theta \in \bS^{d-1}  \Leftrightarrow K = K'$ \cite{turner2014}.
\end{example}

\begin{example}[Filtrations on graphs]\label{ex:graph_filt}
    As graphs are particular instances of simplicial complexes, many specific filtrations have been designed to deal with them. 
    Values on vertices and/or edges can typically be obtained as output of diffusion processes (heat equation, wave equation, etc.) or geodesic distances; see for instance \cite{archambault2007topolayout,li2012effective,ferrara2012topological,carriere2020perslay}.
    These filtrations also extend naturally to triangulations (as they can be seen as ``higher-dimensional'' graphs with triangles in addition to vertices and edges) in order to deal with, e.g., 3D meshes~\cite{carriere2015stable, poulenard2018topological, skrabaPersistenceBasedSegmentationDeformable2010}.
\end{example}

\begin{example}[Filtrations on images]\label{ex:img_filt}
	Images are structured objects and can be easily turned into simplicial complexes by simply 
	turning every pixel into two right triangles glued along their diagonals, and attaching these pairs of triangles together using pixel connectivity. Additionally, an even simpler way
	is to use specific complexes, called \emph{cubical complexes}, that are specifically tailored for 
	array data. The pixel values (such as, e.g., gray scale, one of the RGB channels, or segmentation masks) are then natural functions that can be used to filter images, and that are particularly efficient in generative models~\cite{wangTopoGANTopologyAwareGenerative2020, guptaTopoDiffusionNetTopologyAwareDiffusion2025, Barbarani2024}. This  approach extends straightforwardly to higher-dimensional images, such as scans with voxels~\cite{singhTopologicalDataAnalysis2023}.
\end{example}

\subsection{Persistent homology and persistence diagrams}

Let $\scK \coloneqq (K_t)_{t\in\bR}$ be a filtration of a finite simplicial complex $K$. The main idea of persistent homology is to track the appearance and disappearance of topological features in the filtrations, seen as elements in the corresponding homology groups. 
For $s \leq t$, we denote the inclusion between $K_s$ and $K_t$ by $\iota_{t,s}\colon K_s \hookrightarrow K_t$.
The map $\iota_{t,s}$                    induces a linear map $(\iota_{t,s})_\ast\colon H_p(K_s) \to H_p(K_t)$ between the corresponding homology groups (a property referred to as functoriality).
Persistent homology (PH) is then defined as the family of these homology groups and their connecting linear maps \cite{edelsbrunner2010computational,oudot2015persistence}.\footnote{Note that the term \emph{persistence module} is often encountered in the TDA literature as well to denote a family of vector spaces connected by linear maps (but not necessarily indexed over $\bR$).}

\begin{definition}[Persistent homology]
    Let $\scK=(K_t)_{t\in\bR}$ be a filtration of a simplicial complex $K$.
    The \emph{$p$-th persistent homology} of $\scK$ is defined as the pair $((H_p(K_t))_{t\in\bR}, ({\iota_{s, t}}_\ast)_{t\leq s})$ of the family of homology groups $(H_p(K_t))_{t\in\bR}$ together with the family of linear maps $((\iota_{s, t})_\ast)_{t\leq s}$ induced by inclusions.
\end{definition}

Intuitively, the linear map $(\iota_{s, t})_\ast$ encodes the correspondence between the bases of $H_p(K_s)$ and $H_p(K_t)$: if a non-trivial cycle in these bases is present in both $H_p(K_s)$ and $H_p(K_t)$ (at the $i$-th and $j$-th positions), then the entry at location $(j,i)$ in the matrix representation of $(\iota_{s, t})_\ast$ is 1. 
Similarly, if the $i$-th non-trivial cycle in the basis of $H_p(K_s)$ has become a boundary in $K_t$, then the $i$-th column of $(\iota_{s, t})_\ast$ is zero, and if the $j$-th non trivial cycle in the basis of $H_p(K_t)$ has not yet appeared in $K_s$, then the $j$-th row of $(\iota_{s, t})_\ast$ is zero.

The information captured by PH can be summarized by finding appropriate bases whose elements correspond exactly to the appearing or disappearing cycles in the filtration. 
The following theorem is a consequence of the explicit construction we present in \cref{subsec:computation} and can be found for instance in \cite[Ch.~2]{oudot2015persistence}. 

\begin{theorem}
\label{th:pd}
    Let $\scK=(K_t)_{t\in\bR}$ be a filtration of a finite simplicial complex $K$.
    For any $t\in\bR$, we let $\partial_{p,t} \coloneqq \partial_p|_{C_p(K_t)}$, i.e.,~the restriction of
    $\partial_p$ to the vector subspace $C_p(K_t)\subseteq C_p(K)$. 
    Moreover, for each $z\in \Ker \partial_{p,t}$, let $[z]_t$ denote the corresponding homology class in $H_p(K_t)$.
    Then, there exists $m\in\bN$, a sequence $\{z_i\}_{i=1}^m$ on $C_p(K)$ and a multiset $\{(b_i, d_i)\}_{i=1}^m$ of pairs on $\bR\cup\{+\infty\}$ that satisfy the following three conditions,
    for any $t\in\bR$:
    \begin{enumerate}
        \item[(i)] $b_i \leq t  \Longleftrightarrow z_i \in \Ker \partial_{p,t}$,
        \item[(ii)] $t\geq d_i \Longleftrightarrow z_i \in \Image \partial_{p+1,t}$,
        \item[(iii)] let $I_t \coloneqq \{i\in \{1,\dots, m\} \mid t\in [b_i, d_i)\}$, then the set $\{[z_i]_t\mid i\in I_t\}$ is a basis of $H_p(K_t)$.
    \end{enumerate}
    Additionally, the multiset $\{(b_i, d_i)\}_{i=1}^m$ that satisfies the conditions above is unique (up to permutations).
\end{theorem}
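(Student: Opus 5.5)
Existence will come from the matrix reduction algorithm, and uniqueness from rank counting.

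\emph{Setup.} Since $K$ is finite, the filtration takes only finitely many distinct values. Order the simplices of $K$ as $\sigma_1,\dots,\sigma_N$ so that $f(\sigma_j)$ is non-decreasing, breaking ties by dimension. Every prefix $\{\sigma_1,\dots,\sigma_j\}$ is then a subcomplex, and each $K_t$ is such a prefix. Let $D$ be the $N\times N$ boundary matrix over $\bF_2$ in this basis.

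\emph{Existence.} Run the standard column reduction: adding earlier columns to later ones yields $R=DV$, with $V$ upper triangular and invertible, in which the lowest nonzero entries of nonzero columns lie in distinct rows.

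\begin{itemize}
\item If column $j$ of $R$ is nonzero with lowest entry $i=\mathrm{low}(j)$, and $\sigma_j$ is a $(p+1)$-simplex, set $z\coloneqq R_{\cdot j}$, $b\coloneqq f(\sigma_i)$ and $d\coloneqq f(\sigma_j)$.
\item If $\sigma_i$ is a $p$-simplex with $R_{\cdot i}=0$ and $i$ is not a low of any column, set $z\coloneqq V_{\cdot i}$, $b\coloneqq f(\sigma_i)$ and $d\coloneqq+\infty$.
\end{itemize}

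Condition (i) holds for both kinds of pairs. First, $\partial z=0$: in the first case because $R_{\cdot j}=DV_{\cdot j}$ and $\partial^2=0$, in the second case by construction. Second, the largest simplex of $z$ is $\sigma_i$ by triangularity, so $z\in C_p(K_t)$ if and only if $f(\sigma_i)\le t$.

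For condition (ii), $R_{\cdot j}=\partial V_{\cdot j}$ with $V_{\cdot j}$ supported on $\sigma_{\le j}$. This gives the direction $t\ge d\Rightarrow z\in\Image\partial_{p+1,t}$. The converse, and condition (iii), reduce to one basis claim: for each prefix, the columns $R_{\cdot j}$ with $j$ in the prefix, together with the vectors $z$ whose birth index lies in the prefix, form a basis of $\Ker\partial_p$ restricted to that prefix, and the $R_{\cdot j}$ alone form a basis of $\Image\partial_{p+1}$ restricted to that prefix. This follows because the lows are distinct, which makes these vectors linearly independent. The counts match, since the number of zero columns of $R$ in dimension $p$ equals $\dim\Ker$ and the number of nonzero columns in dimension $p+1$ equals the rank. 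Quotienting then gives (iii). The converse of (ii) follows from (iii): if $t<d$, the class $[z]_t$ is a basis element, hence nonzero.

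\emph{Uniqueness.} Conditions (i)–(iii) determine all persistent Betti numbers. For $s\le t$, the image of $(\iota_{t,s})_\ast$ is spanned by $\{[z_i]_t : i\in I_s\}$. Among these, the classes with $t\ge d_i$ vanish by (ii), and those with $t<d_i$ belong to $I_t$ and are independent by (iii). Hence
\begin{equation}
\beta_p^{s,t}\coloneqq \operatorname{rank}(\iota_{t,s})_\ast = \#\{i : b_i\le s,\ t<d_i\}.
\end{equation}
For a point $(b,d)$ with $b<d$, inclusion–exclusion on these counts over small rectangles around the finitely many critical values recovers its multiplicity:
\begin{equation}
\beta^{b,d-\epsilon}-\beta^{b-\epsilon,d-\epsilon}-\beta^{b,d}+\beta^{b-\epsilon,d}.
\end{equation}
For points with $d=+\infty$, the multiplicity is read off from $\beta^{b,T}$ with $T$ larger than all filtration values. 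One caveat: pairs with $b_i=d_i$ are invisible to (i)–(iii), so uniqueness has to be understood for pairs with $b_i<d_i$, discarding diagonal pairs.

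\emph{Main obstacle.} The hardest step is the basis claim in the existence part, namely showing that the reduced columns give bases of the cycle and boundary spaces simultaneously for every prefix. I would prove it by induction on the prefix length, using that $V$ is triangular and invertible and that the lows are distinct.
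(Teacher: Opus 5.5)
Your proposal is correct and follows the route the paper indicates. Existence comes from the persistence-pairing construction of \cref{subsec:computation}, realized through the same reduction $R=DV$ the paper recalls in \cref{subsec:BSGS}. For uniqueness, which the paper only delegates to Oudot's book, you use the intrinsic ranks $\beta_p^{s,t}$. Your caveat that pairs with $b_i=d_i$ are invisible to (i)--(iii) is a genuine imprecision of the statement, and your rank count rightly reads (iii) as a basis indexed by $I_t$ rather than as a mere set.

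One step is left implicit: your count matching the cycles born in a prefix with its zero $p$-columns needs the fact that a $p$-simplex $\sigma_i$ with $i=\mathrm{low}(j)$ has $R_{\cdot i}=0$. This holds because otherwise $V_{\cdot i}+R_{\cdot j}$ is a chain supported on $\sigma_{<i}$ with boundary $R_{\cdot i}$. That would put $R_{\cdot i}$ in the span of earlier nonzero columns, which is impossible since their lows all differ from $\mathrm{low}(i)$.
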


The multiset $\{(b_i, d_i)\}_{i=1}^m$ defined in \cref{th:pd} is called \emph{persistence diagram} (PD),
and is usually represented as a multiset of points in the extended plane by using the $b_i$'s and $d_i$'s as coordinates. These coordinates are usually called the \emph{birth times} and \emph{death times} of the corresponding cycles, or topological features.
For any filtration $f \colon K\to\bR$, we let $\Dgm(f)$ denote the corresponding PD.

\begin{remark}\label{rem:discard_essential_part}
    A point $(b,d)$ in a persistence diagram is said to be \emph{essential} if its second coordinate is $d = +\infty$. Intuitively, the essential points in a persistence diagram 
    correspond to the topological features of the final simplicial complex $K$ at the end of the filtration.
    For instance, when using the Vietoris--Rips filtration (\cref{def:vr}, see also \cref{fig:illu_cech_PD}), the final simplicial complex
    is the complete simplicial complex on $n$ points (i.e., the simplicial complex such that any subset of the vertices is a simplex), which has one connected component and no other higher-dimensional cycle. Hence, there is always exactly one essential point whose coordinates are $(0,+\infty)$ in homology dimension $p=0$: this point accounts for the single connected component created at birth time $b=0$ and persisting ``forever''. 

    In the context of topological optimization, these essential points are barely used (i.e.,~the loss functions that are typically used in that context do not depend on such points), and one rather focuses on \emph{ordinary} points in PDs, i.e.,~points $(b,d)$ with $b < d < +\infty$. 
    On the other hand, taking the essential part into account is important in some statements regarding the construction of persistence diagrams, such as \cref{prop:permpd}. 
    We thus decide to make an explicit distinction between these two types of objects to stress whenever one should consider PDs in their greatest generality (i.e.,~including essential parts) or may simplify by restricting to the ordinary points in PDs.
\end{remark}

This yields the following definition.
\begin{definition}\label{def:PD}
    A persistence diagram (PD) is a finite multiset of points supported on the extended closed half-plane $\overline{\groundspace} \coloneqq \{ (b,d) \in \overline{\bR}^2\,|\,\ b \leq d\}$.
    The set of persistence diagram is denoted by $\pdspace$. 
    We let $\pdspace$ denote the set of all persistence diagrams. 

    An \emph{ordinary} persistence diagram is a persistence diagrams with no essential part, i.e.,~a finite multiset of points supported on the open half-plane $\groundspace \coloneqq \{ (b,d) \in \bR^2\,|\,\ b < d\}$. 
    We let $\pdspace^o$ denote the set of all ordinary persistence diagrams.
    The \emph{multiplicity} of a point $x$ belonging to the support of a persistence diagram $\spt(\alpha),\ \alpha \in \pdspace$, is denoted by $m(x)$. 

    Eventually, we let $\pdspace_m$ (resp.~$\pdspace_m^o$) denote the subset of $\pdspace$ (resp.~$\pdspace^o$) made of persistence diagram with at most $m \in \bN$ points. 
\end{definition}

Moreover, ordinary persistence diagrams can be compared using \emph{partial matching} distances. 

\begin{definition}
\label{def:dist_wasserstein}
 Let $\alpha,\beta \in \pdspace^o$ be two ordinary persistence diagrams and $q \in [1,+\infty)$. 
The \emph{$q$-th diagram distance} between $\alpha$ and $\beta$ is defined as
\begin{equation}\label{eq:distance_between_PD}
    \FG_q(\alpha,\beta) = \inf_{\pi \in \Gamma(\alpha,\beta)} \left( \sum_{x \in \alpha \cup \thediag} \|x - \pi(x)\|^q \right)^{\frac{1}{q}},
\end{equation}
where $\| \cdot \|$ denote the $q'$-norm in $\bR^2$ (typically $q' = 2$ in applications), $\Gamma(\alpha,\beta)$ is the set of partial matching between $\alpha$ and $\beta$, i.e.,~bijections between $\alpha \cup \thediag$ and $\beta \cup \thediag$, and $\thediag \coloneqq \{(t,t),\ t \in \bR\}$ is the boundary of $\groundspace$, typically referred to as ``the diagonal''. 
When $q = +\infty$, the sum becomes a supremum and the distance $\FG_\infty$ is referred to as the \emph{bottleneck distance} between persistence diagrams.
See also \cref{fig:matching_distance}.
\end{definition}

\begin{remark}
    When $q < \infty$, the distance $\FG_q$ is often referred to as the {\em Wasserstein distance} between persistence diagrams in the TDA literature \cite[Ch.~VIII.2]{edelsbrunner2010computational}, due to its similarity with the Wasserstein distance between probability measures used in optimal transport (see for instance \cite[Ch.~5]{santambrogio2015optimal}). 
    Interestingly, the distance $\FG_q$ was actually introduced initially by Figalli and Gigli \cite{figalli2010new} in a context unrelated to TDA.  
    The connection with the metrics used in topological data analysis has been made later in \cite{divol2019understanding}. 
    We use the somewhat unusual notation $\FG$ to stress the distinction between these distances and their counterpart in optimal transport literature.
\end{remark}

\begin{figure}[ht]
    \centering
    \begin{tikzpicture}[baseline=(current bounding box.center)]
        \fill[lightgray] (0,0) -- (4.5,4.5) -- (4.5,0) --cycle;
        \draw[->] (0,0)--(4.5,0) node [label=right: {$b$}] {};
        \draw[->] (0,0) -- (0,4.5) node [label=above: {$d$}] {};
        \draw (0,0)--(4.5,4.5);
    
        \draw[gray] (1.7,3.2)--(2,3);
        \draw[gray] (0.3,1.5)--(0.5,2);
        \draw[gray] (0.5,3.5)--(1,4);
        \draw[gray] (3,4.4)--(3.5,4.4);
        \draw (2,2.3)--(2.15,2.15);
        
        \draw plot [only marks, mark=*] coordinates {(1.7,3.2) (0.3,1.5) (0.5,3.5) (3.5,4.4)};
        \draw[blue] plot [only marks, mark=square*] coordinates {(2,3) (0.5,2) (1,4) (2,2.3) (3,4.4)};
    \end{tikzpicture}
    \caption{Partial matching distance between PDs.}
    \label{fig:matching_distance}
\end{figure}
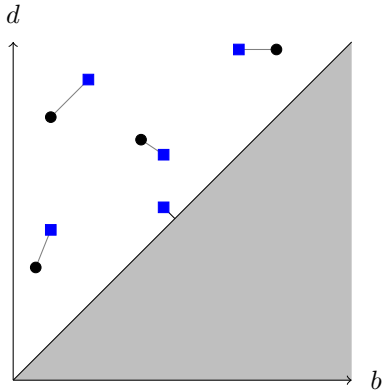

\begin{remark}\label{rem:PD_space_structure} 
    The metric space $(\pdspace^o_m, \FG_2)$ enjoys several properties \cite{mileyko2011probability,turner2014frechet,chowdhury2019geodesics,divol2019understanding}: it is a Polish space, non-negatively curved, with geodesics known in closed-form (provided that one can compute the optimal partial matching $\pi$ between two diagrams in \eqref{eq:distance_between_PD}). 
    It implies that several statistical tools, such as probability distributions supported on $\pdspace^o_m$ and their (Fr\'echet) means are well-defined \cite{turner2014frechet,turner2013means,munch2015probabilistic,cao2022geometric}, enabling the use of some standard algorithms that only requires a metric structure, such as $k$-means clustering \cite{marchese2017k,davies2020fuzzy,lacombe2018large,cao2024k}. 
    
    However, this metric space does not exhibit a linear (Hilbert) structure and a fruitful line of works suggests that it cannot be embedded in such spaces (in the greatest generality) without distorting arbitrarily the metric if the number of points in the persistence diagram goes to infinity or to $0$ \cite{bubenik2020embeddings,mitra2024geometric,carriere2019metric,mitra2021space,pritchard2024coarse}. 
    This means that most machine learning methods and models, which are designed for input data that belongs to a (often finite-dimensional) vector space, 
    cannot be used faithfully when working with persistence diagrams. 
    
    Two natural workarounds consist of $(i)$ either explicitly pushing persistence diagrams into a linear space nonetheless (hence necessarily loosing some geometric information as diagram distances are arbitrarily distorted) by manually designing a \emph{vectorization} or \emph{representation} of persistence diagrams, that is, a map $\Phi \colon \pdspace\rightarrow\mathcal H$, where $\mathcal H$ is a Hilbert space (see, e.g., \cite{bubenik2015statistical,adams2017persistence,chazal2014stochastic,umeda2017time,hofer2019learning,carriere2020perslay}), or $(ii)$ designing such a map implicitly using \emph{kernel methods} (see, e.g.,~\cite{Carriere2017b,le2018persistence,kusano2018kernel,vishwanath2020robust}).
    
    Moreover, the absence of a simple linear structure in $\pdspace$ makes challenging to even properly define the derivative of maps taking value from (or valued in) $\pdspace$, which is the problem at the core of topological optimization, that we focus on in this survey.
\end{remark}

\begin{remark}
For the sake of simplicity, and because it matches standard practice, we restrict in this work to persistence diagrams with finitely many points. 
Most theoretical properties regarding $\pdspace$ and $\pdspace^o$ presented in this survey hold when defining persistence diagrams more generally as \emph{locally} finite multisets $\alpha$ supported on $\groundspace$ satisfying the integrability constraint $\perstot_q(\alpha) \coloneqq \FG_q(\alpha,\emptydgm) < \infty$. 
Here, $\emptydgm$ denotes the empty persistence diagram. 
This extension can be understood as the completion of $\pdspace^o$ for the metric $\FG_q$ when $q < +\infty$ (the case $q=+\infty$ is slightly more intricate and is discussed for instance in \cite{blumberg2014robust,bubenik2018topological,pereaApproximatingContinuousFunctions2023} and \cite[\S 3.3]{divol2019understanding}).
\end{remark}

The main motivation to use the partial distance $\FG_q,\ 1 \leq q \leq +\infty$ lies in the following ``stability theorems'', guaranteeing that close filtrations must induce close PDs. 

\begin{theorem}[\cite{Cohen-Steiner2007, chazal2012structure}]
\label{th:stability}
Let $K$ be a finite simplicial complex. 
Then the map $\Filt_K \to \pdspace; f \mapsto \Dgm(f)$ is $1$-Lipschitz continuous in the following sense:
\begin{equation}
    \FG_\infty(\Dgm(f), \Dgm(g))\leq \|f-g\|_\infty.
\end{equation}

Similar (yet more intricate) stability results exist for $q < \infty$ \cite{cohen2010lipschitz,skraba2020wasserstein}. 
In particular, if $X$ and $X'$ are two point clouds with the same cardinality $n$, letting $\DgmRips(\cdot)$ denote the persistence diagram built on top of a point cloud using the Vietoris--Rips filtration, one can prove that 
\begin{equation}
    \FG_q(\DgmRips(X)),\DgmRips(X')) \leq C(q,n) \inf_{\phi} \left( \sum_{x \in X} \|x - \phi(x)\|^q \right)^\frac{1}{q},
\end{equation}
where $\phi$ is a bijection between $X$ and $X'$, and $C(q,n)$ is a constant depending on $q$ and $n$.
\end{theorem}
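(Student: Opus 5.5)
The bottleneck inequality will be proved by the standard algebraic route: interleavings of persistence modules, followed by the isometry theorem. The $q<\infty$ statement for Vietoris--Rips diagrams will then be reduced to a stability result for filtrations in the $\ell^q$-type norm. Throughout, $\Dgm(f)$ is the diagram of \cref{th:pd}, with diagonal points allowed in matchings, so that $\FG_\infty$ is defined for general diagrams exactly as in \cref{def:dist_wasserstein}.

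First I will set $\varepsilon \coloneqq \|f-g\|_\infty$. For every $t$ and every simplex $\sigma$, $f(\sigma)\le t$ implies $g(\sigma)\le t+\varepsilon$. Hence the sublevel complexes satisfy $K^f_t\subseteq K^g_{t+\varepsilon}\subseteq K^f_{t+2\varepsilon}$, and symmetrically with $f$ and $g$ exchanged. Applying homology (functoriality) to these inclusions gives linear maps $\phi_t\colon H_p(K^f_t)\to H_p(K^g_{t+\varepsilon})$ and $\psi_t\colon H_p(K^g_t)\to H_p(K^f_{t+\varepsilon})$. These commute with the structure maps $(\iota_{s,t})_\ast$, and the composites $\psi_{t+\varepsilon}\circ\phi_t$ and $\phi_{t+\varepsilon}\circ\psi_t$ equal the structure maps over a shift of $2\varepsilon$. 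All of these identities hold because every map involved is induced by an inclusion of subcomplexes. In other words, the two persistent homologies are $\varepsilon$-interleaved.

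The second step, and the main obstacle, is the algebraic stability (isometry) theorem: $\varepsilon$-interleaved modules have diagrams at bottleneck distance at most $\varepsilon$. Since $K$ is finite, the modules are pointwise finite-dimensional, change only at finitely many values, and decompose into interval modules $\bF_2[b_i,d_i)$ according to \cref{th:pd}. I would follow the induced-matching argument of Bauer--Lesnick. Factor $\phi$ as a surjection onto its image followed by an injection. For a surjection, match bars by sorting them by death time; for an injection, sort them by birth time. Then check that every matched pair of bars has endpoints within $\varepsilon$, and that every unmatched bar has length at most $2\varepsilon$. Such a bar can then be matched to the diagonal at $\ell^\infty$-cost $\varepsilon$. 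This bookkeeping is the delicate part. An alternative is the original Cohen-Steiner--Edelsbrunner--Harer route: interpolate linearly between $f$ and $g$, use the box lemma on the finitely many intervals where the simplex ordering is constant, and sum the local displacements. That route requires handling ties, but it is elementary for finite $K$.

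For the Rips statement, fix a bijection $\phi$ and compare $f_{\rm VR}(X)$ with $f_{\rm VR}(X')$ simplexwise. Each simplex value is half a maximal pairwise distance, so by the triangle inequality it moves by at most $\max_{i,j}\frac12(\|x_i-\phi(x_i)\|+\|x_j-\phi(x_j)\|)$. For $q=\infty$ this bound already suffices. For finite $q$, I would invoke the $q$-stability of \cite{skraba2020wasserstein}: $\FG_q(\Dgm(f),\Dgm(g))\le\|f-g\|_q$, with the $\ell^q$ norm taken over simplices. It then remains to bound the number of simplices whose value is controlled by a given vertex displacement. Here I would restrict to the dimensions $p\le p_{\max}$ actually considered. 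This count is a polynomial in $n$, and it is absorbed into $C(q,n)$. Finally, taking the infimum over $\phi$ concludes.
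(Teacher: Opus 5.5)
Your proposal is correct and takes essentially the paper's route, which gives no argument of its own and instead defers to the cited works. For the bottleneck bound this route is an $\varepsilon$-interleaving followed by algebraic stability; proving the isometry step with Bauer--Lesnick induced matchings, rather than the interpolation/box-lemma arguments of \cite{Cohen-Steiner2007, chazal2012structure}, is only an inessential variation. For $q<\infty$, your use of the cellular bound $\FG_q(\Dgm(f),\Dgm(g))\le\|f-g\|_q$ of \cite{skraba2020wasserstein} is likewise sound, combined with $|f(\sigma)-g(\sigma)|\le\max_{i\in\sigma}\|x_i-\phi(x_i)\|$, which gives $\|f-g\|_q^q\le 2^{n-1}\sum_i\|x_i-\phi(x_i)\|^q$.

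One caveat concerns the statement, not your argument. The constant $1$ in the bottleneck bound presupposes the $\ell^\infty$ ground norm on $\bR^2$, which you implicitly use (``$\ell^\infty$-cost $\varepsilon$''). With the paper's typical choice $q'=2$, a uniform shift $g=f+\varepsilon$ moves every ordinary diagram point by $\sqrt{2}\,\varepsilon$, so the correct constant there is $\sqrt{2}$.
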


From an optimization perspective, these stability results are of importance: they often allow one to prove that composite maps of the form $\cL = L \circ \Pers \colon \bR^{d_1} \to \bR^{d_2}$ (with $\Pers \colon \bR^{d_1} \to \pdspace$ and $L \colon \pdspace \to \bR^{d_2}$) are Lipschitz, and thus must be differentiable almost everywhere thanks to Rademacher's theorem, without the need to define a global differential structure in $\pdspace$. 
This observation will be at the core of \Cref{sec:differentiability}.
Before elaborating on this idea, we present how one can compute persistent homology in practice, as the corresponding algorithm is also crucial to understand how composite maps going through $\pdspace$ can be differentiated. 

\subsection{Persistence computation}\label{subsec:computation}

Now that we have seen how PDs are defined, we explain in this section how they can be computed practically. 
Given a filtration $f \colon K \to \bR$ on a simplicial complex $K$, the computation of PDs splits into two parts: 
the first is purely combinatorial, and the second adapts the results to the provided filtration values.
Moreover, they both rely on the (filtration-dependent) pre-order\footnote{i.e.,~one can have $\sigma \neq \tau$ while $\sigma \preceq_f \tau \preceq_f \sigma $.} on the simplices of $K$ defined by $\tau \preceq_f \sigma$ if and only if $f(\tau) \leq f(\sigma)$. 
This pre-order can be refined into a total order by breaking ties in some fairly arbitrary way, as long as it is consistent with face relations, i.e., as long as $\tau\subseteq\sigma \Rightarrow\tau \preceq_f \sigma$.

\paragraph{First part: combinatorial part (persistence pairing).} 
The first part of PD computations, called the \emph{persistence pairing algorithm}, is about finding the simplices $\sigma_i$'s that
give rise to the appearance of the cycles $z_i$'s 
(which, in turn, create new topological features in the filtration) in the total order induced by $f$,
as well as the simplices $\sigma'_i$'s that give rise to the boundaries $z'_i$'s killing these cycles (i.e.,~such that $z_i=\partial_p(z'_i)$) according to \cref{th:pd}.
The pairs $(\sigma_i,\sigma'_i)$ are called \emph{persistence pairs}.
 
Note that while different total orders (associated to the same pre-order) may yield different persistence pairs, they all translate to the same persistence diagram in the second part (see below).
The basic algorithm to compute persistence pairs iterates over the ordered set of simplices $\sigma_1 \preceq_f \dots \preceq_f \sigma_{|K|}$ according to \cref{alg:persistent-pairs} below---see \cite[Section~11.5.2]{boissonnat2018geometric} for a detailed description of the algorithm. 
In a nutshell, this algorithm relies on the following observation: when inserting a simplex $\sigma$ of dimension $p$ in the filtration, this either creates a new generator in dimension $p$ (increase the dimension of $H_p$), or kills one of dimension $p-1$ (decrease the dimension of $H_{p-1}$). 

\begin{algorithm}[ht]
  \caption{$\mathtt{PersistencePairs}(f)$}
  \label{alg:persistent-pairs}
  \begin{algorithmic}
  	\STATE{\bf Input:} Filtration $f\in\Filt_K$
    \STATE Order the simplices of $K$ so that $\sigma_1 \preceq_f \dots \preceq_f \sigma_{|K|}$;
  	\STATE $K_0 \leftarrow \varnothing$;
    \STATE $\mathrm{Pairs}_0 = \mathrm{Pairs}_1 = \dots = \mathrm{Pairs}_{d-1} = \varnothing$;
    \FOR{$j=1$ to $|K|$}
    	\STATE $p \leftarrow \dim \sigma_j$;
    	\STATE $K_j \leftarrow K_{j-1} \cup \{\sigma_j\}$;
    	\IF{$\sigma_j$ does not create a new homology class in $H_p(K_j)$}
    	    \STATE A homology class in $H_{p-1}(K_{l(j)})$ which was created by $\sigma_{l(j)}$ (for some $l(j) < j$) becomes homologous to $0$ as the boundary of a chain created by $\sigma_j$; 
    		\STATE $\mathrm{Pairs}_{p-1} \gets \mathrm{Pairs}_{p-1} \cup \{ (\sigma_{l(j)}, \sigma_j) \}$;
    	\ENDIF
    \ENDFOR
    \STATE{\bf Output:} Persistence pairs in each dimension $\mathrm{Pairs}_0, \mathrm{Pairs}_1, \dots, \mathrm{Pairs}_{d-1}$
  \end{algorithmic}
\end{algorithm}

Note that for each dimension $p$, some $p$-dimensional simplices may remain unpaired at the end of the algorithm; they correspond to the 
pairs $(b_i,+\infty)$, i.e.,~correspond to \emph{essential points} in the PD (as opposed to the \emph{ordinary points} with finite coordinates, see \cref{def:PD}).

\paragraph{Second part: associated filtration values.}
The persistence diagram $\Dgm(f)$ of the filtration $f$ is then obtained by associating to each persistence pair $(\sigma_{l(j)}, \sigma_j)$ the point $a_j$ in the extended plane
whose coordinates are $a_j=(f(\sigma_{l(j)}), f(\sigma_j))$.
Moreover, each unpaired simplex $\sigma_j$ induces an essential point as per $a_j=(f(\sigma_j),+\infty)$.
We let $P^f$ and $U^f$ denote the sets of persistence pairs and unpaired simplices respectively, so that the final output of the PD computation can be written as:
\begin{equation}
    \Dgm(f) = \{ (f(\sigma),f(\sigma')) \}_{(\sigma,\sigma') \in P^f} \cup \{ (f(\tau),\infty) \}_{\tau \in U^f}. 
\end{equation}

In practice, the above construction is usually done dimension by dimension (in order to get a single PD for every homological dimension) by restricting the algorithm
to the simplices of $K$ of dimension $p$ and $p+1$. 
The resulting $p$-th dimensional PD is denoted by $\Dgm_p(f)$. The pairs $\{(P_p^f,U_p^f)\}_p$ are called barcode templates in \cite[Def.~4.3]{LOT2022}.

Finally, it is also useful to characterize those filtrations that induce the same persistence pairs and unpaired simplices.

\begin{definition}[{\cite[Def.~4.2]{LOT2022}}]\label{def:order_eq}
    Two filtrations $f, g \colon K \to \bR$ are said to be {\em ordering equivalent}, written as $f \sim g$, if they induce the same pre-order on the simplices of $K$.
    Note that this relation is an equivalence relation on $\Filt_K$. 
\end{definition}

Since the persistence pairs and the unpaired simplices only depend on the pre-order induced by a filtration (after arbitrarily breaking the ties), we have the following result.

\begin{proposition}\label{proposition:order_persistence_pairs}
    If $f$ and $g$ are ordering equivalent filtrations, then $P_p^f=P_p^g$ and $U_p^f=U_p^g$.
\end{proposition}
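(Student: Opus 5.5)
The argument is that \cref{alg:persistent-pairs} never uses the numerical values of $f$. It only uses the total order $\sigma_1 \preceq_f \dots \preceq_f \sigma_{|K|}$ that refines the pre-order induced by $f$. So the plan is to show that $f$ and $g$ can be run with literally the same ordered list of simplices, and that the output then depends only on that list.

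\textbf{Step 1 (same order).} Since $f \sim g$, we have $f(\tau) \le f(\sigma) \iff g(\tau) \le g(\sigma)$ for all simplices $\tau, \sigma \in K$, so the pre-orders $\preceq_f$ and $\preceq_g$ coincide as relations on $K$. Consequently the set of admissible total orders is the same for $f$ and $g$: these are the refinements of this common pre-order that are consistent with face relations. We fix one such order $\sigma_1, \dots, \sigma_{|K|}$ and use the same tie-breaking rule for both filtrations. This is exactly the convention in the text, which says the pairs are computed "after arbitrarily breaking the ties".

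\textbf{Step 2 (output depends only on the order).} We check that every step of \cref{alg:persistent-pairs} is a function of the ordered sequence alone. The complexes $K_j = \{\sigma_1, \dots, \sigma_j\}$ are determined by the sequence. The test "does $\sigma_j$ create a new class in $H_p(K_j)$" depends only on $K_{j-1}$ and $K_j$. The index $l(j)$ of the killed class is found by the standard column reduction of the boundary matrix written in this order. That reduction is a deterministic procedure on a matrix whose rows and columns are indexed by the sequence, and it involves no filtration values. Running the algorithm restricted to dimensions $p$ and $p+1$ is also determined by the sequence. Hence the outputs $P_p$ and $U_p$ are identical for $f$ and $g$, which gives $P_p^f = P_p^g$ and $U_p^f = U_p^g$.

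\textbf{Main obstacle.} The only delicate point is well-definedness. If ties were broken differently for $f$ and for $g$, the resulting pairs could differ, even though the diagrams would not. The statement must therefore be read with a tie-breaking rule that depends only on the pre-order, for example a fixed lexicographic order on $K$ used to break ties. Step 1 guarantees that such a rule yields the same total order for both filtrations. A second point needs care: step "find $l(j)$" must be specified deterministically, via the reduction algorithm, rather than as an abstract existence claim. Once the reduction is fixed, uniqueness of the pairing for a given total order is standard, since the pairing lemma says the reduced matrix's lowest ones are independent of the reduction choices. With that in place, the argument is immediate.
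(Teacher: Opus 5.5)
Your proposal is correct and is the same argument the paper gives in one line: the pairing algorithm only sees the tie-broken total order, and this order is common to $f$ and $g$ once the same pre-order-dependent tie-breaking rule is used. Your caveat about using the same tie-break for both is exactly the tacit assumption behind the paper's phrase ``after arbitrarily breaking the ties''. One small fix to your example rule: a plain lexicographic order on sorted vertex tuples need not respect faces (e.g.\ $(1,2)$ precedes $(2)$), so break ties by dimension first and then lexicographically.
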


\section{Differential framework for persistence diagrams}\label{sec:differentiability}

In applications, filtrations over a (finite) simplicial complex $K$ are often \emph{parametrized} by some $\theta$ belonging to a submanifold $M \subseteq \bR^d$ through a map $M \ni \theta \mapsto f_\theta \in \Filt_K$. 
For instance, the height filtration (\cref{example:height_filtration}) over a point cloud $X = \{ x_1,\dots,x_n \} \subset \bR^d$ is typically parametrized by the direction $\theta \in M = S^{d-1}$, the \v Cech and Vietoris--Rips filtrations (\cref{example:Cech,def:vr}) can be considered as being parametrized by the point cloud $X \in M = \bR^{n \times d}$ itself, etc. 
It is natural to wonder whether some parameters are preferable for a given purpose, which is quantified by a task-dependent \emph{loss function} $L \colon \pdspace \to \bR$ that evaluates the quality of a diagram $\Dgm(f_\theta) \in \pdspace$ for a given parameter $\theta \in M$. 
Therefore, one seeks to optimize (say, minimize) the composite map $\cL \colon M \ni \theta \mapsto L(\Dgm(f_\theta)) \in \bR$, a task that we refer to as \emph{(persistence-based) topological optimization}. 
All the methods we present in this survey rely on minimizing $\cL$ using gradient-based methods. 
See also~\Cref{fig:exdiff}.

\begin{remark}\label{rem:rademacher}
Provided that $\theta \mapsto f_\theta$ and $\pdspace \ni \alpha \mapsto L(\alpha)$ are both locally Lipschitz---which is the case for all filtrations and loss functions considered in this survey---, \cref{th:stability} along with Rademacher's theorem ensure that $\cL$ is locally Lipschitz hence differentiable almost everywhere, meaning that the gradient $\nabla_\theta \cL$ exists generically. 
However, the (lack of linear) structure of $\pdspace$ (see \cref{rem:PD_space_structure}) a priori prevents from a straightforward computation of $\nabla_\theta \cL$ using the chain rule as differentials of maps from $M$ to $\pdspace$ and from $\pdspace$ to $\bR$ have yet to be defined. 
\end{remark}

\subsection{Formulation of differentiability}\label{subsec:pd_diff}

This subsection summarizes the main results of \cite{LOT2022}, providing a formal framework to defined differentials (in particular, gradients) of composite maps $\cL \colon M \to \pdspace \to N$ where $M$ and $N$ are two manifolds of class $C^\infty$ and without boundary (in practice, one often considers $M = \bR^{d_1}$ and $N = \bR^{d_2}$, typically with $d_2 = 1$).

\begin{definition}[{\cite[Def.~3.1]{LOT2022}}]
    Let $m, n \in \bN$. 
    The space of \emph{ordered persistence diagrams} with $m$ ordinary points and $n$ essential points is $\bR^{2m} \times \bR^n$ equipped with the Euclidean norm. 
    The map $Q_{m,n} \colon \bR^{2m} \times \bR^n \to \pdspace$ quotients the space by the action of the product of the symmetric groups $\perm_m \times \perm_n$ through permutations of points. 
    That is, for any ordered persistence diagram $\tilde{\alpha}= ((b_1,d_1, \dots, b_m, d_m), (v_1,\dots,v_n)) \in \bR^{2m} \times \bR^n$, 
    \begin{equation}
        Q_{m,n}(\tilde{\alpha}) \coloneqq \{ (b_i, d_i),\ 1 \leq i \leq m \} \cup \{ (v_j, +\infty),\ 1 \leq j \leq n\} \in \pdspace,
    \end{equation}
    where points are counted with multiplicity.
\end{definition}

\begin{definition}[{\cite[Def.~3.3]{LOT2022}}]\label{def:r-diff}
    Let $M$ be a manifold and $r \in \bN \cup{\{\infty\}}$. 
    A persistence diagram-valued map $\Pers \colon M \to \pdspace$ 
    is said to be \emph{$r$-differentiable} at $\theta \in M$ if there exist an open neighborhood $U$ of $\theta$, $m, n \in \bN$, 
    and a map $\tilde{B} \colon U \to \bR^{2m} \times \bR^n$ of class $C^r$ such that $\Pers = Q_{m,n} \circ \tilde{B}$. 
    One calls $\tilde{B}$ a \emph{local lift} of $\Pers$. 
\end{definition}

\begin{remark}\label{rem:diffeology}
    This definition of $r$-differentiability, based on \emph{diffeology} \cite{iglesias2013diffeology}, implies in particular that $\Pers(\theta')$ must have exactly $m$ ordinary points and $n$ essential points for any $\theta'$ in the neighborhood $U$ of $\theta$. 
    In particular, since the quotient map $Q_{m,n}$ is (Lipschitz) continuous when $\pdspace$ is equipped with the distance $\FG_q$ for $q \in [1,+\infty]$ (\cite[Prop.~3.2]{LOT2022}), $0$-differentiability of a map $\Pers$ implies its continuity with respect to $\FG_q$. 
    The converse is however false: a map $\Pers$ could create or destroy a point locally (which would make it not $0$-differentiable) while still being continuous for the distance $\FG_q$. 
    For instance, let $M = \bR$ and $\Pers(\theta) = \{(1-\theta^2,1+\theta^2)\} \in \pdspace$ if $\theta \neq 0$, and $\emptydgm$ (the empty diagram) if $\theta = 0$. 
    This map is not $0$-differentiable at $\theta = 0$, but it is continuous as $\FG_q(\Pers(\theta), \emptydgm) = \sqrt{2} \theta^2 \to 0$ as $\theta \to 0$. 

    Enforcing the conservation of the number of points $m$ in \cref{def:r-diff} might seem to be a strong restriction. 
    However, when $\Pers$ is given by $\Pers(\theta) \coloneqq \Dgm(f_\theta)$ for some (parametrized) filtration $f_\theta \in \Filt_K$ (on some finite simplicial complex $K$)---a very standard case in practice---, generic configurations (i.e., configurations where $f_\theta$ takes distinct values on different simplices) always induce, locally, the same pre-order on $K$ and thus the same number of points in the resulting persistence diagram. 
    It means that, in applications, considering local perturbations of the parameter $\theta$ (which is the case when computing gradients) must, generically, let $m$ and $n$ unchanged. 
\end{remark}

\begin{definition}[{\cite[Def.~3.7]{LOT2022}}]\label{def:diffeology_M_to_D}
    Let $M$ be a smooth manifold and $r \in \bN \cup \{\infty\}$. 
    Moreover, let $\Pers \colon M \to \pdspace$ be a PD-valued map and $\theta \in M$. 
    A \emph{$C^r$ local coordinate system} for $\Pers$ at $\theta$ is a collection of maps $(b_i,d_i \colon U \to \bR)_{i \in I}$ and $v_j \colon U \to \bR)_{j \in J}$ for finite sets $I,J$ defined on an open neighborhood $U$ of $\theta$ such that:
    \begin{enumerate}
        \item[(1)] The maps $b_i,d_i, v_j$ are all of class $C^r$;
        \item[(2)] For any $\theta' \in U$, one has the multi-set equality:
        \begin{equation}
            \Pers(\theta') = \{ (b_i(\theta'), d_i(\theta')) \}_{i \in I} \cup \{(v_j(\theta), +\infty) \}_{j \in J}.
        \end{equation}
    \end{enumerate}
    One simply writes $(U, (b_i,d_i)_{i \in I})$ for a local coordinate system.
\end{definition}

\begin{lemma}[{\cite[Prop.~3.8]{LOT2022}}]
    Let $M$ be a manifold and $\Pers \colon M \to \pdspace$ be a PD-valued map.
    Then $\Pers$ is $r$-differentiable at $\theta \in M$ if and only if it admits a $C^r$ local coordinate system at $\theta$. 
\end{lemma}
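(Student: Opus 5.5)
The statement is essentially a reformulation of \cref{def:r-diff}, so I will prove both implications by unpacking the quotient map $Q_{m,n}$ and matching its coordinates with the maps of a local coordinate system (\cref{def:diffeology_M_to_D}). No analysis beyond the fact that a map into a product $\bR^{2m}\times\bR^n$ is $C^r$ if and only if each of its scalar components is $C^r$ should be needed.

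\emph{($\Rightarrow$).} Assume $\Pers$ is $r$-differentiable at $\theta$. Then there exist a neighborhood $U$, integers $m,n$ and a $C^r$ lift $\tilde B\colon U\to\bR^{2m}\times\bR^n$ with $\Pers = Q_{m,n}\circ\tilde B$. Write $\tilde B(\theta') = ((b_1(\theta'),d_1(\theta'),\dots,b_m(\theta'),d_m(\theta')),(v_1(\theta'),\dots,v_n(\theta')))$. Set $I=\{1,\dots,m\}$ and $J=\{1,\dots,n\}$. Each $b_i,d_i,v_j$ is the composition of $\tilde B$ with a linear coordinate projection, so it is $C^r$, which gives condition (1). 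By the definition of $Q_{m,n}$, for all $\theta'\in U$ the multiset $\Pers(\theta')$ equals $\{(b_i(\theta'),d_i(\theta'))\}_{i\in I}\cup\{(v_j(\theta'),+\infty)\}_{j\in J}$, which is condition (2).

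\emph{($\Leftarrow$).} Conversely, let $(U,(b_i,d_i)_{i\in I},(v_j)_{j\in J})$ be a $C^r$ local coordinate system. Choose enumerations $I\cong\{1,\dots,m\}$ and $J\cong\{1,\dots,n\}$ with $m=|I|$ and $n=|J|$, and define $\tilde B(\theta')\coloneqq((b_1,d_1,\dots,b_m,d_m)(\theta'),(v_1,\dots,v_n)(\theta'))$. Since its components are $C^r$, $\tilde B$ is $C^r$. Condition (2) states exactly that $Q_{m,n}\circ\tilde B=\Pers$ on $U$, so $\tilde B$ is a local lift and $\Pers$ is $r$-differentiable at $\theta$.

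The only point requiring care is bookkeeping rather than a genuine obstacle. One must check that the multiset equality is compatible with the convention that points are counted with multiplicity, which is immediate since both sides list the points indexed by $I\sqcup J$. One must also read the essential coordinates in condition (2) as $v_j(\theta')$ rather than $v_j(\theta)$; the latter is presumably a typo, and with it the equivalence would fail for moving essential points. Finally, if $I$ and $J$ are empty, we take $m=n=0$ and the lift into $\bR^0$, which is trivially smooth.
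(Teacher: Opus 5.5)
Your proof is correct, and it is essentially the argument behind the cited result: the survey gives no proof of its own and defers to Prop.~3.8 of \cite{LOT2022}, where the equivalence likewise comes from reading the components of a local lift as a coordinate system and, conversely, stacking the coordinate functions into a lift. You are also right to read $v_j(\theta)$ as $v_j(\theta')$ in the definition of a local coordinate system; with $v_j(\theta)$ the essential points could not move, and the equivalence would fail.
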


We now discuss the differentiability of maps {\em defined} on PDs and valued in a manifold $N$. 
In most applications, $N$ is typically a Banach, Hilbert or Euclidean space (quite often, simply $\bR$ or $\bR^d$), and such maps are then referred to as \emph{topological losses}, or \emph{vectorizations} of persistence diagrams, depending on the applications. 

\begin{definition}[{\cite[Def.~3.10]{LOT2022}}]\label{def:differentiable_fromD}
    Let $N$ be a smooth manifold and $r \in \bN \cup \{\infty\}$. 
    A map $L \colon \pdspace \to N$ is said to be \emph{$r$-differentiable} at $\alpha \in \pdspace$ if for any $m, n \in \bN$ and any vector $\tilde{\alpha} \in \bR^{2m} \times \bR^n$ 
    satisfying $Q_{m, n}(\tilde{\alpha})=\alpha$, the map $L \circ Q_{m, n} \colon \bR^{2m} \times \bR^n \to N$ is $C^r$ on an open neighborhood of $\tilde{\alpha}$.
\end{definition}

The differentials of PD-valued maps and maps defined on PDs can then be defined using lifts.

\begin{definition}[{\cite[Def.~3.6 and 3.13]{LOT2022}}]\label{def:diffeology_D_to_N}
    One has the following differentials:
    \begin{enumerate}
        \item Let $\Pers \colon M \to \pdspace$ be $1$-differentiable at $\theta \in M$ and let $\tilde{B} \colon U \to \bR^{2m} \times \bR^n$ be a $C^1$ lift of $\Pers$ defined on an open neighborhood $U$ of $\theta$. 
        The differential $\dd_{\theta,\tilde{B}} \Pers$ of $\Pers$ at $\theta$ with respect to the lift $\tilde{B}$ is defined as the differential of $\tilde{B}$ at $\theta$: 
        \begin{equation}
            \dd_{\theta,\tilde{B}} \Pers \colon T_\theta M \xrightarrow{\dd_\theta \tilde{B}} \bR^{2m}.
        \end{equation}
        \item Let $L \colon \pdspace \to N$ be $1$-differentiable at $\alpha \in \pdspace$ and $\tilde{\alpha} \in \bR^{2m} \times \bR^n$ be a pre-image of $\alpha$ under $Q_{m, n}$. 
        The differential of $L$ at $\alpha$ with respect to $\tilde{\alpha}$ is defined as the differential of $L\circ Q_{m, n}$ at $\tilde \alpha$: 
        \begin{equation}
            \dd_{\alpha,\tilde{\alpha}}L \colon \bR^{2m} \times \bR^n \xrightarrow{\dd_{\tilde{\alpha}}(L \circ Q_{m, n})} T_{L(\alpha)}N.
        \end{equation}
    \end{enumerate}
\end{definition}

While the differentials of $\Pers$ and $L$ defined above depend respectively on the choice of the lift $\tilde{B}$ and the ordered diagram $\tilde{\alpha}$, 
the following result established in \cite{LOT2022}~states that the chain rule used for composing these maps is oblivious to $\tilde{B}$ and $\tilde{\alpha}$. 

\begin{proposition}[{Chain rule \cite[Prop.~3.14]{LOT2022}}]\label{prop:chain}
    Let $\Pers \colon M \to \pdspace$ be $r$-differentiable at $\theta \in M$ and $L \colon \pdspace \to N$ be $r$-differentiable at $\Pers(\theta)$.
    Then, one has:
    \begin{enumerate}
        \item $L \circ \Pers \colon M \to N$ is $C^r$ at $\theta$ as a map between smooth manifolds;
        \item If $r \ge 1$, for any local $C^1$ lift $\tilde{B} \colon U \to \bR^{2m} \times \bR^n$ of $\Pers$ at $\theta$, one has 
        \begin{equation}\label{eq:chain_rule}
            \dd_\theta(L \circ \Pers) = \dd_{\Pers(\theta),\tilde{B}(\theta)} L \circ \dd_{\theta,\tilde{B}}\Pers.
        \end{equation}
    \end{enumerate}
\end{proposition}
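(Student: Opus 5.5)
The natural approach is to reduce everything to the ordinary Euclidean chain rule applied to the composite $L\circ Q_{m,n}\circ\tilde B$. This is a genuine map between smooth manifolds defined near $\theta$, and it coincides with $L\circ \Pers$ there.

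First, we use the $r$-differentiability of $\Pers$ at $\theta$ (\cref{def:r-diff}). It provides an open neighborhood $U$ of $\theta$, integers $m,n$, and a $C^r$ lift $\tilde B \colon U\to\bR^{2m}\times\bR^n$ with $\Pers = Q_{m,n}\circ\tilde B$ on $U$. Set $\tilde\alpha \coloneqq \tilde B(\theta)$, so that $Q_{m,n}(\tilde\alpha)=\Pers(\theta)$.

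Next, we apply \cref{def:differentiable_fromD} at $\alpha=\Pers(\theta)$ with this particular preimage $\tilde\alpha$. It is important here that the definition quantifies over \emph{every} preimage and every $(m,n)$, so it applies to the one produced by the lift. We obtain an open neighborhood $V$ of $\tilde\alpha$ on which $L\circ Q_{m,n}$ is $C^r$.

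By continuity of $\tilde B$, the set $U'\coloneqq \tilde B^{-1}(V)\cap U$ is an open neighborhood of $\theta$. On $U'$ we have the identity
\begin{equation}
L\circ\Pers = (L\circ Q_{m,n})\circ \tilde B,
\end{equation}
which is a composition of $C^r$ maps between manifolds, hence $C^r$. This proves item 1.

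For item 2, with $r\ge1$, take an arbitrary local $C^1$ lift $\tilde B$; the argument above works verbatim with $r=1$ for this lift. The classical chain rule then gives
\begin{equation}
\dd_\theta(L\circ\Pers)=\dd_{\tilde B(\theta)}(L\circ Q_{m,n})\circ \dd_\theta\tilde B.
\end{equation}
By \cref{def:diffeology_D_to_N}, the right-hand side is precisely $\dd_{\Pers(\theta),\tilde B(\theta)}L\circ\dd_{\theta,\tilde B}\Pers$. The left-hand side is intrinsic: it is the differential of the map $L\circ \Pers$ itself, which is independent of any choice. This independence is the point of the statement.

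The only delicate step is the compatibility of the two definitions. If the differentiability of $L$ were defined via a single chosen preimage, the lift $\tilde B(\theta)$ might fail to be that preimage, and the argument would break. The "for any preimage" quantifier in \cref{def:differentiable_fromD} is exactly what avoids this, so I expect no real obstacle beyond bookkeeping with neighborhoods. One minor point to check is that the lift from \cref{def:r-diff} is $C^r$, which is needed so that the composite has the claimed regularity.
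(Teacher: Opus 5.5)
Your proof is correct and is the standard argument of \cite{LOT2022}; the survey itself only quotes the result without proof. On a neighborhood of $\theta$ one has $L\circ\Pers=(L\circ Q_{m,n})\circ\tilde B$, \cref{def:differentiable_fromD} applies at the particular preimage $\tilde B(\theta)$ because it quantifies over all preimages and all $(m,n)$, and the Euclidean chain rule yields both items, with lift-independence of the right-hand side of \eqref{eq:chain_rule} following, as you note, from the left-hand side being the intrinsic differential of $L\circ\Pers$.
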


From the practitioner viewpoint, this result is crucial: it practically allows to compute the gradients of real-valued composite maps going through $\cD$ \emph{in a canonical way}. 
Formally, it means that, in applications, when dealing with a composite map $\cL \colon M \xrightarrow{\Pers} \pdspace \xrightarrow{L} N = \bR$ where both $\Pers$ and $L$ are $1$-differentiable, it is legitimate to treat the intermediate diagram as an element of the Euclidean space $\bR^{2m} \times \bR^n$, deriving a standard gradient in $\bR^{2m} \times \bR^n$ as well, and eventually backpropagate it to obtain a gradient with respect to $\theta$. 

Just as in \cref{rem:diffeology}, we emphasize that $r$-differentiability (as per \cref{def:diffeology_M_to_D,def:diffeology_D_to_N}) of the maps $\Pers$ and $L$ also implies that of $\cL = L \circ \Pers$ in the usual sense. However, it is possible to find composite maps $\cL$ that would be of class $C^r$, $r \in \bN \cup \{\infty\}$, without $\Pers$ and/or $L$ being $r$-differentiable. 
Nonetheless, it turns out that typical maps $\Pers$ (induced by filtrations) and $L$ involved in topological optimization are $r$-differentiable, allowing one to use Equation \eqref{eq:chain_rule} faithfully. 
The next two sections provide such examples. 

\subsection{Examples of topological losses}\label{subsec:example_losses}

In most applications of topological optimization based on PDs (see \cref{sec:applications} for examples),
the functions to minimize, or \emph{(topological) losses}, are instances of $r$-differentiable maps $L \colon \pdspace\to N$ with $N=\bR$.
These losses are typically used to model some qualitative objective (such as, e.g., increasing or decreasing the sizes of topological features in order to regularize a model with topological priors). 
In the following, we provide few examples of standard topological losses; following \cref{rem:discard_essential_part}, these losses only depends on the ordinary part of the persistence diagram. 
All the results of \cref{subsec:pd_diff} can be adapted seamlessly simply by dropping the terms depending on $n$; we will write $Q_m$ (instead of $Q_{m,0}$) the corresponding quotient map.

\begin{example}[Total persistence and some variations]\label{ex:perstot}
    The {\em total persistence} $\perstot(\alpha)$ of a persistence diagram $\alpha \in \pdspace^o$ is the sum of the distances to the diagonal of the (ordinary) points $(b,d)$ of $\alpha$:
    \begin{equation}
        \perstot(\alpha) = \frac{1}{2} \sum_{(b,d)\in \alpha} (d-b)^2.
    \end{equation}
    For $\alpha \in \pdspace^o$ and an ordered persistence diagram $\tilde{\alpha} \in \bR^{2m}$ satisfying $Q_{m}(\tilde{\alpha}) = \alpha$, we have
    \begin{equation}
        \perstot \circ Q_{m} \coloneqq \bR^{2m} \ni (b_1,d_1, \dots, b_m, d_m) \mapsto \frac{1}{2} \sum_{i=1}^m (d_i-b_i)^2 \in \bR.
    \end{equation}
    Hence, the function $\perstot \colon \pdspace \to \bR$ is $\infty$-differentiable everywhere on $\pdspace$, and its gradient an be identified with that of its lift $(b_1,d_1,\dots,b_m, d_m) \mapsto (b_i - d_i, d_i - b_i)_{i=1}^m$. 
    
    The total persistence bears some similarities with the total variation: intuitively, it quantifies the \emph{topological complexity} of input data, by measuring and adding up the sizes of all the topological features.
    Minimizing this loss can be understood at reducing this complexity, resulting in smoother, less complex datasets or models. 
    As such, it has been used, e.g., for reducing overfitting in complex predictive models, as it can capture geometric patterns that traditional $L^1$ or $L^2$ penalties are oblivious to. 
    See for instance~\cite{chen2019topological, Hacquard2022} and \cref{sec:app_reg}. 
    Practically speaking, minimizing the total persistence attempts at pushing all the persistence diagram points toward the diagonal $\thediag$; its global minimum is $0$ and reached if and only if the diagram $\alpha$ is empty. 

    Naturally, this loss admits several variation, such as (non-exhaustive list): only minimizing the second coordinates, that is, the death times of the topological features (this only modifies the filtration values of the death simplices), using another exponent $p > 1$ on the terms: $|d_i - b_i|^p$, etc. 

    Instead of minimizing the total persistence, one may consider maximizing it (or minimizing its opposite), in which case one is trying to enhance the topological features appearing in the filtration by pushing persistence diagram points away from the diagonal. 
    This is what happens in \cref{fig:exdiff}.

    A slight yet useful variation of the total persistence loss is the \emph{topological simplification loss}, where one only penalizes points in the diagram that are already close to the diagonal, i.e.,~$\sum_{i=1}^m (d_i - b_i) 1_{|d_i - b_i| < \eta}$ for some fixed threshold $\eta > 0$. 
    At the level of the persistence diagram, this loss will push points that are already close to the diagonal (often considered as ``topological noise'') closer to it, while preserving points away from the diagonal (corresponding to ``topological signal'' or ``macroscopic topological features''). 
\end{example}

\begin{example}[Distance to a target diagram]\label{ex:dist_to_dgm}
    Given a diagram $\alpha_0$ in $\pdspace^o$, the (squared) \emph{distance to $\alpha_0$} is a function on $\pdspace^o$ defined as: 
    \begin{equation}\label{eq:dist_to_target}
        {\rm Dist}_{\beta} \colon \pdspace^o \to \bR;\ \alpha \mapsto \frac{1}{2} \FG_2(\alpha,\beta)^2,
    \end{equation}
    where $\FG_2$ denotes the $2$-th diagram distance as in \cref{def:dist_wasserstein}. 
    By minimizing this loss, one can enforce $\alpha$ to be equal, or at least close, to a fixed, target persistence diagram $\beta$, by moving the persistence diagram points toward the ones of the target $\beta$. 
    This objective typically occurs in generative models when the dataset to model can be efficiently characterized by its geometric patterns, as is the case for the distribution of dark matter in large scale simulations of the universe~\cite{biagettiFisherForecastsPrimordial2022, ouelletteTopologicalDataAnalysis2023, yipCosmologyPersistentHomology2024}, or for the spatial arrangements of single cells in biopsies~\cite{atienzaStableTopologicalSummaries2021, aukerman2022persistent, vipondMultiparameterPersistentHomology2021}. 
    Another standard use case is dimension reduction, where one forces the latent spaces learned by neural network architectures to have persistence diagrams as close as possible to the one of the input, high-dimensional data, in order to preserve its geometric patterns and topological features. See~\cite{moor2019topological, Wagner2021, vandaele2021topologically} and \cref{subsec:app_dr}.

    If we consider $\pi^\star \in \Gamma(\alpha,\beta)$ the (generically unique\footnote{as the solution of a linear programming problem on a convex polytope.}) optimal partial matching between two diagrams $\alpha,\beta \in \pdspace^o$, and let $\tilde\alpha = (x_1, \dots, x_m) \in \bR^{2m}$ denote a lift of $\alpha$ (where $x_i = (b_i,d_i) \in \groundspace$), one has $\FG_2(\alpha,\beta)^2 = \sum_{x \in \alpha \cup \thediag} \|x - \pi^\star(x)\|^2$. 
    It follows from the envelope theorem that the gradient of ${\rm Dist}_{\beta}$ can be identified with
    \begin{equation}\label{eq:grad_dist}
         (x_i - \pi^\star(x_i))_{i=1}^m, 
    \end{equation}
    Unsurprisingly, following the opposite of this gradient will push $x_i \in \alpha$ toward its target $\pi^\star(x_i) \in \beta \cup \thediag$.
    
    \paragraph{Stability of the gradient.} Even though the optimal partial matching $\pi^\star$ between two diagrams $\alpha,\beta \in \pdspace^o$ is generically unique, it is an unstable quantity: slightly perturbing the target $\beta$ (or the current diagram $\alpha$) may yield a significantly different optimal matching $\pi^\star$. 
    It means that while ${\rm Dist}$ is differentiable almost everywhere on $\pdspace^o$, the gradient \eqref{eq:grad_dist} is not a stable quantity, which can be an issue in practical applications (e.g.,~instability in gradient descents). 
    One way of mitigating this is to add an \emph{entropic regularization term} in \eqref{eq:dist_to_target} following the computational optimal transport literature \cite{cuturi2013sinkhorn,peyre2019computational}. 
    This approach has been adapted to PDs in \cite{lacombe2018large,lacombe2023homogeneous} and used in the context of topological optimization in \cite{hiraoka2024topological}.
\end{example}

\begin{example}[Singleton loss]\label{ex:singleton}
	Given a target point $q_0\in \bR^2$, a persistence diagram $\alpha\in \pdspace$, and a point $p_0\in\alpha$, the \emph{singleton loss} is defined as:
	\begin{equation}
	{\rm Singleton}_{q_0}(\alpha, p_0) = \|p_0-q_0\|.
	\end{equation}
Intuitively, minimizing this loss amounts to pushing a specific persistence diagram point $p_0$ in the direction of a specific target point $q_0$. This loss can be understood as the simplest version of the distance to a target persistence diagram $\alpha_0$, where the target persistence diagram is comprised of only one point $\alpha_0=\{q_0\}$, and where the matching between $\alpha$ and $\alpha_0$ is imposed (as one forces $p_0\in\alpha$ to move towards $q_0$). This loss is particularly relevant in the context of \emph{big-step} gradient descent, as it allows to drastically speed up topological optimization, see Section~\ref{subsec:BSGS}.
\end{example}

Aside from these topological losses directly mapping a diagram in $\cD$ to a value in $\bR$, another family of topological losses can be obtained by composing two maps $L=\ell\circ\Phi$: a first \emph{vectorization} map $\Phi:\cD \to N$ where $N$ is a Hilbert space or simply the Euclidean space $\bR^d$, and a second map $\ell \colon N\to\bR$ from that linear space to $\bR$ which comes from the problem at hand: mean squared error for regression tasks, cross-entropy in classification tasks, etc. 
The gradient of $\ell$ is defined in the usual sense, and the gradient of $L$ can then be obtained by composing it with a lift of the differential of the vectorization $\Phi$, defined as in \cref{subsec:pd_diff}.

The vectorization step is a convenient step as it turns the PDs (which live in a non-flat space \cite{turner2014frechet}) into vectors on which standard machine learning pipelines can be used.
The most popular vectorizations of PDs are the \emph{linear} ones: they include, e.g., persistence surfaces and their variations \cite{adams2017persistence, chen2015statistical,kusano2016persistence,reininghaus2015stable} and persistence landscapes \cite{bubenik2015statistical,chazal2014stochastic}.

\begin{example}[Linear vectorizations of persistence diagrams]\label{ex:linear_reps}
    Given a map $\phi \colon \groundspace \to N$ ($N$ being a Hilbert or Euclidean space), one can define a vectorization $\Phi \colon \pdspace^o \to N$ as $\Phi \colon \alpha \mapsto \sum_{x \in \alpha} \phi(x)$. 
    Such vectorizations are called \emph{linear} as, if one encodes a diagram $\alpha$ as a measure $\mu \coloneqq \sum_{x} \delta_x$ where $\delta_x$ denote the Dirac mass at $x$ in the open half-plane $\groundspace$, this map is the linear map $\mu \mapsto \int \phi \dd \mu$---see \cite[\S 5.1]{divol2019understanding} for details. 

    For such vectorizations, assuming that the map $\phi$ is $C^1$, it follows that the differential of $\Phi$ at $\alpha \in \pdspace^o$ can be identified with 
    \begin{equation}
        \sum_{i=1}^m \dd \phi((b_i,d_i)),
    \end{equation}
    for some lift $(b_i,d_i)_{i=1}^m \in \bR^{2m}$ of $\alpha$. 
\end{example}

Note that linear vectorizations were initially introduced as fixed maps that depend on user-defined parameters, that is, each $\phi \colon \Omega\to N$ is a parametrized map $\phi=\phi_\eta$, with parameters $\eta$ controlling the weights to be assigned to each PD point (depending on, e.g., its distance to the diagonal $\partial\Omega$). 
As tuning $\eta$ is usually done with cross-validation, which can be expensive in running time, a subsequent body of work has recently emerged about \emph{learning} $\eta$: either with deep learning using specialized neural network architectures~\cite{hofer2017deep, hoferLearningRepresentationsPersistence2019, carriere2020perslay, zhao2020persistence, kim2020efficient}, with codebooks obtained by, e.g., running $k$-means on the training PDs~\cite{royer2021atol}, or with pre-defined template functions fitted on the training set~\cite{pereaApproximatingContinuousFunctions2023}.

\begin{remark}[Kernel methods to vectorize persistence diagrams.]
Another standard way of deriving vectorizations is with \emph{kernel methods}: in that case, the map $\Phi$ is defined implicitly through a \emph{kernel} $k$ representing the scalar product of some (implicit) Hilbert space $\mathcal H$: $k(\alpha,\beta) \coloneqq \langle \Phi(\alpha), \Phi(\beta)\rangle_{\mathcal H}$ for any $\alpha,\beta\in\cD$. The most popular kernel are the Gaussian-like ones: $k(\alpha,\beta) \coloneqq \exp(-d(\alpha,\beta)/\sigma)$, $\sigma>0$, where $d$ is a distance between PDs. Note that not all distances induce valid Gaussian kernels for PDs (i.e., such that there exists an implicit corresponding Hilbert space), only the so-called \emph{conditionally negative semi-definite}\footnote{I.e., those distances that satisfy $\sum_i c_i = 0 \Rightarrow \sum_i\sum_j c_i c_j d(\alpha_i,\alpha_j)\leq 0$, for any PDs $\{\alpha_i\}_i$ and coefficients $\{c_i\}_i$.} (CNSD) do. While the diagram distances $\FG_q$ are unfortunately not CNSD, interpreting PDs as discrete measures as in \cref{ex:linear_reps} allows to borrow tools from other domains (such as optimal transport and information geometry) in order to design CNSD metrics: examples include the sliced Wasserstein distance~\cite{Carriere2017b} and the Fisher information metric~\cite{le2018persistence}. Moreover, similar to the explicit linear representations, parametrized kernels $k=k_\eta$ can also be learned at training time, using, e.g., metric learning~\cite{Zhao2019}.
\end{remark}

\begin{remark}[Automatic differentiation]
    Since persistence diagrams can be identified with their lifts when it comes to compute gradients, one can in practice resorts on automatic differentiation to compute gradients of maps defined from $\pdspace$ to $\bR$, simply by actually defining them as maps from $\bR^{2m}$ to $\bR$ (note that $m$ should be allows to vary in general) in a framework compatible with automatic differentiation (e.g.,~\texttt{PyTorch} \cite{paszke2017automatic}, \texttt{TensorFlow} \cite{abadi2016tensorflow}, \texttt{JAX} \cite{jax2018github}, etc.).
\end{remark}

\subsection{Persistence diagrams of parametrized families of filtrations}

In addition to topological losses, most topological optimization problems involve some $r$-differentiable PD-valued maps $\Pers$ (for a given homology dimension) defined on a parameter manifold $M$ (see beginning of Section~\ref{sec:differentiability}). 
As $\theta$ often parametrizes filtration values assigned to simplicial complexes, the goal of this section is to present in more details the smoothness associated to such filtration maps.

\begin{definition} \label{def:parametrized_filtrations}
    Let $K$ be a finite simplicial complex and $M$ be a smooth manifold. 
    A map $F \colon M \to \bR^{|K|}$ is said to be a \emph{parametrized family of filtrations} if for any $\theta \in M$ and $\sigma, \sigma' \in K$ 
    with $\sigma \subseteq \sigma'$, one has $[F(\theta)]_\sigma \le [F(\theta)]_{\sigma'}$, where $[v]_\sigma$ denotes the entry of a vector $v\in\bR^{|K|}$ at the position of $\sigma$ (upon using a fixed, arbitrary ordering of the simplices of $K$). 
    In other words, a parametrized family of filtrations is a map $F \colon M \to \Filt_K$ (after identifying $\bR^K$ and $\bR^{|K|}$).
\end{definition}

Note that a parametrized family of filtrations is called just a parametrization in \cite{LOT2022}.
Now, let $M$ be a smooth manifold, and $F \colon M \to \Filt_K$ be a parametrized family of filtrations of class $C^r$. 
In what follows, we consider a PD-valued map of the form:
\begin{equation}\label{eq:B_p}
    \PersMap_p \colon M \xrightarrow{F} \Filt_K \xrightarrow{\Pers_p} \pdspace.
\end{equation}
Note that from now on, the domain of $\Pers$ is $\Filt_K$ (and not $M$ directly, as in the previous sections).
We shall show that $\PersMap_p$ is $r$-differentiable on a generic (open dense) subset of $M$.
First, we state a local differentiability result. 

\begin{theorem}[{\cite[Thm.~4.7]{LOT2022}}]
    Let $\theta \in M$. 
    Suppose $F \colon M \to \Filt_K$ is of class $C^r$ on some open neighborhood of $\theta$, and that $F(\theta) \sim F(\theta')$ (as per \cref{def:order_eq}) for all $\theta' \in U$.
    Then $\PersMap_p$ is $r$-differentiable at $\theta$.
\end{theorem}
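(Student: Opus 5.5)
The plan is to build an explicit $C^r$ local coordinate system for $\PersMap_p$ at $\theta$ and then apply the lemma following \cref{def:diffeology_M_to_D} (\cite[Prop.~3.8]{LOT2022}), which says that $r$-differentiability is equivalent to admitting such a system. Take $U$ to be the open neighborhood of $\theta$ on which $F$ is $C^r$ and ordering equivalent to $F(\theta)$, shrinking it if necessary so both properties hold on the same open set.

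The first step is combinatorial. Since $F(\theta') \sim F(\theta)$ for every $\theta' \in U$, all filtrations $F(\theta')$ induce the same pre-order on the simplices of $K$. We fix once and for all the tie-breaking used to refine this pre-order into a total order compatible with face relations; the same refinement is then valid for every $\theta' \in U$. By \cref{proposition:order_persistence_pairs}, the persistence pairs and unpaired simplices are constant on $U$: $P_p^{F(\theta')} = P_p^{F(\theta)} \eqqcolon P$ and $U_p^{F(\theta')} = U_p^{F(\theta)} \eqqcolon U_0$. Here $U_0$ is a finite set of simplices, not to be confused with the neighborhood $U$. This is the step I expect to need the most care. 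One must check that \cref{alg:persistent-pairs} really depends only on the chosen total order, so that an identical order gives identical output. One must also check that the diagram does not depend on the tie-breaking, which is the remark made before \cref{def:order_eq}.

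Next, write the diagram formula from \cref{subsec:computation} in dimension $p$:
\begin{equation}
\PersMap_p(\theta') = \{([F(\theta')]_\sigma, [F(\theta')]_{\sigma'})\}_{(\sigma,\sigma')\in P} \cup \{([F(\theta')]_\tau, +\infty)\}_{\tau\in U_0}.
\end{equation}
Set $I = P$ and $J = U_0$. For $(\sigma,\sigma') \in P$, define $b_{(\sigma,\sigma')}(\theta') = [F(\theta')]_\sigma$ and $d_{(\sigma,\sigma')}(\theta') = [F(\theta')]_{\sigma'}$. For $\tau \in U_0$, define $v_\tau(\theta') = [F(\theta')]_\tau$. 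Each of these maps is a coordinate projection composed with $F$, hence $C^r$ on $U$, which gives condition (1). Condition (2) is exactly the displayed multiset identity.

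One subtlety remains. Pairs with $[F(\theta')]_\sigma = [F(\theta')]_{\sigma'}$ give points on the diagonal. These are still valid points of the closed half-plane $\overline{\groundspace}$ in \cref{def:PD}, so the coordinate system is unaffected. By ordering equivalence, the set of such diagonal pairs is also constant on $U$, so it does not disturb the point count. Hence $(U,(b_i,d_i)_{i\in I},(v_j)_{j\in J})$ is a $C^r$ local coordinate system, and the cited lemma yields that $\PersMap_p$ is $r$-differentiable at $\theta$. Equivalently, concatenating these maps in a fixed order gives a local lift $\tilde{B}$ into $\bR^{2|P|}\times\bR^{|U_0|}$.
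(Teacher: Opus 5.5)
Your proposal is correct and follows essentially the same route as the paper: invoke \cref{proposition:order_persistence_pairs} to freeze the persistence pairs and unpaired simplices on the neighborhood, then write $\PersMap_p(\theta')$ through coordinate projections of $F(\theta')$ to obtain a $C^r$ local coordinate system. Your added points (the common tie-breaking, the constancy of diagonal pairs, and the explicit appeal to \cite[Prop.~3.8]{LOT2022}) only spell out details the paper's short proof leaves implicit.
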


\begin{proof}
    By \cref{proposition:order_persistence_pairs}, the persistence pairs $P_p$ and the unpaired simplices $U_p$ for $F(\theta)$ remain unchanged for all (ordering equivalent) $F(\theta'), \theta' \in U$.
    Hence, within $U$, one can write:
    \begin{equation}
        \PersMap_p(\theta') = \{ ([F(\theta')]_\sigma, [F(\theta')]_{\sigma'}) \}_{(\sigma,\sigma') \in P_p} \cup \{ ([F(\theta')]_\tau, \infty) \}_{\tau \in U_p} 
    \end{equation}
    for any $\theta' \in U$. 
    This gives a $C^r$ local coordinate system for $\PersMap_p$ at $x$. 
\end{proof}

Now, we state a global generic differentiability result. 
Set 
\begin{equation}
    \tilde{M} \coloneqq \lc \theta \in M \relmid 
    \begin{aligned}
        & \exists \text{ open neighborhood $U_\theta$ of $\theta$ such that} \\
        & \text{$F(\theta') \sim F(\theta)$ for all $\theta' \in U_\theta$} 
    \end{aligned}
    \rc.
\end{equation}
Then one can prove that $\tilde{M}$ is generic (open dense) in $M$ (see \cite[Lem.~4.10]{LOT2022}).

\begin{theorem}[{\cite[Thm.~4.9]{LOT2022}}]
    Suppose $F \colon M \to \Filt_K$ is of class $C^r$ on some open subset $U$ of $M$. 
    Then $\PersMap_p$ is $r$-differentiable on $U \cap \tilde{M}$.
\end{theorem}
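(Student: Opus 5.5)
The statement is essentially a pointwise consequence of the preceding local result (\cite[Thm.~4.7]{LOT2022}), so the plan is to reduce it to that theorem one point at a time. Fix $\theta \in U \cap \tilde{M}$. By the definition of $\tilde{M}$, there is an open neighborhood $U_\theta$ of $\theta$ such that $F(\theta') \sim F(\theta)$ for all $\theta' \in U_\theta$. Set $V \coloneqq U \cap U_\theta$. This is an open neighborhood of $\theta$, and on $V$ two properties hold simultaneously. First, $F$ is of class $C^r$, because $V \subseteq U$. Second, every $F(\theta')$ with $\theta' \in V$ is ordering equivalent to $F(\theta)$, because $V \subseteq U_\theta$. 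These are exactly the two hypotheses of the local theorem, with its neighborhood taken to be $V$. Applying it yields that $\PersMap_p$ is $r$-differentiable at $\theta$.

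It is worth also recording the explicit lift, since this is what makes the argument transparent. By \cref{proposition:order_persistence_pairs}, the pairs $P_p$ and unpaired simplices $U_p$ are constant on $V$. The map
\begin{equation}
\theta' \mapsto \big( ([F(\theta')]_\sigma, [F(\theta')]_{\sigma'})_{(\sigma,\sigma')\in P_p}, ([F(\theta')]_\tau)_{\tau\in U_p} \big)
\end{equation}
is therefore a $C^r$ map $V \to \bR^{2m}\times\bR^n$ with $m = |P_p|$ and $n = |U_p|$. Composing it with $Q_{m,n}$ recovers $\PersMap_p$ on $V$. Its smoothness comes from the fact that each coordinate is a coordinate projection of $F$. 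Since $r$-differentiability in \cref{def:r-diff} is a pointwise local condition, establishing it at every $\theta \in U\cap\tilde{M}$ gives the claim on the whole set.

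There is no serious obstacle here; the only point requiring care is bookkeeping of neighborhoods. The local theorem must be applied on a neighborhood where both $C^r$ regularity and ordering equivalence hold, which is why the intersection $U \cap U_\theta$ is needed rather than $U_\theta$ alone. Openness and density of $\tilde{M}$ (\cite[Lem.~4.10]{LOT2022}) are not needed for this statement. They only serve to interpret it as a generic differentiability result.
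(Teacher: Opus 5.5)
Your proof is correct and matches the paper's intended argument. The survey gives no separate proof of this theorem, but it follows from the local result \cite[Thm.~4.7]{LOT2022} exactly as you do it: fix $\theta \in U\cap\tilde{M}$, intersect $U$ with the neighborhood $U_\theta$ from the definition of $\tilde{M}$, and use the explicit coordinate system built from the constant persistence pairs and unpaired simplices. You are also right that openness and density of $\tilde{M}$ are not needed for the claim as stated here.
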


\begin{proposition}[{\cite[Prop.~4.14]{LOT2022}}]\label{prop:gradpd}
    Let $\theta \in U \cap \tilde{M}$. 
    Let $(\sigma_{1},\sigma'_{1}), \dots , (\sigma_{m},\sigma'_{m})$ be the persistence pairs and $\tau_{1},\dots, \tau_{n}$ be the unpaired simplices of $F(\theta)$. 
    Then, the map: 
    \begin{equation}
        \tilde{\PersMap}_p \colon \theta' \mapsto 
		\lb ([ F(\theta') ]_{\sigma_i}, [ F(\theta') ]_{\sigma'_i})_{i=1}^m, ( [F(\theta')]_{\tau_j})_{j=1}^n \rb
    \end{equation}
    is a local $C^r$ lift of $\PersMap_p$ at $\theta$, and the corresponding differential is:
    \begin{equation}
        \dd_{\theta, \tilde{\PersMap}_p} \PersMap_p(\cdot) = 
        \lb ( [\dd_\theta F(\cdot)]_{\sigma_i}, [\dd_\theta F(\cdot)]_{\sigma'_i} )_{i=1}^m, ( [\dd_\theta F(\cdot)]_{\tau_j} )_{j=1}^n \rb.
    \end{equation}
\end{proposition}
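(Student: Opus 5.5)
Since $\theta \in \tilde{M}$, there is an open neighborhood $U_\theta$ of $\theta$ on which $F(\theta') \sim F(\theta)$ for every $\theta'$. Shrinking this neighborhood, we may take $V \coloneqq U_\theta \cap U$, which is still open and contains $\theta$. On $V$ the map $F$ is of class $C^r$, and all filtrations $F(\theta')$, $\theta' \in V$, are ordering equivalent to $F(\theta)$.

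The first step applies \cref{proposition:order_persistence_pairs}. It shows that for every $\theta' \in V$ the persistence pairs and unpaired simplices in dimension $p$ of $F(\theta')$ are exactly $(\sigma_i,\sigma'_i)_{i=1}^m$ and $(\tau_j)_{j=1}^n$. A small subtlety is the tie-breaking into a total order. Since $F(\theta')$ induces the same pre-order as $F(\theta)$, we can use the same tie-breaking for both, and then the pairing algorithm (\cref{alg:persistent-pairs}) runs identically. By the description of $\Dgm$ in \cref{subsec:computation}, we therefore get the multiset equality
\begin{equation}
    \PersMap_p(\theta') = \{([F(\theta')]_{\sigma_i},[F(\theta')]_{\sigma'_i})\}_{i=1}^m \cup \{([F(\theta')]_{\tau_j},+\infty)\}_{j=1}^n
\end{equation}
for all $\theta' \in V$. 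In other words, $\PersMap_p = Q_{m,n} \circ \tilde{\PersMap}_p$ on $V$.

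The second step checks regularity. Each component of $\tilde{\PersMap}_p$ is the composition of $F$ with a coordinate projection $\bR^{|K|} \to \bR$, $v \mapsto [v]_\sigma$. This projection is linear, hence smooth, so each component is $C^r$ on $V$. Thus $\tilde{\PersMap}_p$ is a $C^r$ map $V \to \bR^{2m}\times\bR^n$ factoring $\PersMap_p$ through $Q_{m,n}$, which is a local $C^r$ lift in the sense of \cref{def:r-diff}.

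The last step computes the differential. By \cref{def:diffeology_D_to_N}, $\dd_{\theta,\tilde{\PersMap}_p}\PersMap_p$ is by definition $\dd_\theta \tilde{\PersMap}_p$ (assuming $r\ge1$). Because each projection is linear, the chain rule gives $\dd_\theta([F(\cdot)]_\sigma) = [\dd_\theta F(\cdot)]_\sigma$, and assembling these components yields the stated formula.

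The only genuinely delicate point is the first step. One must make sure that the pairing, and not only the diagram, is constant on the neighborhood, so that the indexing $i,j$ is consistent across $V$. This is exactly what ordering equivalence together with a fixed tie-breaking rule provides. Everything else is routine.
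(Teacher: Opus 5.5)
Your proof is correct and follows essentially the same route as the paper. The paper's argument, given for the preceding local differentiability theorem \cite[Thm.~4.7]{LOT2022}, uses ordering equivalence on a neighborhood and the constancy of persistence pairs and unpaired simplices to write the diagram explicitly in terms of the fixed simplices, which yields a $C^r$ local lift. The differential formula then follows directly from the definition of the differential with respect to a lift, exactly as in your final step.
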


Intuitively, the differential of $\PersMap_p$ is simply obtained by computing the differential of the filtration map first, and then picking the entries of the paired and unpaired simplices. 
An important property of the space of filtrations $\Filt_K$ is that it admits a \emph{Whitney stratification}.

\begin{definition}\label{def:stratification_of_manifold}
	Let $M$ be a subset of $\bR^d$. 
	A \emph{Whitney stratification} $\cS=\{ M_i \}_{i \in I}$ of $M$ is a locally finite partition by connected (not necessarily closed) smooth submanifolds $M_i$, called {\em strata}, satisfying the following conditions:
	\begin{itemize}
		\item (Frontier) For each stratum $M_i \in \cS$, the set $\overline{M_i} \setminus M_i$ is a union of strata. 
		\item (Condition B) Let $M',M''\in \cS$ be two strata, and $\theta \in M'$. 
		If two sequences $\{\theta'_k\}_k \subset M'$ and $\{\theta''_k\}_k \subset M''$ both converge to $\theta$, if the line connecting $\theta'_k$ and $\theta''_k$ converges to some line $l$, and if $T_{\theta''_k}M''$ converges to some plane $T$, then $T$ contains $l$.
	\end{itemize}
    A stratum with the maximal dimension is called a \emph{top-dimensional stratum}.
    Eventually, given $\theta \in M$, a stratum $M_i \in \cS$ is said to be \emph{incident} to $\theta$ if $\theta$ belongs to its closure: $\theta \in \overline{M_i}$.
\end{definition}

Let $\Omega(\Filt_K)$ be the set of equivalence classes  with respect to the ordering equivalence $\sim$.
Then, $\Omega(\Filt_K)$ is a Whitney stratification of $\Filt_K$ by semi-algebraic subsets, whose stratum are separated by a family $\mathcal F$ of hyperplanes, defined with $\mathcal F \coloneqq \{ v\in\bR^{|K|}\,|\,[v]_{\sigma} = [v]_{\sigma'}\}_{\sigma \neq \sigma' \in K}$.
Thanks to this fact, one can extend the local lift of $\PersMap_p$ in \cref{prop:gradpd} to a global lift.

\begin{proposition}[{\cite[Prop.~4.23]{LOT2022}}]\label{prop:permpd}
    Let $K$ be a simplicial complex of dimension $d$. 
    For $0 \le p \le d$, there exist integers $m_p,n_p$ such that $\sum_{p=0}^d (2m_p+n_p)=|K|$ and a map $\Perm \colon \Filt_K \to \prod_{p=0}^d \bR^{2m_p} \times \bR^{n_p} \cong \bR^{|K|}$ satisfying the following:
    \begin{enumerate}
        \item the restriction $\Perm|_S$ to each stratum $S$ of $\Omega(\Filt_K)$ is a permutation map, 
        \item the following diagram commutes
        \begin{equation}
        \xymatrix{
            \Filt_K \ar[r]^-{\Perm} \ar[rd]_-{\Pers} & \prod_{p=0}^d \bR^{2m_p} \times \bR^{n_p} \ar[d]^{\prod_{p=0}^d Q_{m_p,n_p}} \\
            & \pdspace^{d+1},
        }
        \end{equation}
        where $\pdspace^{d+1}$ denotes the $(d+1)$-th Cartesian power of $\pdspace$. 
    \end{enumerate}
\end{proposition}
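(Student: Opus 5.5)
The plan is to first show that the numbers $m_p,n_p$ do not depend on the filtration, and then to build $\Perm$ one stratum at a time, with the strata glued along a common indexing of the points of the diagram.

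\textbf{Step 1: constant counts.} Fix $p$. For any $f \in \Filt_K$, \cref{alg:persistent-pairs} pairs each simplex with at most one other simplex. Each $p$-simplex is either unpaired, or pairs with a $(p-1)$-simplex (it is a death), or pairs with a $(p+1)$-simplex (it is a birth). The number $n_p = |U^f_p|$ of unpaired $p$-simplices equals $\beta_p(K)$, the $p$-th Betti number of the full complex, because the essential classes are exactly the homology of $K = K_{+\infty}$. So $n_p$ does not depend on $f$. Counting $p$-simplices gives
\begin{equation}
|\Sk_p(K)| = n_p + m_p + m_{p-1},
\end{equation}
with the convention $m_{-1}=0$. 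By induction on $p$, this forces $m_p = |P^f_p|$ to be independent of $f$ as well. Summing $2m_p+n_p$ over $p$ counts every simplex exactly once: each pair contributes its two simplices and each unpaired simplex contributes one. Hence $\sum_p (2m_p+n_p)=|K|$.

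\textbf{Step 2: a permutation on each stratum.} Let $S\in\Omega(\Filt_K)$ be a stratum, i.e.\ an ordering-equivalence class. By \cref{proposition:order_persistence_pairs}, every $f\in S$ has the same pairs $P^S_p$ and the same unpaired simplices $U^S_p$, once we fix one tie-breaking rule per stratum. Choose an enumeration of $P^S_p$ and of $U^S_p$. This defines a bijection between $K$ and the coordinate slots of $\prod_p \bR^{2m_p}\times\bR^{n_p}$: the $i$-th pair $(\sigma_i,\sigma_i')$ in dimension $p$ goes to the slots $(b_i,d_i)$ of $\bR^{2m_p}$, and the $j$-th unpaired simplex goes to $v_j$. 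Reading the coordinates of $f\in\bR^{|K|}$ in this order gives a coordinate permutation $\Perm_S$, and we set $\Perm|_S \coloneqq \Perm_S|_S$. Item (1) then holds by construction. Item (2) follows from the formula
\begin{equation}
\Dgm_p(f)=\{(f(\sigma),f(\sigma'))\}_{P^f_p}\cup\{(f(\tau),\infty)\}_{U^f_p},
\end{equation}
which is exactly $Q_{m_p,n_p}$ applied to the $p$-th block of $\Perm_S(f)$. Since the strata partition $\Filt_K$, this defines $\Perm$ on all of $\Filt_K$.

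\textbf{Main obstacle.} The one subtle point is that the persistence pairs depend on how ties are broken, so I must check that a pre-order within a stratum determines a consistent choice. One option is to fix a global total order on $K$ that refines face inclusion, for instance by dimension and then lexicographically, and to use it to break ties. This choice is constant on each stratum, which is exactly what Step 2 needs. If one additionally wants $\Perm$ to be as regular as possible across strata (e.g.\ continuous), I would try to align the enumerations of neighbouring strata. However, the statement only requires item (1) on each stratum, so this alignment is not needed. The proof therefore reduces to the counting argument in Step 1 together with the stratum-wise construction in Step 2.
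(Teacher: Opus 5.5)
Your proof is correct and follows the route the survey intends: the survey gives no proof of its own and defers to \cite{LOT2022}, but right after \cref{cor:global_lift} it describes the global lift exactly as you build it. On each stratum you read off the coordinates of $f$ in the order given by the paired and unpaired simplices, which are constant on that stratum. Your counting argument ($n_p=\beta_p(K)$ and $|\Sk_p(K)|=m_p+n_p+m_{p-1}$) correctly shows that $m_p,n_p$ do not depend on the filtration, as the fixed codomain requires. Fixing one global tie-breaking rule compatible with faces settles the consistency issue you flagged.
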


Consider the parametrized PD-valued map:
\begin{equation}
    \PersMap \colon M \xrightarrow{F} \Filt_K \xrightarrow{\Pers} \pdspace^{d+1}; \theta \mapsto F(\theta) \mapsto (\Dgm_p(F(\theta)))_{p=0}^d.
\end{equation}

\begin{corollary}[{\cite[Cor.~4.24]{LOT2022}}]\label{cor:global_lift}
    The map:
    \begin{equation}
        \tilde{\PersMap} \colon M \to \prod_{p=0}^d \bR^{2m_p} \times \bR^{n_p}; \quad \theta \mapsto \Perm(F(\theta))
    \end{equation}
    is a global lift of $\PersMap$, i.e., $Q \circ \tilde{\PersMap}=\PersMap$ on $M$, where $Q=\prod_{p=0}^d Q_{m_p,n_p}$.
\end{corollary}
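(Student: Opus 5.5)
The claim $Q \circ \tilde{\PersMap} = \PersMap$ is a pointwise identity, so we fix an arbitrary $\theta \in M$ and unfold both sides. No continuity or differentiability is needed. Set $f \coloneqq F(\theta)$; by \cref{def:parametrized_filtrations}, $f$ belongs to $\Filt_K$. By definition of $\tilde{\PersMap}$,
\begin{equation}
    Q(\tilde{\PersMap}(\theta)) = \Big(\prod_{p=0}^d Q_{m_p,n_p}\Big)\big(\Perm(f)\big).
\end{equation}
By the definition of $\PersMap$ we also have $\PersMap(\theta) = \Pers(f) = (\Dgm_p(f))_{p=0}^d$.

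The key step is to invoke item 2 of \cref{prop:permpd}, the commutative triangle $\big(\prod_{p} Q_{m_p,n_p}\big) \circ \Perm = \Pers$ on $\Filt_K$, at the point $f$. This gives
\begin{equation}
    Q(\tilde{\PersMap}(\theta)) = \Pers(f) = \PersMap(\theta).
\end{equation}
Since $\theta$ was arbitrary, $Q \circ \tilde{\PersMap} = \PersMap$ holds on all of $M$.

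The only point requiring care is that \cref{prop:permpd} is stated on all of $\Filt_K$, not just on the open strata. Hence the identity also holds at parameters $\theta$ where $F(\theta)$ lies in a lower-dimensional stratum, i.e.\ where filtration values tie. This is exactly what distinguishes the global lift from the local lift of \cref{prop:gradpd}, which is only defined on $\tilde{M}$. I expect no real obstacle, since the whole content is carried by \cref{prop:permpd}.

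It is also worth remarking that $\tilde{\PersMap}$ is in general only piecewise as regular as $F$. On each $F^{-1}(S)$, with $S$ a stratum, it is $F$ composed with a fixed permutation. So $\tilde{\PersMap}$ is a genuine set-theoretic lift everywhere, and a $C^r$ lift on $\tilde{M}$, where it agrees with the local lift of \cref{prop:gradpd} up to a permutation of coordinates.
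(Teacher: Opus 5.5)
Your proof is correct and is the intended argument. The corollary is just the commutative triangle of \cref{prop:permpd} precomposed with $F$, i.e.\ $Q\circ\Perm\circ F=\Pers\circ F$, and this holds on all of $M$ because the triangle commutes on all of $\Filt_K$, not only on the top-dimensional strata. Your closing remarks on the piecewise regularity of the global lift are accurate but are not needed for the statement.
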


In words, \cref{cor:global_lift} ensures that global lifts for PD-valued maps associated to parametrized families of filtrations can be obtained by computing all the permutations (i.e., the indices of the paired and unpaired simplices) of the different stratum of $\Omega(\Filt_K)$. This lift is global, as, for a given parameter $\theta$, it suffices to identify the strata to which $F(\theta)$ belongs to, and then to apply the corresponding permutation of the entries of $\bR^{|K|}$ to obtain the persistence diagram. This observation is key in the implementation of several gradient descent schemes for topological optimization (see for instance ``Sampling strata'' paragraph in Section~\ref{subsec:SGS}).

\subsection{Stratified filtrations and directional differentiability}\label{subsec:strat_filt}

In the previous section, we have seen that global lifts of PD-valued maps can be computed through a stratification of the filtration space $\Filt_K$. In this section, we
now study the case where the domain $M$ of the filtration map $F \colon M\to\Filt_K$ itself can be stratified, as this is the case for most common filtrations in TDA.

\begin{definition}
    Let $M$ and $N$ be manifolds endowed with stratifications $\cS_M$ and $\cS_N$, respectively (see \cref{def:stratification_of_manifold}).
    A map $f \colon M \to N$ is said to be \emph{weakly stratified} if for any $N_j \in \cS_N$ the preimage $f^{-1}(N_j)$ is a union of strata in $\cS_M$.

    A function $f \colon M \to \bR$ is said to be \emph{Whitney stratifiable} if its graph admits a Whitney stratification.
\end{definition}

\begin{proposition}[{\cite[Prop.~4.16]{LOT2022}}]\label{prop:semi_sub}
    Let $F \colon M \to \Filt_K$ be a continuous parameterized family of filtrations. 
    Suppose that $M$ is a semi-algebraic (resp.~compact subanalytic) set in $\bR^d$ and $F$ is a semi-algebraic (resp.~subanalytic) map.
    Then, there is a stratification of $M$ by semi-algebraic (resp.~subanalytic) sets such that the restriction of $F$ to each stratum is $C^\infty$.
\end{proposition}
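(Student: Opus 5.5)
The plan is to reduce the statement to standard results from o-minimal geometry: the existence of $C^\infty$ (in fact $C^k$ for every $k$) cell decompositions, or equivalently Whitney stratifications, compatible with a semi-algebraic (resp.\ subanalytic) map. I treat the semi-algebraic case first and then indicate the changes for the subanalytic case.

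The first step is to place everything in the tame category. $F \colon M \to \Filt_K \subseteq \bR^{|K|}$ is semi-algebraic, so its graph $\Gamma_F \subset \bR^d \times \bR^{|K|}$ is a semi-algebraic set. Since $F$ is continuous and $M$ is semi-algebraic, $\Gamma_F$ is homeomorphic to $M$ via the projection $\pi \colon \Gamma_F \to M$, and this projection is itself semi-algebraic.

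The second step is to stratify the graph. By the existence of Whitney stratifications of semi-algebraic sets, $\Gamma_F$ admits a finite partition into connected semi-algebraic $C^\infty$ submanifolds $\Gamma_i$ (Łojasiewicz; Bochnak--Coste--Roy, Thm.~9.7.11). We refine this partition so that each $\pi|_{\Gamma_i}$ is a $C^\infty$ diffeomorphism onto its image. Concretely, we use a cell decomposition of $\bR^d\times\bR^{|K|}$ adapted to $\Gamma_F$ and compatible with the coordinate projection onto $\bR^d$. Each cell of $\Gamma_F$ is then the graph over a cell $C\subseteq\bR^d$ of a map that is $C^\infty$ on $C$, because cells are defined through $C^\infty$ (Nash) functions. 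Projecting gives a finite partition $M=\bigsqcup_i M_i$, where $M_i=\pi(\Gamma_i)$ are semi-algebraic $C^\infty$ submanifolds. On each $M_i$, $F|_{M_i}=\mathrm{pr}_2\circ(\pi|_{\Gamma_i})^{-1}$ is $C^\infty$.

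The third step is the frontier condition, so that the family is an actual stratification in the sense of \cref{def:stratification_of_manifold}. This can be obtained from a stratification of $M$ compatible with the finite family $\{M_i\}$, again a standard semi-algebraic statement. Refining preserves smoothness of $F$ on each piece, since restricting to a submanifold keeps it $C^\infty$.

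For the subanalytic case, the same argument runs with subanalytic cell decomposition/stratification (Hironaka, Łojasiewicz). Compactness of $M$ is exactly what is needed there: it makes $\Gamma_F$ relatively compact, hence globally subanalytic, so that the stratifications are locally finite (indeed finite). The main obstacle is the second step: we need the decomposition to yield pieces on which $\pi$ is a diffeomorphism rather than merely a homeomorphism. The key fact to invoke carefully is that cell decompositions can be chosen to be $C^\infty$ and compatible with projections. Once that is granted, the rest is bookkeeping.
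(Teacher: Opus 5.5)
Your argument is correct. The survey itself gives no proof of this statement; it imports it from \cite{LOT2022}, where it is deduced from standard stratification results for semi-algebraic and subanalytic maps. What you do is essentially reprove that standard result: a $C^\infty$ cylindrical cell decomposition of the graph $\Gamma_F$, followed by a Whitney refinement. This is longer, but it shows exactly where each hypothesis enters.

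Two steps deserve one more sentence than you give them. First, each cell contained in $\Gamma_F$ is the graph of a $C^\infty$ map over a cell of $\bR^d$ because $\Gamma_F$ meets every fibre $\{x\}\times\bR^{|K|}$ in at most one point. So such a cell cannot be a band in any of the last $|K|$ coordinates, and it is therefore an iterated graph of $C^\infty$ cell-defining functions. Second, in the subanalytic case it is continuity of $F$ on the compact set $M$ that makes $\Gamma_F$ compact, hence globally subanalytic (definable in $\bR_{\mathrm{an}}$). You then need the analytic cell decomposition available for that particular structure, since a general o-minimal structure only guarantees $C^p$ cell decompositions for each finite $p$.

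Finally, the downstream use of this proposition also requires $F$ to be weakly stratified with respect to $\Omega(\Filt_K)$; this is what applying \cref{thm:stratified_differentiable} in the corollary that follows needs. Your argument gives this at no extra cost. Add the finitely many semi-algebraic (resp.\ subanalytic) sets $F^{-1}(S)$, $S\in\Omega(\Filt_K)$, to the family with which the stratification in your third step is made compatible.
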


\begin{theorem}[{\cite[Thm.~4.19]{LOT2022}}]\label{thm:stratified_differentiable}
    Let $M$ be a manifold endowed with a Whitney stratification $\cS$.
    Let $F \colon M \to \Filt_K$ be a continuous parameterized family of filtrations such that:
    \begin{enumerate}
        \item $F$ is a weakly stratified map with respect to $\cS$ and $\Omega(\Filt_K)$,
        \item the restriction of $F$ to each stratum in $\cS$ is $C^r$, and 
        \item for any $\theta \in M$ and any stratum $M_i$ incident to $\theta$, there is an open neighborhood $U$ of $\theta$ such that $F|_{M_i \cap U}$ extends to a $C^r$ map $U \to \bR^K$.
    \end{enumerate}
    Then, at any $\theta \in M$, the PD-valued map $\PersMap_p \colon M \to \pdspace$ (defined in~\cref{eq:B_p}) is $r$-differentiable along each stratum that is incident to $\theta$.
\end{theorem}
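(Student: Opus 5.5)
Fix $\theta \in M$ and a stratum $M_i \in \cS$ incident to $\theta$, so $\theta \in \overline{M_i}$. We must exhibit a $C^r$ local lift of $\PersMap_p$ restricted to $M_i$ near $\theta$. Here $r$-differentiability along $M_i$ means that $\PersMap_p|_{M_i \cap U}$ is the restriction of $Q_{m,n}\circ \tilde B$ for some $C^r$ map $\tilde B$ defined on an open neighborhood $U$ of $\theta$.

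The plan mirrors the proof of \cite[Thm.~4.7]{LOT2022} reproduced above, with ordering equivalence on a full neighborhood replaced by ordering equivalence along the stratum.

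\textbf{Step 1: ordering equivalence along $M_i$.} Since $F$ is weakly stratified with respect to $\cS$ and $\Omega(\Filt_K)$, each $F^{-1}(S)$, for $S \in \Omega(\Filt_K)$, is a union of strata of $\cS$. Hence the connected stratum $M_i$ lies inside a single $F^{-1}(S)$. Concretely, all $F(\theta')$ with $\theta' \in M_i$ are ordering equivalent (\cref{def:order_eq}). By \cref{proposition:order_persistence_pairs}, they then share the same persistence pairs $P_p$ and unpaired simplices $U_p$; write $(\sigma_k,\sigma'_k)_{k=1}^m$ and $(\tau_j)_{j=1}^n$.

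\textbf{Step 2: a formula on $M_i$.} For every $\theta' \in M_i$, the persistence computation gives
\[
\PersMap_p(\theta') = Q_{m,n}\bigl( ([F(\theta')]_{\sigma_k},[F(\theta')]_{\sigma'_k})_{k}, ([F(\theta')]_{\tau_j})_j \bigr).
\]

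\textbf{Step 3: extension to a neighborhood.} By hypothesis (3), there is an open $U \ni \theta$ and a $C^r$ map $G \colon U \to \bR^K$ with $G = F$ on $M_i \cap U$. Define
\[
\tilde B(\theta') = \bigl( ([G(\theta')]_{\sigma_k},[G(\theta')]_{\sigma'_k})_{k}, ([G(\theta')]_{\tau_j})_j \bigr).
\]
This is $C^r$ on $U$, since it is a coordinate projection of $G$. On $M_i \cap U$ it satisfies $Q_{m,n}\circ\tilde B = \PersMap_p$, so it is the required lift along $M_i$. Its differential, restricted to $T M_i$, is read off as in \cref{prop:gradpd}.

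\textbf{Main obstacle: the case $\theta \notin M_i$.} When $\theta$ lies only in the closure of $M_i$, the lift $\tilde B$ is still defined and $C^r$ at $\theta$. However, we must also check that $Q_{m,n}(\tilde B(\theta)) = \PersMap_p(\theta)$, i.e., that the pairing of $M_i$ still computes the diagram at the boundary point. This is where continuity of $F$ and of $\Pers$ enters:

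\begin{itemize}
\item Take $\theta'_k \to \theta$ with $\theta'_k \in M_i$.
\item By continuity, $G(\theta) = \lim_k F(\theta'_k) = F(\theta)$.
\item By stability (\cref{th:stability}), $\Pers_p(F(\theta'_k)) \to \Pers_p(F(\theta))$.
\item By continuity of $Q_{m,n}$, $Q_{m,n}(\tilde B(\theta'_k)) \to Q_{m,n}(\tilde B(\theta))$.
\end{itemize}

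Both limits are therefore the same diagram, but this equality only holds for the bottleneck distance. Points on the diagonal, coming from pairs with $G_\sigma = G_{\sigma'}$ at $\theta$, may appear in $Q_{m,n}(\tilde B(\theta))$ without appearing in $\Dgm_p(F(\theta))$. I would handle this in one of two ways:
\begin{itemize}
\item adopt the convention of \cite{LOT2022} that zero-persistence points are kept when reading $\Dgm$ from a chosen pairing; or
\item observe that $F(\theta)$ lies in the closure stratum, so a tie-breaking total order compatible with the order on $M_i$ yields exactly the pairs $P_p$.
\end{itemize}
The second option is the argument I would try first. It works because a total order refining the pre-order of $F(\theta)$ can be chosen to be the strict order of $M_i$, and the pairing algorithm depends only on the chosen total order.
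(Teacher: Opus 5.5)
Your proposal is correct and follows the same route as the proof in \cite{LOT2022} that the survey cites without reproducing. Weak stratification places $F(M_i)$ in a single ordering-equivalence class; hypothesis (3) supplies the $C^r$ extension, whose coordinates at the fixed paired and unpaired simplices give the lift; and the base point $\theta\in\overline{M_i}$ is handled by your second option.

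Only that second option is needed, and you should state its one-line justification explicitly. Since $[F(\theta')]_\sigma\le[F(\theta')]_{\sigma'}$ for all $\theta'\in M_i$ whenever $\sigma\preceq\sigma'$ along $M_i$, continuity gives the same inequality at $\theta$. Hence any face-compatible total order refining the pre-order of $M_i$ (say this rather than ``strict order'', since $M_i$ may carry ties) is admissible for $F(\theta)$ and returns the same pairs.

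The other routes do not stand on their own. The stability argument cannot detect diagonal points, and your first option (the convention) still requires exactly this closure argument.
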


\begin{corollary}[{\cite[Cor.~4.20]{LOT2022}}]
    Under the assumptions of \cref{prop:semi_sub}, there is a Whitney stratification of $M$ by semi-algebraic (resp.\ subanalytic) subsets, such that $\PersMap_p$ is $\infty$-differentiable on the top-dimensional strata (see \cref{def:stratification_of_manifold}). 
    If furthermore $F$ is $C^r$, then $\PersMap_p$ is everywhere $r$-differentiable along incident strata.
\end{corollary}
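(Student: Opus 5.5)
The plan is to obtain the corollary by combining \cref{prop:semi_sub} with \cref{thm:stratified_differentiable}. The one real extra ingredient is a refinement step, which makes $F$ weakly stratified with respect to $\Omega(\Filt_K)$.

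\textbf{Refining the stratification.} Under the semi-algebraic (resp.\ compact subanalytic) assumptions, \cref{prop:semi_sub} gives a stratification of $M$ on each stratum of which $F$ is $C^\infty$. The strata of $\Omega(\Filt_K)$ are semi-algebraic, since they are cut out by the finitely many hyperplanes $[v]_\sigma=[v]_{\sigma'}$ and their sides. Their preimages $F^{-1}(S)$ are therefore semi-algebraic (resp.\ subanalytic) subsets of $M$, and there are finitely many of them. I would then invoke the standard existence theorem for Whitney stratifications compatible with a finite family of definable sets (semi-algebraic or subanalytic geometry, as in the proof of \cref{prop:semi_sub}). This produces a Whitney stratification $\cS$ of $M$ that simultaneously:
\begin{itemize}
\item refines the stratification from \cref{prop:semi_sub}, so that $F$ is $C^\infty$ on each stratum;
\item is compatible with every $F^{-1}(S)$, so that $F$ is weakly stratified.
\end{itemize}

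\textbf{Top-dimensional strata.} Let $M_i$ be a top-dimensional stratum. It is open in $M$, because a top-dimensional stratum of a Whitney stratification of a manifold is an open submanifold. Moreover, $F|_{M_i}$ is $C^\infty$, and $F(M_i)$ lies in a single stratum of $\Omega(\Filt_K)$ (connectedness of $M_i$ plus weak stratification). So every $\theta\in M_i$ has a neighborhood on which $F(\theta')\sim F(\theta)$, that is, $M_i\subseteq \tilde M$. The local result \cite[Thm.~4.7]{LOT2022} stated above then gives $\infty$-differentiability of $\PersMap_p$ at $\theta$. Equivalently, \cref{prop:gradpd} supplies an explicit $C^\infty$ lift by reading off the entries of the fixed persistence pairs.

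\textbf{Case $F$ of class $C^r$.} Here I would check the three hypotheses of \cref{thm:stratified_differentiable} for $\cS$:
\begin{itemize}
\item Condition 1 is the weak stratification obtained above.
\item Condition 2 holds because $F$ restricted to any stratum is $C^r$, being the restriction of a globally $C^r$ map to a submanifold.
\item Condition 3 is trivial: for any incident stratum $M_i$, the map $F$ itself, restricted to a neighborhood $U$, is a $C^r$ extension of $F|_{M_i\cap U}$.
\end{itemize}
\cref{thm:stratified_differentiable} then yields $r$-differentiability of $\PersMap_p$ along every incident stratum at every $\theta$.

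\textbf{Main obstacle.} The expected difficulty is the refinement step: one needs a Whitney (Condition B) stratification that is simultaneously compatible with the $C^\infty$-regularity partition and with the preimages of the order strata. I would handle this by citing the classical compatible-stratification theorem in the semi-algebraic and subanalytic categories. In the subanalytic case, compactness of $M$ guarantees that the family stays finite and that the resulting stratification is locally finite.
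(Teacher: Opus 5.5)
Your proposal is correct and follows essentially the same route as the cited result in \cite{LOT2022}. You take a semi-algebraic (resp.\ subanalytic) Whitney stratification on whose strata $F$ is smooth and with respect to which $F$ is weakly stratified relative to $\Omega(\Filt_K)$; your refinement step supplies this second property, which the statement of \cref{prop:semi_sub} above does not record. You then get $\infty$-differentiability on the open top-dimensional strata from the constancy of the pre-order there, and check the hypotheses of \cref{thm:stratified_differentiable} using $F$ itself as the $C^r$ extension. One small correction: the family $\{F^{-1}(S)\}$ is finite regardless of compactness, since $K$ admits only finitely many pre-orders. In the subanalytic case, compactness is rather what makes these preimages subanalytic in the first place, because the graph of $F$ is then compact and projecting it is proper.
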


Combining everything, one can finally obtain the differentiability properties of the composite map $\mathcal L = L\circ\PersMap_p$ in the case of parametrized families of filtrations.

We eventually end this section with an additional notion of regularity on the filtration with respect to the stratification, namely stating the existence of a $C^2$ extension. 
It will be used to define the gradient sampling approach of \cite{LCLO2023gradient} presented in \cref{sec:gradient_descent}.

\begin{definition}\label{def:stratifiably_smooth}
	A map $\cL \colon M \to \bR$ is said to be \emph{stratifiably smooth} if there exists a Whitney stratification $\cS$ of $M$ such that for each top-dimensional stratum $M_i \in \cS$, 
	the restriction $\cL|_{M_i}$ admits an extension $\cL_i$ of class $C^2$ in a neighborhood of $M_i$.
\end{definition}

\begin{proposition}\label{prop:2diff}
    Let $L \colon \pdspace \to \bR$ be a $2$-differentiable map, and $F \colon M \to \Filt_K$ be a parameterized family of filtrations 
    satisfying the conditions of \cref{thm:stratified_differentiable} with $r=2$.
    Then, the composite function $\mathcal L = L \circ \Pers_p \circ F$ is stratifiably smooth. 
\end{proposition}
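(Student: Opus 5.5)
The plan is to take for $\cS$ the Whitney stratification of $M$ from the hypotheses of \cref{thm:stratified_differentiable}, and to build, for each top-dimensional stratum $M_i \in \cS$, an explicit $C^2$ extension of $\cL|_{M_i}$ by composing $L$ with a fixed local lift of the persistence map.

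\textbf{Step 1 (fixed pairing on a stratum).} Since $F$ is weakly stratified with respect to $\cS$ and $\Omega(\Filt_K)$, the image $F(M_i)$ of the connected stratum $M_i$ lies in a single ordering-equivalence class, i.e.\ a single stratum of $\Omega(\Filt_K)$. Suppose instead that $F(M_i)$ met two classes. Their preimages in $M_i$ would then be nonempty, disjoint, and each a union of strata. Since $M_i$ is itself a stratum, this is impossible. By \cref{proposition:order_persistence_pairs}, the persistence pairs $P_p$ and unpaired simplices $U_p$ are therefore constant along $M_i$. Consequently, for $\theta' \in M_i$,
\[
\Pers_p(F(\theta')) = Q_{m,n}\bigl(\Phi(F(\theta'))\bigr),
\]
where $\Phi \colon \bR^{|K|} \to \bR^{2m}\times\bR^n$ is the fixed linear coordinate selection
\[
v \mapsto \bigl(([v]_{\sigma},[v]_{\sigma'})_{(\sigma,\sigma')\in P_p},\ ([v]_\tau)_{\tau\in U_p}\bigr),
\]
as in \cref{prop:gradpd} and \cref{prop:permpd}.

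\textbf{Step 2 (extension).} By assumption 3 of \cref{thm:stratified_differentiable}, applied with $r=2$, for each $\theta \in \overline{M_i}$ there is an open neighborhood $U_\theta$ and a $C^2$ map $F_\theta \colon U_\theta \to \bR^{K}$ that agrees with $F$ on $M_i \cap U_\theta$. Define
\[
\cL_{i,\theta} \coloneqq L\circ Q_{m,n}\circ \Phi\circ F_\theta .
\]
Shrinking $U_\theta$ if needed, $\Phi\circ F_\theta$ stays in a neighborhood of a preimage of $\Pers_p(F(\theta))$. On that neighborhood $L\circ Q_{m,n}$ is $C^2$ by \cref{def:differentiable_fromD}, so $\cL_{i,\theta}$ is $C^2$ and coincides with $\cL$ on $M_i\cap U_\theta$.

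One caveat must be handled here. The points of $\Phi(F_\theta(\cdot))$ may leave the half-plane, or reach the diagonal, off $M_i$. So I would either interpret $Q_{m,n}$ on all of $\bR^{2m}\times\bR^n$ and use that $L\circ Q_{m,n}$ is $C^2$ near every lift, or argue that $L\circ Q_{m,n}$ is $C^2$ on an open set of $\bR^{2m}\times\bR^n$ containing the closure of the lifts reached.

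\textbf{Step 3 (gluing, the main obstacle).} Local extensions must be assembled into a single $C^2$ function on a neighborhood of $M_i$. Two such local extensions agree on $M_i$, but not necessarily off it, so a naive union is not well defined. I would first take the cover $\{U_\theta\}$ of $M_i$, restricted to points $\theta \in M_i$, which suffices since the definition only asks for a neighborhood of $M_i$. I would then choose a smooth partition of unity $(\rho_k)$ on $W=\bigcup U_{\theta_k}$ subordinate to this cover and set $\cL_i=\sum_k \rho_k\, \cL_{i,\theta_k}$. This is $C^2$ on $W$, and on $M_i$ it equals $\sum_k\rho_k\,\cL=\cL$. If the definition is instead read as requiring only local extensions near each point of $M_i$, Step 2 already concludes. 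Either way, this yields that $\cL$ is stratifiably smooth.
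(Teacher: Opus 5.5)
The paper states this proposition without proof, and your route is the natural one. Weak stratification freezes the barcode template on a top stratum, and condition 3 supplies local $C^2$ extensions $F_\theta$. Composing with the linear selection $\Phi$ and with $L\circ Q_{m,n}$, then gluing by a partition of unity, gives the extension. Steps 1 and 2 are sound at points $\theta\in M_i$. Your handling of the half-plane caveat also matches \cref{def:differentiable_fromD}, which already treats $L\circ Q_{m,n}$ as defined on open subsets of $\bR^{2m}\times\bR^n$.

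The problem is the choice in Step 3 to index the cover by points of $M_i$ only. A top-dimensional stratum of a Whitney stratification of $M$ has dimension $\dim M$; otherwise $M$ would be a countable union of lower-dimensional submanifolds. So it is open in $M$. Under your reading, $M_i$ is its own neighborhood, Step 2 at points of $M_i$ already finishes, and condition 3 of \cref{thm:stratified_differentiable} is never used. What you prove is therefore correct but vacuous.

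The content that stratified gradient descent (\cref{subsec:SGS}) relies on, and the reason condition 3 is assumed, is a $C^2$ extension of $\cL|_{M_i}$ around every point of $\overline{M_i}$. One needs $\nabla\cL_i$ at points of $\overline{M_i}\setminus M_i$ and nearby. Your partition of unity gives this verbatim once you cover $\overline{M_i}$ by $\{U_\theta\}_{\theta\in\overline{M_i}}$.

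However, Step 2 must then be justified at $\theta\in\overline{M_i}\setminus M_i$. There $F(\theta)$ lies in a different ordering class, so Step 1 does not tell you that $\Phi(F(\theta))$ is a lift of anything. The fix is short:
\begin{itemize}
\item $F_\theta(\theta)=F(\theta)$ by continuity of $F$ and $F_\theta$.
\item For each $(\sigma,\sigma')\in P_p$, the inequality $[F(\theta')]_\sigma\le[F(\theta')]_{\sigma'}$ holds on $M_i$, hence at $\theta$.
\item So $\Phi(F(\theta))\in Q_{m,n}^{-1}(\pdspace)$, and $L\circ Q_{m,n}$ is $C^2$ near it because $L$ is $2$-differentiable at \emph{every} diagram.
\end{itemize}
In fact $\Phi(F(\theta))$ lifts $\Pers_p(F(\theta))$ itself. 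Since $F(\theta)$ lies in the closure of the ordering class of $F(M_i)$, any total order refining that class's pre-order also refines the pre-order of $F(\theta)$.
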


\subsection{Examples of stratified filtrations and their differentials}\label{subsec:example_filtrations}

As in Section~\ref{subsec:example_losses}, we now review the differentiability properties of the common filtrations in TDA presented in Section~\ref{subsec:background}, and we explicit their differentials.

\begin{example}[Vietoris--Rips filtration, \cref{def:vr}]
\label{example:vr_stratified}
    Let $\Delta_n$ is the complete simplicial complex on $n$ vertices, i.e., that comprises all the faces of the $(n-1)$-dimensional simplex. 
    The family of Vietoris--Rips filtrations on (ordered) point clouds of $n$ points $(x_1, \dots, x_n) \in (\bR^d)^n \eqqcolon M$ is the semi-algebraic parametrized family of filtrations:
    \begin{align}
        F \colon M  \to  \bR^{|\Delta_n|} = \bR^{2^n-1},
    \end{align}
    defined, for any ordered point cloud $X=(x_1, \dots, x_n) \in M$ and any simplex $\sigma \subseteq \{1, \dots, n\}$, by:
    \begin{align}
        [F(X)]_\sigma = \max_{i,j \in \sigma} \frac 12 \| x_i - x_j \|.
    \end{align}
    One can easily check that the permutation induced by $\Pers$ is constant on the strata whose boundaries are the subspaces $S_{i,j,k,l} = \{(x_1, \dots, x_n) \subset\bR^d \,|\, \| x_i - x_j \| = \| x_k - x_l \| \}$ over all the $4$-tuples $(i,j,k,l)$ such that at least three of the four indices $i,j,k,l$ are distinct. Then, 
    for a specific simplex $\sigma$ such that $[F(X)]_\sigma = \max_{i,j \in \sigma} \frac 12 \| x_i - x_j \|=\|x_{i^*} - x_{j^*}\|$, one can check that 
    the differential of $F$ is equal to $[\dd_X F(\cdot)]_\sigma=\langle \nabla_X [F(X)]_\sigma, \cdot\rangle$ with:
    \begin{equation}
        \nabla_X [F(X)]_\sigma=[{\bf 0},\dots,{\bf 0},\underbrace{\frac{x_{i^*}-x_{j^*}}{\|x_{i^*}-x_{j^*}\|}}_{{\rm index}\ i^*},{\bf 0},\dots,{\bf 0},\underbrace{\frac{x_{j^*}-x_{i^*}}{\|x_{i^*}-x_{j^*}\|}}_{{\rm index}\ j^*},{\bf 0},\dots,{\bf 0}]\in(\bR^d)^n.
    \end{equation}
    Finally, within a given stratum, computing the differential of $\Pers$ simply amounts to permuting the entries of $\dd_X F(\cdot)$ according to the paired and unpaired simplices of the stratum.

    This example naturally extends to general Vietoris--Rips filtrations for metric spaces in the following way. 
    Let $M \subset \cM_n(\bR)$ be the set of $n \times n$ symmetric matrices with non-negative entries and $0$ on the diagonal.
    This is a semi-algebraic subset of the space of $n$-by-$n$ matrices $\cM_n(\bR) \simeq \bR^{n^2}$, of dimension $m = (n-1)(n-2)/2$. 
    The map $F \colon M \to \bR^{|\Delta_n|} = \bR^{2^n-1}$ defined by
    $[F(A)]_\sigma = \max_{i,j \in \sigma} a_{i,j}$ for any $A = (a_{i,j})_{1 \leq i,j \leq n} \in M$, is a semi-algebraic family of filtrations. 
    Moreover, the set of strata can be chosen to be the set of matrices with at least two equal entries.
\end{example}

\begin{remark}
    The differentiability of the Vietoris--Rips filtrations was first considered in \cite{gameiro2016continuation}, which raised the first problem in the TDA literature that involved differentiating persistence diagrams.    
    The problem is as follows: 
    given an ordered point cloud $X=(x_1, \ldots, x_n) \in (\bR^d)^n$ and target persistence diagram $\alpha_0$, move $X$ continuously in order to make the persistence diagram $\Dgm_{\rm VR}(X)$ (computed with the Vietoris--Rips filtration, with $X$ regarded as a usual point cloud) closer to $\alpha^\ast$. 
    The authors proposed the so-called \emph{continuation method} to solve this problem, which is based on the Newton--Raphson method.
    
    We modify their approach and propose to (try to) build an ordered point cloud $X$ with a prescribed diagram $\alpha^*$, with the Jacobian $D_X \Pers$ of the map $\Pers \colon (\bR^d)^n \ni X \mapsto \DgmRips(X) \in \pdspace$, in the following way. 
    Denoting by $X_0$ the initial point cloud, we repeatedly update an ordered point cloud by $X_k + \gamma_k (D_X \Pers(X_k))^\dagger v_k$ for some vector $v_k$ in each step $k$, where $\gamma_k$ is a step size. 
    Here $v_k$ is defined with an optimal partial matching between $\Pers(X_k)$ and $\alpha^*$ with respect to $\FG_2$.
    Unlike the original approach in \cite{gameiro2016continuation}, we recompute the partial matching in each step, which enables to adaptively obtain the direction to update the point cloud.
    The detail of our implementation is presented in \cref{alg:continuation}.

    \begin{algorithm}
    \caption{Continuation of Point Clouds via Persistence Diagrams \cite{gameiro2016continuation}}
    \label{alg:continuation}
    \begin{algorithmic}[1]
        \STATE{\textbf{Input:} Initial ordered point cloud $X_0$, target diagram $\alpha^\star$, order of diagram distance $q$, number of steps $N$, a sequence of step sizes $\{\gamma_k\}_{k=0}^{N-1}$}
        \FOR{$k = 0$ to $N-1$}
        \STATE Set $\alpha_k \gets \Pers(X_k)$;  \quad \# Recall $\Pers(\cdot) = \DgmRips(\cdot)$ here
        \STATE Take $\pi_k$ be an optimal partial matching between $\alpha_k$ and $\alpha^\star$ with respect to $\FG_q$ (see \cref{def:dist_wasserstein});
        \STATE Take an arbitrary ordering $(y[1],\dots,y[m])$ of the points in $\alpha_k$ and regard it as a vector;
        \STATE Define a vector $v_k$ by setting the $i$-th element as 
        \begin{equation}
            v_k[i] \gets \pi_k(y[i]) -y[i] \in \bR^2;
        \end{equation}
        \STATE Compute Jacobian $J_k \gets D\Pers(X_k)$;    
        \STATE Update an ordered point cloud using the pseudo-inverse $J_k^\dagger$ by
        \begin{equation}
            X_{k+1} \gets X_k + \gamma_k J_k^\dagger v_k;
        \end{equation}
        \ENDFOR
        \RETURN $X_N$
    \end{algorithmic}
\end{algorithm}

\end{remark}

\begin{example}[Weighted Rips filtration]\label{ex:wrfilt}
Weighted Rips filtrations are a generalization of Vietoris--Rips filtrations where weights are assigned to the vertices of the complete simplicial complex $\Delta_n$. 
Given a function $f \colon \bR^d \to \bR$, the family of weighted Rips filtrations $F \colon M = (\bR^d)^n  \to  \bR^{|\Delta_n|} = \bR^{2^n-1}$ associated 
with $f$ is defined, for any $X=\{x_1, \dots, x_n\} \in M$ and any simplex $\sigma \subseteq \{1, \dots, n \}$, by:
\begin{equation}\label{eq:triangle}
[F(X)]_\sigma =
\begin{cases}
 2 f(x_j) & (\sigma = \{j\}); \\
 \max \{2 f(x_i), 2 f(x_j), \| x_i - x_j \| + f(x_i) + 
 f(x_j)\}, & (\sigma = \{i,j\}, i \not = j); \\
 \max \{[F(X)]_{\{i,j\}}\,|\, i,j \in \sigma\} & (|\sigma| \geq 3).
\end{cases}
\end{equation}
\end{example}

Since Euclidean distances and $\max$ function are semi-algebraic, this family of filtrations is semi-algebraic as soon as the weight function $f$ is semi-algebraic. Moreover, the differential of $F$ can be easily computed in a way that is similar to the Vietoris--Rips filtrations (one just needs to distinguish among the three cases in Equation~\eqref{eq:triangle}, and to add the differentials $\dd_{x_i}f(\cdot)$, $\dd_{x_j}f(\cdot)$ of $f$ when needed). The differential of $\Pers$ then follows by applying the stratum-specific permutation.

This example easily extends to the case where the weight function depends on the point cloud $X = \{ x_1, \dots, x_n \}$, i.e., when the weight at vertex $y$ is 
defined by $f(x,y)$ with $f \colon (\bR^d)^n \times \bR^d \to \bR$. 
A particular example of such a family is given by the so-called DTM filtration \cite{anai2020dtm}, where $f(x,y)$ is the average distance from $y$ to its $k$-nearest neighbors in $X$. 
In this case, $f$ is semi-algebraic, and the family of DTM filtrations is semi-algebraic.   

\begin{example}[Sublevel sets filtrations, \cref{example:height_filtration,ex:graph_filt,ex:img_filt}]\label{ex:sublevel}
    Let $K$ be a simplicial complex with $n$ vertices $v_1, \dots, v_n$. Any real-valued function $f$ defined on the vertices of $K$ can be represented as a vector $[f(v_1), \dots, f(v_n)]$ in $\bR^n$. 
    The family of sublevel sets filtrations $F \colon M=\bR^n \to \Filt_K$ of functions on the vertices of $K$ is defined by $[F(f)]_\sigma = \max_{i \in \sigma} f_i$ for any $f = [f_1, \dots, f_n] \in M$ and any simplex $\sigma \subseteq \{1, \dots, n\}$.
    This filtration is also known as the \emph{lower-star filtration} of $f$, and is a very general way of designing filtrations: it includes, for instance, the height filtration (\cref{example:height_filtration}), filtrations on graphs (using graph nodes as vertices, \cref{ex:graph_filt}) and filtrations on images (using pixel corners as vertices, \cref{ex:img_filt}). 
    The function $F$ is obviously semi-algebraic, and the set of strata $S$ can be chosen as $S = \bigcup_{1 \leq i < j \leq n} \{ [f_1, \dots, f_n] \in M \,|\, f_i = f_j \}$. Moreover, 
    for a specific simplex $\sigma$ such that $[F(f)]_\sigma = \max_{i \in \sigma} f_i = f_{i^*}$,
    the differential of $F$ is simply obtained as $[\dd_f F(\cdot)]_\sigma=[\cdot]_{i^*}$, i.e., picking the $i^*$-th entry of the vector. The differential of $\Pers$ then follows by applying the stratum-specific permutation.
    
    Note however that in many cases, the function $f$ itself depends on some parameters $f=f(\theta)$, and the differential of $f$ with respect to its own parameters has to be incorporated in the chain rule. This happens, e.g., when graph filtrations are learned by a graph neural network, as in~\cite{horn2021topological}, or when image filtrations are learned by a convolutional neural network, as in~\cite{Barbarani2024}.
\end{example}

\section{Optimizing persistence-based objective functions}\label{sec:gradient_descent}

In Section~\ref{sec:differentiability}, we presented the differentiability properties of the persistence maps $\Pers$ and filtration maps $F$. Now, in this section, our aim is to present the different 
optimization schemes that have been proposed in the literature, as well as their theoretical guarantees, for minimizing composite topological losses $\cL$ of the form 
$\cL \coloneqq L \circ \Pers \circ F \colon M \to \bR$, defined on a parameter manifold $M$.
As one typically performs topological optimization while targeting downstream machine learning tasks, and due to the prominence of gradient-based optimization through \emph{automatic differentiation} in deep learning, it is not surprising that most if not all techniques developed in the TDA literature consist on performing schemes akin to gradient descent. 

Below, we detail the most standard stochastic gradient descent scheme~\cite{Carriere2021a, subgradient} in Section~\ref{subsec:vanilla},
that we call {\em vanilla} gradient descent, as well as its convergence properties. 
Then, we explain two other schemes that both aim at improving it: 
the {\em stratified} gradient descent~\cite{LCLO2023gradient} with its improved convergence properties in Section~\ref{subsec:SGS},
and the {\em big-step} gradient descent~\cite{nigmetov2024topological} with its faster empirical convergence in Section~\ref{subsec:BSGS}.
Finally, we describe two extensions that can both be applied to any of the three previous schemes:
{\em downsampling}~\cite{solomon2020fast,Wagner2021}, and {\em diffeomorphic interpolations}~\cite{Carriere2024}, in Section~\ref{subsec:extensions}.
See \Cref{fig:exgrad} for a schematic overview of these methods.

\begin{figure}[ht]
	\centering
	\includegraphics[width=.95\textwidth]{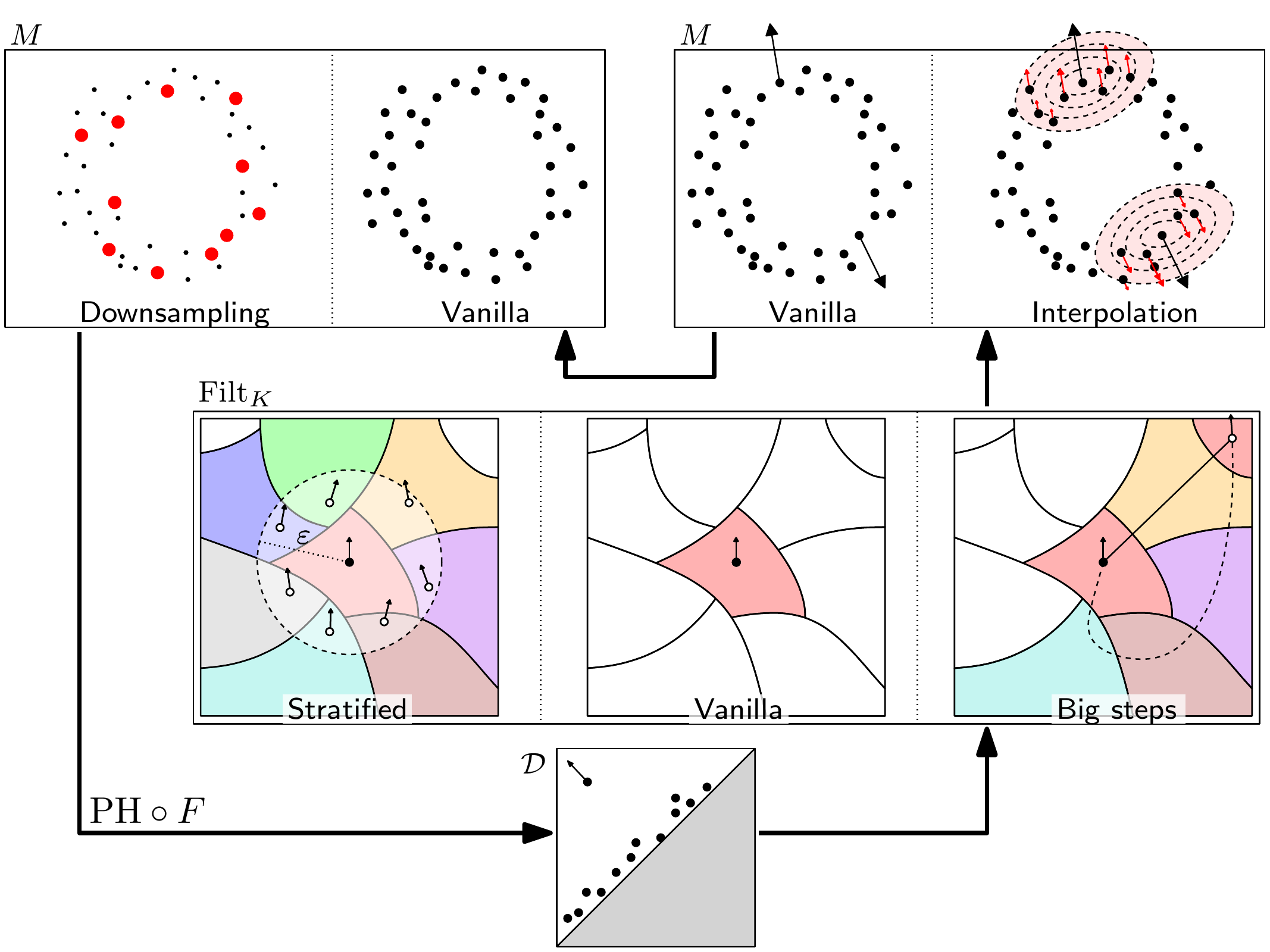}
	\caption{\label{fig:exgrad} Schematic illustration of the different gradient schemes  presented in this survey (for point cloud filtrations).
	The space of filtrations $\Filt_K$ is displayed in two dimensions with strata separated with solid lines. While stratified gradient descent aggregates the gradients associated to strata intersecting an open ball around the current estimate, big step gradient descent proposes a way to directly jump to the next iterate without going through other, intermediate strata (as vanilla gradient descent---the dashed line---would). Two possible extensions are also displayed: downsampling, which uses a smaller dataset in order to save computation time for computing persistence diagrams in the forward pass, and diffeomorphic interpolation, which extends the gradients computed in the backward pass to a vector field (defined everywhere in the parameter space) using kernels.}
\end{figure}

\subsection{Vanilla gradient descent}\label{subsec:vanilla}

In this section, we present the most natural way to minimize a topological loss based on PDs. 
Let $K$ be a simplicial complex, $M$ be an open subset of $\bR^d$, and $F \colon M \to \Filt_K$ be a parametrized family of filtrations of class $C^\infty$.
Let $L \colon \pdspace \to \bR$ be a 2-differentiable map, and define a loss function
$\cL$ with $\cL\coloneqq L \circ \Pers \circ F \colon M \to \bR$.
As explained in~\cref{sec:differentiability} (see, e.g.,~\cref{rem:rademacher} and~\cref{prop:2diff}), 
the loss $\cL$ is often smooth {\em almost everywhere} on $M$, so one can still define its {\em Clarke subgradient}: 
\begin{definition}
The {\em Clarke subgradient} of $\cL$ at $\theta \in M$ is:
\begin{equation}
    \partial \cL(\theta) \coloneqq \mathrm{Conv} \left\{ \lim_{\theta_i \to \theta} \nabla \cL(\theta_i) \relmid \text{$\cL$ is differentiable at $\theta_i$} \right\}.
\end{equation}
\end{definition}

In the following, we will refer to any element $g_\theta\in\partial \cL(\theta)$ as a {\em vanilla gradient} for $\cL$ at $\theta$ (as opposed to the 
other gradient alternatives presented in Sections~\ref{subsec:SGS} and~\ref{subsec:BSGS}). 
Stochastic (sub)gradient descent is then performed by progressively updating an estimation $\theta_k$ of a minimizer $\theta^*$ of $\cL$ on $U$ with: 
\begin{equation} \label{eq:stochastic_gradient}
    \theta_{k+1} = \theta_k - \gamma_k ( g_{\theta_k} + \zeta_k),\quad g_{\theta_k} \in \partial \cL(\theta_k),
\end{equation}
where the sequence $\{\gamma_k\}_k$ is the {\em learning rate} of the process, and $\{\zeta_k\}_k$ is a sequence of random variables
used for robustifying the estimations; e.g.~$\zeta_k \overset{\text{i.i.d.}}{\sim} \cN(0,I)$.

\paragraph*{Implementation.} In practice, it can be computed simply by composing the differential
of $L$ with the differential of the persistence map---which amounts to finding the permutation associated to the current persistence
pairs and unpaired simplices computed with~\cref{alg:persistent-pairs},\footnote{Most TDA libraries include such functions, see for instance the \texttt{persistence\_pairs()} method in the \Gudhi\ library (\url{https://gudhi.inria.fr/python/latest/}).} as per~\cref{prop:gradpd}---and the differential of the filtration map (see, e.g., Section~\ref{subsec:example_filtrations} for examples of such differentials).
See~\cref{alg:VanillaGradient}.

\begin{algorithm}[ht]
	\caption{$\mathtt{VanillaGradient}(\theta)$}
    \label{alg:VanillaGradient}
	\begin{algorithmic}
		\STATE{\textbf{Input}: Current iterate $\theta\in M$} 
        \STATE \text{\# First step is to identify the critical simplices associated to the persistence map in the current}
        \text{filtration}
		\STATE $((\sigma_1, \sigma_2),\dots,(\sigma_{2m+1},\sigma_{2m})), (\tau_1,\dots,\tau_n)\gets \texttt{PersistencePairs}(F(\theta))$;
        \STATE \text{\# Use these critical simplices as a local lift and compute the differential as per \cref{prop:gradpd}}
        \STATE \text{\# Recall that $[v]_{\sigma}$ denotes the entry of $v$ at the position of $\sigma$ in an arbitrary simplex ordering}
        \STATE $\alpha \gets [[F(\theta)]_{\sigma_1},\dots,[F(\theta)]_{\sigma_{2m}},[F(\theta)]_{\tau_{1}},\dots,[F(\theta)]_{\tau_{n}}]\in\bR^{2m+n}$;
        \STATE $\nabla_\theta [\alpha]_i \gets \nabla_\theta [F(\theta)]_{\sigma_i}, \forall 1\leq i\leq 2m$; 
        \STATE $\nabla_\theta [\alpha]_{2m+j} \gets \nabla_\theta [F(\theta)]_{\tau_j}, \forall 1\leq j\leq n$; 
        \STATE \text{\# Finally, compute the vanilla gradient with the chain rule as per~\cref{prop:chain}}
        \STATE $g_\theta \gets \sum_{i=1}^{2m+n} \frac{\partial L}{\partial [\alpha]_i}([\alpha]_i)\cdot \nabla_\theta [\alpha]_i$;
        \RETURN $g_\theta$
	\end{algorithmic}
\end{algorithm}

Usually, the gradients $\nabla_\theta [F(\theta)]_\sigma$ and the partial derivatives $\frac{\partial L}{\partial [\alpha]_i}$ can be either computed explicitly 
as they are easy to derive from the usual losses and filtrations in the TDA literature 
(see Sections~\ref{subsec:example_losses} and~\ref{subsec:example_filtrations}),
or they can be automatically computed with backpropagation using standard deep learning libraries such as \texttt{PyTorch}, \texttt{TensorFlow} or \texttt{jax},
as long as the filtration map $F$ and the loss $L$ are implemented with operations from these libraries.  

\paragraph*{Convergence guarantees.} It has been shown in~\cite{subgradient} that under mild technical conditions on the sequences $\{\alpha_k\}_k$ and $\{\zeta_k\}_k$, 
stochastic subgradient descent converges almost surely to a critical point of $\cL$ as soon as $\cL$ is locally Lipschitz. 
More precisely, consider the following standard assumptions (see \cite[Assumption~C]{subgradient}):
\begin{enumerate}
    \item[(1)] for any $k$, $\alpha_k \geq 0$, $\sum_{k=1}^\infty \alpha_k = +\infty$, and $\sum_{k=1}^\infty \alpha_k^2 < +\infty$;
    \item[(2)] $\sup_k \| \theta_k \| < +\infty$, almost surely;
    \item[(3)] denoting by $\cF_k$ the increasing sequence of $\sigma$-algebras $\cF_k = \sigma(\{\theta_j, g_{\theta_j},\zeta_j \,|\, j < k\})$, there exists a function $p \colon \bR^d \to \bR$ which is bounded on bounded sets such that almost surely, for any $k$, 
    \begin{align}
        \bE[\zeta_k|\cF_k] = 0 \quad \text{and} \quad 
    \bE[\| \zeta_k \|^2 |\cF_k] < p(\theta_k).
    \end{align}
\end{enumerate}

Assumption~(1) is easily satisfied, for example, by taking $\alpha_k = 1/k$. 
Assumption~(2) is usually easy to check for most of the functions $\cL$ encountered in practice (see Section~\ref{subsec:example_losses}). 
Assumption~(3) is also a standard condition, which states that, conditioned upon the past, the variables $\zeta_k$ have zero mean and controlled moments; e.g., this can be achieved by taking a sequence of independent and centered variables with bounded variance that are also independent of the $\theta_k$'s and $g_{\theta_k}$'s.   

Under these assumptions, the following result is an immediate consequence of Corollary~5.9 in \cite{subgradient}. 

\begin{theorem}\label{thm:convergence_guarantees}
    Let $K$ be a simplicial complex, $M \subseteq \bR^d$ be an open subset, and $F \colon M \to \Filt_K$ be a Whitney stratifiable parametrized family of filtrations of $K$. 
    Let $L \colon \pdspace \to \bR$ be a function such that $L \circ Q_{m_p,n_p}$ is Whitney stratifiable (e.g.,~$L$ is $2$-differentiable in the sense of \cref{def:differentiable_fromD}) and $\cL \coloneqq  L \circ \Pers \circ F$ is locally Lipschitz.  
    Then, under Assumptions~(1), (2) and (3) above, almost surely the limit points of the sequence $\{\theta_k\}_k$ obtained from the iterations of \cref{eq:stochastic_gradient} 
    are critical points of $\cL$ and the sequence $\{\cL(\theta_k)\}_k$ converges. 
\end{theorem}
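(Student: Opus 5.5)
The plan is to reduce the statement to Corollary~5.9 of \cite{subgradient}. That result says the following: if a function $f\colon\bR^d\to\bR$ is locally Lipschitz and Whitney stratifiable (its graph admits a Whitney stratification), and Assumptions~(1)--(3) hold for the iteration \eqref{eq:stochastic_gradient}, then almost surely every limit point of $\{\theta_k\}_k$ is Clarke-critical and $\{f(\theta_k)\}_k$ converges. Local Lipschitzness of $\cL$ is a hypothesis, and the assumptions on the iteration are exactly (1)--(3). The whole argument therefore comes down to showing that $\cL = L\circ\Pers\circ F$ is Whitney stratifiable, i.e.\ that its graph admits a Whitney stratification. (If $M\neq\bR^d$ is open, we work locally, or extend $\cL$ outside $M$; Assumption~(2) keeps the iterates bounded, so this is harmless.)

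To get stratifiability of $\cL$, I would write it piecewise using the global lift of \cref{prop:permpd} and \cref{cor:global_lift}. The space $\Filt_K$ is cut by the finitely many hyperplanes of $\mathcal F$ into the semi-algebraic strata of $\Omega(\Filt_K)$, and these are finitely many. On each stratum $S$, the map $\Perm|_S$ coincides with a fixed coordinate permutation $P_S$, a linear map. Hence on $F^{-1}(S)$ we have
\begin{equation}
\cL = \Big(\textstyle\prod_p L\circ Q_{m_p,n_p}\Big)\circ P_S\circ F ,
\end{equation}
where, for a single homology dimension, we just keep the relevant factor. Each piece is a composition of Whitney stratifiable maps with a linear map. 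The sets $F^{-1}(S)$ are the preimages under $F$ of sets defined by finitely many equalities and inequalities between coordinates of $F$, so they are compatible with a stratification of $F$.

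The core step is then a stability argument: we need Whitney stratifiability to be preserved under composition, under linear maps, under restriction to finitely many pieces, and under gluing those pieces. The cleanest route, which I would take, is to work in an o-minimal structure, as \cite{subgradient} does for definable functions. First, I would read ``Whitney stratifiable'' for $F$ and $L\circ Q_{m_p,n_p}$ as definable in a common o-minimal structure. This covers the semi-algebraic and subanalytic cases of \cref{prop:semi_sub}, and the $2$-differentiable losses used in practice. Second, definable functions are closed under composition and finite piecewise definition, and $\mathcal F$ together with the pieces $F^{-1}(S)$ is definable. So $\cL$ is definable. Third, by the definable Whitney stratification theorem, the graph of $\cL$ admits a Whitney stratification.

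I expect the main obstacle to be exactly this point. With only abstract Whitney stratifiability, composition is not automatically stratifiable, since preimages of strata need not form a stratification. Before invoking the o-minimal tools, I would therefore need to make the definability assumption explicit. The alternative is a direct argument with \cref{thm:stratified_differentiable}: it refines the strata so that $F$ is weakly stratified with respect to $\Omega(\Filt_K)$, and on each refined stratum $\cL$ is then a smooth composition of stratified maps. Once stratifiability is settled, Corollary~5.9 of \cite{subgradient} gives both conclusions directly.
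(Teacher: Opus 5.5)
Your proposal follows the paper's route: the paper simply declares the theorem an immediate consequence of Corollary~5.9 of \cite{subgradient}. Your piecewise description of $\cL$ via the global lift of \cref{cor:global_lift} supplies the one check the paper leaves implicit, namely that $\cL$ itself is Whitney stratifiable. Your caveat is also correct, and the paper glosses over it: Whitney stratifiability of $F$ and of $L\circ Q_{m_p,n_p}$ is not preserved under composition. A smooth but flat-oscillating $F$, e.g.\ one with a vertex value $e^{-1/\theta^2}\sin(1/\theta)$, already gives kinks of $\cL$ that accumulate at $\theta=0$. So reading the hypotheses as definability in a common o-minimal structure, the tame setting of \cite{subgradient}, is exactly what makes the citation legitimate.
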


While easy to define and implement, the vanilla gradient also suffers from a few weaknesses. First, on the theoretical side,
while~\cref{thm:convergence_guarantees} ensures that stochastic subgradient descent converges, nothing is said about the quality
of the critical point it converges to, nor about the convergence rate, i.e., the descent speed. Second, on the practical
side, it is easy to see that updating points in a PD (by minimizing some loss) only influences the persistence pairs and unpaired simplices associated to them;
this in turn means that, for a given PD point, only at most two filtration values are updated accordingly, leading to sparse gradients and slow, erratic convergence.
In the following sections, we will explore two gradient alternatives and two gradient extensions that were designed to deal with these issues.

\subsection{Stratified gradient descent}\label{subsec:SGS}

A recent alternative to vanilla gradients was proposed in~\cite{LCLO2023gradient}. The main idea is to make use of the stratifications of $\Filt_K$ induced by the persistence and
filtration maps in order to design a more efficient gradient using a procedure similar to the well-known {\em gradient sampling} method.
Recall that gradient sampling at a point $x$ involves sampling points in an $\varepsilon$-neighborhood of $x$, and taking the vector with smallest norm in the convex hull
of the corresponding gradients as a new, smoother gradient (after renormalization). The key idea of this section is to incorporate the information encoded in the strata
of the persistence and filtration maps within this procedure. This in turn allows to guarantee that the limit points of the sequence of iterates $\{\theta_k\}_k$ is close 
to an approximate critical point, as characterized with {\em Goldstein subgradients}. Let $\cL \colon M \to \bR$ be a stratifiably smooth function (see~\cref{def:stratifiably_smooth}), with $M \subset \bR^d$. In this section, we also
 consider the following assumptions on $\cL$:
\begin{enumerate}
    \item[(A1)] $\cL \colon M \to \bR$ has bounded sublevel sets. \label{assump:bounded_sublevel}
    \item[(A2)] For any $\theta \in M$, we have an oracle checking whether $\cL$ is differentiable at $\theta$. :\label{assump:diff}
    \item[(A3)] For each $\theta \in M$ and $\varepsilon$-close top-dimensional stratum $M_i$, we have an oracle that 
    returns an element $\theta_i\in M_i$ with $\|\theta-\theta_i\| \le \varepsilon$.
\end{enumerate}

\begin{definition}\label{def:goldstein}
Let $\varepsilon >0$. The {\em Goldstein subgradient} of $\cL$ is
\begin{equation}
    \partial_\varepsilon \cL(\theta) \coloneqq \mathrm{Conv} \left\{ \lim_{\theta_i \to \theta'} \nabla \cL(\theta_i) \relmid \|\theta-\theta'\|\leq\varepsilon,\text{$\cL$ is differentiable at $\theta_i$} \right\}.
\end{equation}
Here, $\mathrm{Conv}$ denote the convex hull of a set.
A point $\theta$ is said to be $\varepsilon$-stationary if $0\in\partial_\varepsilon \cL(\theta)$ and it is $(\varepsilon,\eta)$-stationary, for some $\eta \geq 0$, if $d(0,\partial_\varepsilon \cL(\theta))\leq\eta$.
\end{definition}

The set of Goldstein subgradients at $\theta$ can be thought of as ``average'' (convex combinations) of nearby gradients, and is well defined even if $\cL$ is not differentiable at $\theta$. 
Note that local minimum (as long as local maximum and usual saddle points) satisfy $0 \in \partial_{\epsilon = 0} \cL(\theta)$. 

The whole point of stratified gradient descent is to use an approximation $\tilde\partial_\varepsilon\cL(\theta)$ (computed with assumption (A3) above) of $\partial_\varepsilon \cL(\theta)$, 
provided that $\cL$ is stratifiably smooth, to update the iterates.

\begin{definition}
Let $\varepsilon >0$. The $\varepsilon$-{\em stratified subgradient} of $\cL$ is
\begin{equation}
    \tilde\partial_\varepsilon \cL(\theta) \coloneqq \mathrm{Conv} \left\{ g_{\theta_i}\,|\, M_i\cap B(\theta,\varepsilon)\neq\varnothing \right\},
\end{equation}
where $\theta_i$ is an arbitrary point of $M_i$ such that $\|\theta-\theta_i\|\leq\varepsilon$ (obtained using assumption (A3) above), 
and $g_{\theta_i}$ is the vanilla gradient of $\cL$ at $\theta_i$. Moreover, the vector:
\begin{equation}
	\stratgrad_\theta \coloneqq {\rm argmin}_{g\in \tilde\partial_\varepsilon \cL(\theta)}\, \|g\|
\end{equation}
is called the \emph{descent direction} associated to $\tilde\partial_\varepsilon \cL(\theta)$.
\end{definition}

A key result of~\cite{LCLO2023gradient} is that using stratified subgradients allows to guarantee a loss decrease.

\begin{proposition}[{\cite[Proposition 4]{LCLO2023gradient}}]\label{prop:loss_decrease}
Assume $\cL$ is stratifiably smooth, with Lipschitz constant $C>0$. Let $\theta$ be non-stationary, and $\stratgrad_\theta$ be the descent direction associated to $\tilde\partial_\varepsilon\cL(\theta)$. Finally, let $\beta > 0$.
Then:
\begin{enumerate}
\item[$(i)$] for small enough $\varepsilon$, one has $\varepsilon\le \frac{1-\beta}{2C}\|\stratgrad_\theta\|$, and
\item[$(ii)$] for such $\varepsilon$, and any $\alpha\le \frac{\varepsilon}{\|\stratgrad_\theta\|}$, one has:
\begin{equation}
\cL(\theta-\alpha\stratgrad_\theta) \leq \cL(\theta) - \beta\cdot\alpha\cdot\|\stratgrad_\theta\|^2.
\end{equation}
\end{enumerate}
\end{proposition}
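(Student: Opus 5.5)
The plan is to prove $(i)$ by a local-finiteness argument that bounds $\|\stratgrad_\theta\|$ away from $0$ as $\varepsilon \to 0$. For $(ii)$, I will cut the segment $[\theta, \theta - \alpha \stratgrad_\theta]$ into finitely many pieces, each in the closure of a single top-dimensional stratum. On each piece I apply the mean value theorem to the $C^2$ extension $\cL_i$ of \cref{def:stratifiably_smooth}, and then use the variational characterization of the minimal-norm element of a convex set. Throughout, I read $C$ as a common Lipschitz constant of the gradients $\nabla \cL_i$ of the extensions near $\theta$. This is available locally because each $\cL_i$ is $C^2$ and only finitely many strata meet a small ball, by local finiteness of the Whitney stratification.

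For $(i)$: since $\cS$ is locally finite, there is $\varepsilon_0 > 0$ such that every top-dimensional stratum meeting $B(\theta, \varepsilon_0)$ is incident to $\theta$. For $\varepsilon \le \varepsilon_0$, the set $\tilde\partial_\varepsilon \cL(\theta)$ is the convex hull of finitely many vectors $g_{\theta_i}$, one per incident stratum $M_i$. Since $\|\theta_i - \theta\| \le \varepsilon$ and $\nabla \cL_i$ is $C$-Lipschitz, each $g_{\theta_i} = \nabla\cL_i(\theta_i)$ lies within $C\varepsilon$ of $\nabla \cL_i(\theta)$. Hence $\tilde\partial_\varepsilon \cL(\theta)$ is within Hausdorff distance $C\varepsilon$ of $G \coloneqq \mathrm{Conv}\{\nabla \cL_i(\theta)\}_{i}$, which is contained in the Clarke subgradient. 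Non-stationarity of $\theta$ gives $0 \notin G$, so $\|\stratgrad_\theta\| \ge d(0, G) - C\varepsilon \ge d(0,G)/2 > 0$ for $\varepsilon$ small. Since $\varepsilon \to 0$ while the right-hand side $\frac{1-\beta}{2C}\|\stratgrad_\theta\|$ stays bounded below, the inequality in $(i)$ holds for all small $\varepsilon$.

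For $(ii)$: set $g \coloneqq \stratgrad_\theta$ and $\theta(t) \coloneqq \theta - t g$ for $t \in [0, \alpha]$. Since $\alpha \|g\| \le \varepsilon$, the whole segment lies in $\bar B(\theta, \varepsilon)$. The strata are semi-algebraic (or subanalytic) and finite in number in this ball, so the segment meets each in finitely many intervals. This gives a partition $0 = t_0 < \dots < t_k = \alpha$ such that each $\theta([t_j, t_{j+1}])$ lies in the closure of some top-dimensional stratum $M_{i_j}$. Pieces lying in lower-dimensional strata are contained in such closures by the frontier condition. On each piece $\cL = \cL_{i_j}$, by continuity of $\cL$ and of the extension. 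The mean value theorem then gives
\begin{equation}
\cL(\theta(t_{j+1})) - \cL(\theta(t_j)) = -(t_{j+1}-t_j)\,\langle \nabla \cL_{i_j}(\xi_j), g\rangle
\end{equation}
for some $\xi_j$ on the piece. Both $\xi_j$ and the sampled point $\theta_{i_j}$ lie in $\bar B(\theta, \varepsilon)$, so
\begin{equation}
\|\nabla \cL_{i_j}(\xi_j) - g_{\theta_{i_j}}\| \le 2C\varepsilon \le (1-\beta)\|g\|.
\end{equation}
Because $g$ is the minimal-norm element of the convex set $\tilde\partial_\varepsilon\cL(\theta)$, which contains $g_{\theta_{i_j}}$, the projection inequality yields $\langle g_{\theta_{i_j}}, g\rangle \ge \|g\|^2$. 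Combining these with Cauchy--Schwarz gives $\langle \nabla \cL_{i_j}(\xi_j), g\rangle \ge \|g\|^2 - (1-\beta)\|g\|^2 = \beta\|g\|^2$. Summing over $j$ then yields $\cL(\theta - \alpha g) \le \cL(\theta) - \beta\alpha\|g\|^2$.

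The main obstacle is the segment decomposition. I must ensure that the segment really splits into finitely many closed pieces, each governed by one $C^2$ extension that agrees with $\cL$ on the whole piece, including its endpoints and any portions lying inside lower-dimensional strata. I would handle this with the o-minimal (semi-algebraic) structure, which gives finiteness of the intersection with each stratum, together with the frontier condition, which places every lower stratum in the closure of a top stratum. I would also need that a sampled stratum $M_{i_j}$ actually meets $B(\theta,\varepsilon)$, which is automatic here since the piece lies in that ball. A secondary subtlety is the precise meaning of $C$, which I take to be the gradient-Lipschitz constant of the extensions, as above.
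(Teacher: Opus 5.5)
The gap is in $(ii)$, at the finite decomposition of the segment. The rest is sound. Reading $C$ as a Lipschitz constant of the gradients $\nabla\cL_i$ near $\theta$ is the right interpretation, and indeed a necessary one. Part $(i)$ is correct. Your core estimate (the projection inequality $\langle g',\stratgrad_\theta\rangle\ge\|\stratgrad_\theta\|^2$ for $g'\in\tilde\partial_\varepsilon\cL(\theta)$, plus the $2C\varepsilon$ deviation bound) is the right Goldstein-type mechanism. The survey gives no proof of its own and defers to \cite{LCLO2023gradient}.

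The problem is that nothing in the hypotheses makes the strata semi-algebraic or subanalytic. \cref{def:stratification_of_manifold} only asks for a locally finite partition into smooth submanifolds satisfying the frontier and Whitney conditions, and then the segment need not split into finitely many pieces. For instance, in $\bR^2$ let $h(x)=e^{-1/x^2}\sin(1/x)$ with $h(0)=0$. Take the strata $\{y>h(x)\}$, $\{y=h(x)\}$, $\{y<h(x)\}$, and let $\cL(x,y)=|y-h(x)|-x$, whose extensions $\pm(y-h(x))-x$ are $C^\infty$. The point $\theta=0$ is non-stationary. Sampling on the $y$-axis gives the gradients $(-1,\pm1)$, so $\stratgrad_\theta=(-1,0)$ and the segment $[\theta,\theta-\alpha\stratgrad_\theta]$ runs along the $x$-axis. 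Near $0$ it alternates infinitely often between the two open top strata. Any piece $[0,t_1]$ with $t_1>0$ therefore contains points of both open top strata, so it lies in the closure of neither, and your per-piece mean value argument has nothing to sum over. As written, your proof covers definable stratifications, which include the persistence-based losses of \cref{prop:semi_sub}, but not the proposition as stated.

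The repair avoids any decomposition; this is the standard way such descent lemmas are proved. Apply Lebourg's nonsmooth mean value theorem: there exist $z$ in the open segment and $v\in\partial\cL(z)$ with $\cL(\theta-\alpha\stratgrad_\theta)-\cL(\theta)=-\alpha\langle v,\stratgrad_\theta\rangle$. Clarke's gradient formula may discard the null set formed by the lower-dimensional strata. Hence $\partial\cL(z)\subseteq\mathrm{Conv}\{\nabla\cL_i(z) : M_i \text{ top-dimensional and incident to } z\}$, which is the reverse of the inclusion you use in $(i)$. Every such $M_i$ meets $B(\theta,\varepsilon)$ because $\|z-\theta\|<\varepsilon$, so it contributes a sampled $g_{\theta_i}$. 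Your bound $\langle\nabla\cL_i(z),\stratgrad_\theta\rangle\ge\|\stratgrad_\theta\|^2-2C\varepsilon\|\stratgrad_\theta\|\ge\beta\|\stratgrad_\theta\|^2$ then holds for each generator, hence for $v$. Equivalently, $t\mapsto\cL(\theta-t\stratgrad_\theta)$ is Lipschitz with derivative at most $-\beta\|\stratgrad_\theta\|^2$ almost everywhere, and you integrate.

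Finally, state explicitly that the $C^2$ extensions are defined on a neighborhood of $\overline{M_i}$, not merely of $M_i$ as \cref{def:stratifiably_smooth} literally says. Both $\cL=\cL_i$ on $\overline{M_i}$ and a uniform gradient-Lipschitz constant near $\theta$ rely on this.
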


\paragraph{Implementation.} We now detail through a series of implementation how to compute faithful approximations of $\varepsilon$-stratified subgradients.
The most basic step is~\cref{alg:StratifiedGradient},
that is basically an instance of gradient sampling, except that the neighborhood sample points
are not chosen randomly within $B(\theta,\epsilon)$ but will instead be picked precisely in $\varepsilon$-close top-dimensional strata (if any), leveraging the knowledge we have about the structure of stratification of the persistence map $\Pers$. 
See \cref{fig:exgrad}.

\begin{algorithm}[ht]
	\caption{$\mathtt{StratifiedGradient}(\theta,S)$}
    \label{alg:StratifiedGradient}
	\begin{algorithmic}
		\STATE{\textbf{Input}: Current iterate $\theta\in M$, set of samples $S=\{\theta_1,\dots,\theta_m\}$}
		\STATE $g_\theta \gets \texttt{VanillaGradient}(\theta)$;
        \STATE $G \gets \{g_\theta\}$; \quad \text{\# Initialize with the vanilla gradient at current iterate}
		\FOR{$1\leq i\leq m$}
            \STATE $g_{\theta_i}\gets \texttt{VanillaGradient}(\theta_i)$;
            \STATE $G\gets G\cup \{g_{\theta_i}\}$; \quad \text{\# Add the vanilla gradients of samples in neighboring strata}
        \ENDFOR
        \STATE Solve quadratic minimization problem $\tstratgrad_\theta  \coloneqq  \text{argmin}\{\|g\|^2\,|\,g\in \mathrm{Conv}(G)\}$; 
        \RETURN $\tstratgrad_\theta$
	\end{algorithmic}
\end{algorithm}

Then, the final stratified gradient $\stratgrad_\theta$ is obtained by progressively refining the gradient $\tstratgrad_\theta$ of~\cref{alg:StratifiedGradient} over smaller and smaller 
neighborhoods with~\cref{alg:ControlledStratifiedGradient}, until the neighborhood size satisfies item $(i)$ in~\cref{prop:loss_decrease}.

\begin{algorithm}[ht]
	\caption{$\mathtt{ControlledStratifiedGradient}(\theta,\varepsilon,m,\gamma,\beta,C,\eta)$}
    \label{alg:ControlledStratifiedGradient} 
	\begin{algorithmic}
		\STATE{\textbf{Input}: Current iterate $\theta$, neighborhood size $\varepsilon$, sampling size $m$,
                               neighborhood decrease rate $0<\gamma<1$, loss decrease constant $\beta$, (upper bound on the) Lipschitz constant $C$ of $\cL$,
                               norm threshold $\eta$}
        \STATE $\tilde\varepsilon\gets\varepsilon$;
		\STATE $S \gets \texttt{Sample}(\theta,\tilde\varepsilon,m)$;
        \STATE $\stratgrad_\theta\gets\texttt{StratifiedGradient}(\theta,S)$;
		\WHILE{$\tilde\varepsilon > \frac{1-\beta}{2C}\|\stratgrad_\theta\|$} 
            \STATE \text{\# Ensures the neighborhood size is small enough so that the loss decrease can be controlled}
            \text{as per~\cref{prop:loss_decrease}}
            \IF{$\|\stratgrad_\theta\|\leq\eta$}
                \RETURN $\alpha=0$ and $\stratgrad_\theta$
            \ELSE
                \STATE $\tilde\varepsilon \gets \gamma \tilde\varepsilon$;
         		\STATE $S \gets \texttt{Sample}(\theta,\tilde\varepsilon,m)$;
                \STATE $\stratgrad_\theta\gets\texttt{StratifiedGradient}(\theta,S)$;
            \ENDIF
        \ENDWHILE
		\RETURN $\stratgrad_\theta$ and $\alpha \coloneqq \tilde\varepsilon/\|\stratgrad_\theta\|$
	\end{algorithmic}
\end{algorithm}

Finally, the corresponding stratified gradient descent is given by~\cref{alg:StratifiedGradientDescent}.

\begin{algorithm}[ht]
	\caption{$\mathtt{StratifiedGradientDescent}(\theta_0,\varepsilon,m,\gamma,\beta,C,\eta)$}
    \label{alg:StratifiedGradientDescent} 
	\begin{algorithmic}
		\STATE{\textbf{Input}: Initial iterate~$\theta_0$, neighborhood size $\varepsilon$, sampling size $m$,
                               neighborhood decrease rate $0<\gamma<1$, loss decrease constant $\beta$, (upper bound on the) Lipschitz constant $C$ of $\cL$,
                               norm threshold $\eta$}
		\STATE $k \gets 0$;
        \STATE $\stratgrad_{\theta_{-1}}\gets +\infty\cdot {\bf 1}$;
		\WHILE{$\|\stratgrad_{\theta_{k-1}}\| > \eta$}
		\STATE $(\stratgrad_{\theta_k}, \alpha_k) \gets \mathtt{ControlledStratifiedGradient}(\theta_k,\varepsilon,m,\gamma,\beta,C,\eta)$;
        \STATE Apply~\cref{eq:stochastic_gradient} with $\alpha_k,\stratgrad_{\theta_k}$ and $\zeta_k=0$;
        \STATE $k\gets k+1$;
		\ENDWHILE
        \RETURN $\theta_k$
	\end{algorithmic}
\end{algorithm}

\begin{remark}\label{rem:general_algo}
The stratified gradient descent scheme presented in~\cref{alg:StratifiedGradientDescent} is a simplified version 
of the original one~\cite[Algorithm 2]{LCLO2023gradient},
which does not require to know the Lipschitz constant $C$ of $\cL$ in advance (see also Remark 5 and Algorithm 6 in~\cite{LCLO2023gradient}). 
However, such (upper bounds on) Lipschitz constants can be derived for most standard TDA losses (see Section~\ref{subsec:example_losses}).
\end{remark}

\begin{remark}\label{rem:make_diff}
Strictly speaking, nothing guarantees that the sequence of iterates $\{\theta_k\}_k$ belongs to the interiors of
some strata (on which $\cL$ is differentiable), which is required to ensure that the convergence properties of 
stratified gradient descent hold in~\cite{LCLO2023gradient}. As such, the authors of~\cite{LCLO2023gradient} use small perturbations of the iterates
to make sure that $\cL$ is differentiable on them (using assumption (A2) above). However, since obtaining iterates that end up exactly at the boundary between several
strata happens with probability zero in most applications, practical implementations are usually oblivious to this fact, so we leave this 
issue aside in this article and refer to~\cite{LCLO2023gradient} for more details.  
\end{remark}

\begin{remark}
Instead of progressively decreasing $\tilde\varepsilon$ as in~\cref{alg:ControlledStratifiedGradient}, one can reduce it in constant time with~\cref{alg:ConstantTimeControlledStratifiedGradient},
which is a slight variation of~\cref{alg:ControlledStratifiedGradient}. In words, the idea is simply to fix a set $S_0$ of samples, and to use subsets of this set instead of sampling new points
when reducing $\tilde\varepsilon$. The reason that the condition $\tilde\varepsilon \leq \frac{1-\beta}{2C}\|\stratgrad_\theta\|$ can be satisfied after only one iteration is that it suffices 
to decrease $\tilde\varepsilon$ to $\frac{1-\beta}{2C}\|\stratgrad_\theta\|$ directly (if the condition was not satisfied initially); indeed, as the new set of samples $S$ satisfies $S\subseteq S_0$, it follows 
that the corresponding gradient sets (in~\cref{alg:StratifiedGradient}) also satisfy $G\subseteq G_0$, and thus 
$\mathrm{Conv}(G)\subseteq\mathrm{Conv}(G_0)$ and $\min\{\|g\|^2\,|\,g\in \mathrm{Conv}(G)\}\ge \min\{\|g\|^2\,|\,g\in \mathrm{Conv}(G_0)\}$.
The norm of the stratified gradient can thus only increase when using subsets of the initial sample set, ensuring that the condition becomes satisfied.
However, as $\stratgrad_\theta$ is computed with a smaller number of samples (as $S\subseteq S_0$) by~\cref{alg:StratifiedGradient}, it becomes a rougher estimation of 
$\mathrm{argmin}\{\|g\|^2\,|\,g\in \tilde\partial_{\tilde\varepsilon}\cL(\theta)\}$ and it thus becomes less likely for~\cref{prop:loss_decrease}
to apply. 

\begin{algorithm}[ht]
	\caption{$\mathtt{ConstantTimeControlledStratifiedGradient}(\theta,\varepsilon,m,\gamma,\beta,C,\eta)$}
    \label{alg:ConstantTimeControlledStratifiedGradient} 
	\begin{algorithmic}
		\STATE{\textbf{Input}: Current iterate $\theta$, neighborhood size $\varepsilon$, sampling size $m$,
                               loss decrease constant $\beta$, (upper bound on the) Lipschitz constant $C$ of $\cL$,
                               norm threshold $\eta$}
        \STATE $\tilde\varepsilon\gets\varepsilon$;
		\STATE $S_0 \gets \texttt{Sample}(\theta,\tilde\varepsilon,m)$;
        \STATE $\stratgrad_\theta\gets\texttt{StratifiedGradient}(\theta,S_0)$;
		\IF{$\tilde\varepsilon > \frac{1-\beta}{2C}\|\stratgrad_\theta\|$} 
            \IF{$\|\stratgrad_\theta\|\leq\eta$}
                \RETURN $\alpha=0$ and $\stratgrad_\theta$
            \ELSE
                \STATE $\tilde\varepsilon \gets \frac{1-\beta}{2C}\|\stratgrad_\theta\|$;
         		\STATE $S \gets \{\tilde\theta\in S_0\,|\,\|\tilde\theta-\theta\|\leq\tilde\varepsilon\}$;              
                \STATE $\stratgrad_\theta\gets\texttt{StratifiedGradient}(\theta,S)$;
            \ENDIF
        \ENDIF
		\RETURN $\stratgrad_\theta$ and $\alpha \coloneqq \tilde\varepsilon/\|\stratgrad_\theta\|$
	\end{algorithmic}
\end{algorithm}
\end{remark}

\paragraph{Sampling strata.}
There are several ways to implement the sampling algorithm \texttt{Sample} required by~\cref{alg:ControlledStratifiedGradient} and~\cref{alg:ConstantTimeControlledStratifiedGradient}.
Recall that the main objective for \texttt{Sample} is the ability to sample points in $\varepsilon$-close strata. In the case of the persistence map $\Pers$, the strata are 
completely characterized by permutations of the filtration values as per~\cref{prop:permpd}. Hence, a practical way of exploring neighboring strata is by looking at shortest paths on the permutahedron:
first identify the permutation associated to the current iterate $\theta$, retrieve the corresponding node $V_\theta$ on the permutahedron, and then explore other permutations / strata by looking 
for the permutahedron nodes that are the closest to $V_\theta$ with, e.g., Dijkstra's algorithm or diffusion. 
This will allow to explore the strata that can be obtained from the current one by a minimal number of transpositions first. 
Moreover, it is also possible to store the persistence pairs and unpaired simplices corresponding to the permutations that have already been
visited, so that the vanilla gradients need not necessarily be recomputed from scratch at every iteration of~\cref{alg:StratifiedGradientDescent}. See also~\cite[Section 4.2]{LCLO2023gradient}. 
Recall however that when the filtration map $F$ is also stratified, the strata of $\Pers$ and those of $F$ have to be combined in the sampling algorithm \texttt{Sample}.

\paragraph*{Convergence guarantees.} In short, it can be shown that stratified gradient descent 
converges to an $(\varepsilon,\eta)$-stationary point in a finite number of iterations, which is a substantial
improvement over the guarantees of vanilla gradient descent from a theoretical viewpoint.

\begin{theorem}[{\cite[Thm.~7]{LCLO2023gradient}}]
    If $\eta>0$, then~\cref{alg:StratifiedGradientDescent} produces an $(\varepsilon,\eta)$-stationary point using at most $O\lb \frac{1}{\eta \min(\varepsilon, \eta)} \rb$ iterations.
\end{theorem}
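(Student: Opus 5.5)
The plan is a potential-function (sufficient decrease) argument built on \cref{prop:loss_decrease}, together with (A1) to bound the total decrease of $\cL$.

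\textbf{Step 1: each iteration stops or decreases the loss.} Fix an iterate $\theta_k$ at which \cref{alg:StratifiedGradientDescent} has not yet stopped. By \cref{rem:make_diff}, we may assume $\theta_k$ lies in the interior of a top-dimensional stratum. Two cases arise in \cref{alg:ControlledStratifiedGradient}.
\begin{enumerate}
\item[(a)] The routine returns with $\|\stratgrad_{\theta_k}\|\le\eta$. Since $\stratgrad_{\theta_k}\in\tilde\partial_{\tilde\varepsilon}\cL(\theta_k)\subseteq\partial_{\varepsilon}\cL(\theta_k)$ (because $\tilde\varepsilon\le\varepsilon$), we get $d(0,\partial_\varepsilon\cL(\theta_k))\le\eta$. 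So $\theta_k$ is $(\varepsilon,\eta)$-stationary and the algorithm stops.
\item[(b)] The while loop exits with $\tilde\varepsilon\le\frac{1-\beta}{2C}\|\stratgrad_{\theta_k}\|$ and $\|\stratgrad_{\theta_k}\|>\eta$. With step $\alpha_k=\tilde\varepsilon/\|\stratgrad_{\theta_k}\|$, \cref{prop:loss_decrease}(ii) gives
\[
\cL(\theta_{k+1})\le\cL(\theta_k)-\beta\,\tilde\varepsilon\,\|\stratgrad_{\theta_k}\|.
\]
\end{enumerate}

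\textbf{Step 2: lower bound on $\tilde\varepsilon$.} The inner loop multiplies $\tilde\varepsilon$ by $\gamma$ only while $\tilde\varepsilon>\frac{1-\beta}{2C}\|\stratgrad\|\ge\frac{1-\beta}{2C}\eta$. Hence at exit in case (b), either $\tilde\varepsilon=\varepsilon$, or the previous value $\tilde\varepsilon/\gamma$ exceeded $\frac{1-\beta}{2C}\eta$. In both situations
\[
\tilde\varepsilon\ge c\min(\varepsilon,\eta),\qquad c=\min\Bigl(1,\tfrac{\gamma(1-\beta)}{2C}\Bigr).
\]
The inner loop also terminates. The function $\tilde\varepsilon\mapsto\|\stratgrad\|$ is bounded below by a positive constant as long as it exceeds $\eta$, so the threshold $\frac{1-\beta}{2C}\|\stratgrad\|$ is eventually crossed.

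\textbf{Step 3: counting iterations.} In case (b), each step therefore decreases $\cL$ by at least $\beta c\,\eta\min(\varepsilon,\eta)$. Since the loss decreases, all iterates stay in the sublevel set $\{\cL\le\cL(\theta_0)\}$, which is bounded by (A1). The continuous function $\cL$ is then bounded below on its closure, say by $\cL^\ast$. The number of case-(b) iterations is thus at most
\[
\frac{\cL(\theta_0)-\cL^\ast}{\beta c\,\eta\min(\varepsilon,\eta)}=O\Bigl(\frac{1}{\eta\min(\varepsilon,\eta)}\Bigr).
\]
After that, case (a) must occur, which yields the $(\varepsilon,\eta)$-stationary point.

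\textbf{Main obstacle.} The delicate step is applying \cref{prop:loss_decrease} legitimately. It requires that $\tilde\partial_{\tilde\varepsilon}\cL(\theta)$ capture every top-dimensional stratum meeting $B(\theta,\tilde\varepsilon)$, and that the sampled points be generic. I would rely on (A3) so that \texttt{Sample} returns a point in each $\tilde\varepsilon$-close stratum, and on the perturbation argument of \cref{rem:make_diff}, deferring to \cite{LCLO2023gradient} for those details.
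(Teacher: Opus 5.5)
Your proposal is correct and is essentially the argument behind \cite[Thm.~7]{LCLO2023gradient}, which the survey only cites. That argument runs as follows: every non-terminating iteration has $\|\stratgrad_{\theta_k}\|>\eta$ and a final radius $\tilde\varepsilon\ge\min\bigl(\varepsilon,\gamma\frac{1-\beta}{2C}\eta\bigr)$, so \cref{prop:loss_decrease} gives a decrease of at least $\beta\tilde\varepsilon\eta$, and (A1) bounds the total possible decrease. One bookkeeping point needs fixing: your case (a) can return $\|\stratgrad_{\theta_k}\|\le\eta$ with $\alpha_k=\tilde\varepsilon/\|\stratgrad_{\theta_k}\|>0$ (when the while loop of \cref{alg:ControlledStratifiedGradient} is never entered), and then \cref{alg:StratifiedGradientDescent} as written still takes the step and returns $\theta_{k+1}$, so you should state that the certified $(\varepsilon,\eta)$-stationary point is $\theta_k$ (equivalently, have the algorithm return it).
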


The above theorem provides explicit conditions ensuring the convergence of stochastic subgradient descent for functions of persistence. 
The main criterion to be checked is the local Lipschitz condition for $\cL$, which is guaranteed as soon as $F$ and $L$ are Lipschitz.

\subsection{Big-step gradient descent}\label{subsec:BSGS}

In this section, we detail another gradient, that we call the {\em big-step} gradient $\bigstepgrad_\theta$.
It was introduced in \cite{nigmetov2024topological} for minimizing singleton losses (see \cref{ex:singleton}),
in a much faster way than vanilla gradient descent. The key idea of the big-step gradient is to move, for a single
update on PD points, a much bigger set of simplices (and their filtration values) than the associated persistence pairs and unpaired simplices only.
In terms of stratified spaces, this provides a way to skip a lot of strata with one ``jump'' in a single gradient descent iteration---see \cref{fig:exgrad}. This also allows to make the singleton loss---associated to a persistence pair $q_0=(\sigma,\tau)$---more ``global'': indeed, since big-step gradient descent always keeps $\sigma$ and $\tau$ paired together by design, it makes sense to use a loss that is defined with respect to the PD point $q_0$, which always exists---on the other hand, if gradient iterates were allowed to go through different strata 
that possibly do not pair $\sigma$ and $\tau$ together,
the singleton loss would 
be ill-defined.

\paragraph*{Minimizing singleton loss.}
Let $K$ be a simplicial complex, and $f\colon K\to \bR$ be a filtration of $K$.
In this paragraph, we focus on minimizing a singleton loss, that is, pushing one PD point $p_0$ towards a target point $q_0=(q_x,q_y)$. 
Let $(\sigma, \tau)$ be the persistence pair corresponding to $p_0$. 
The goal is to update the value of $f(\sigma)$ towards $q_x$, and similarly for $f(\tau)$ and $q_y$, while ensuring that they stay matched together so that the singleton loss stays well-defined.

In other words, given a $p$-simplex $\tau$ and a target value $t$, we want to change the value of $f(\tau)$ towards $t$ without modifying the simplex $\sigma$ that is paired with $\tau$.
To this end, we want to find the $p$-simplices that belong to the set $X_\sigma$, defined as:
\begin{equation}\label{eq:movingset}
    X_\sigma \coloneqq \{\tau'\in \Sk_p(K)\,|\,\sigma\text{ becomes paired with }\tau'\text{ after swapping }\tau\text{ and }\tau'\text{ in the filtration}\},
\end{equation}
where $\Sk_p(K)$ is the $p$-skeleton of $K$ (see \cref{def:simpcomp}).

Indeed, if one sets $f(\tau')$ to $t$ directly for all $\tau' \in X_\sigma$, one can then safely move $f(\tau)$ to $t$ directly without modifying the simplex $\sigma$ that is paired with $\tau$.
This allows to modify much more simplices at the same time than the vanilla gradient, leading to much faster convergence
in the iterations of~\cref{eq:stochastic_gradient}.
Hence, one defines a new gradient as follows.

\begin{definition}
    The {\em big-step} gradient $\bigstepgrad_\theta$ associated to the singleton loss $L$ is obtained as the usual vanilla gradient computed with a modification of the differential of the map $L$, obtained by replacing the partial derivatives associated to every simplex $\tau'\in X_\sigma$ and every simplex $\sigma'\in X_\tau$ with the partial derivatives of the simplices $\tau$ and $\sigma$ respectively, where $(\sigma,\tau)$ is the persistence pair associated to $L$. 
\end{definition}

\paragraph*{Implementation.} 
The big-step gradient can be computed with~\cref{alg:BigStepGradient},\footnote{Note that if the parameter space $M$ is $\Filt_K$ itself, the updates of the partial derivatives prescribed by \cref{alg:BigStepGradient} need also be applied on the faces (if the simplex value is decreased) or cofaces (if the simplex value is increased) of the simplices in $X_\sigma$ and $X_\tau$ in order to ensure that $F(\theta)$ remains a filtration.} which is a variation of~\cref{alg:VanillaGradient}.
Gradient descent can then be performed with~\cref{eq:stochastic_gradient}.

\begin{algorithm}[ht]
	\caption{$\mathtt{BigStepGradient}(\theta)$}
    \label{alg:BigStepGradient}
	\begin{algorithmic}
		\STATE{\textbf{Input}: Current iterate $\theta\in M$, critical simplices $(\sigma,\tau)$ associated to singleton loss $L$}
		\STATE $((\sigma_1, \sigma_2),\dots,(\sigma_{2m+1},\sigma_{2m})), (\tau_1,\dots,\tau_n)\gets \texttt{PersistencePairs}(F(\theta))$;
		\STATE $\alpha \gets [[F(\theta)]_{\sigma_1},\dots,[F(\theta)]_{\sigma_{2m}},[F(\theta)]_{\tau_{1}},\dots,[F(\theta)]_{\tau_{n}}]\in\bR^{2m+n}$;
		\STATE $\nabla_\theta [\alpha]_i \gets \nabla_\theta [F(\theta)]_{\sigma_i}, \forall 1\leq i\leq 2m$; 
		\STATE $\nabla_\theta [\alpha]_{2m+j} \gets \nabla_\theta [F(\theta)]_{\tau_j}, \forall 1\leq j\leq n$; 
		\STATE $t_1\gets \frac{\partial L}{\partial [\alpha]_{\sigma}}, t_2\gets\frac{\partial L}{\partial [\alpha]_{\tau}}$;
        \STATE \text{\# Compute simplices that one needs to move jointly with $\sigma$ and $\tau$ to preserve persistence pairs}
        \text{(see Equation~\eqref{eq:movingset} and \cref{alg:critical-set-naive})}
        \STATE $X_{\sigma}\gets \texttt{MovingSet}(\sigma,\tau,t_1,F(\theta))$;
        \STATE $X_{\tau}\gets \texttt{MovingSet}(\tau, \sigma,t_2,F(\theta))$;
        \STATE \text{\# Update partial derivatives of simplices to match match them with partial derivatives of $\sigma, \tau$}
        \FOR{$\tau'\in X_{\sigma}$}
			\STATE $\frac{\partial L}{\partial [\alpha]_{\tau'}}\gets t_1$;
        \ENDFOR
        \FOR{$\sigma'\in X_{\tau}$}
			\STATE $\frac{\partial L}{\partial [\alpha]_{\sigma'}}\gets t_2$;
        \ENDFOR
        \STATE $\bigstepgrad_\theta \gets \sum_{i=1}^{2m+n} \frac{\partial L}{\partial [\alpha]_i}([\alpha]_i)\cdot \nabla_\theta [\alpha]_i$;
        \RETURN $\bigstepgrad_\theta$
	\end{algorithmic}
\end{algorithm}

We now present how to compute the set $X_\sigma$ with the function $\texttt{MovingSet}$.
To ease notation, let $f \coloneqq F(\theta)$ be the current filtration.
Suppose there are $|\Sk_p(K)|=m_p$ simplices $\tau_1 (= \tau), \ldots, \tau_{m_p}$
~(sorted in ascending order according to the proximity of their values under $f$ to $f(\tau)$) between $f(\tau)$ and $t$.
We define the set $X_\sigma^k~(k=1, \ldots, m_p)$ as the set of $p$-simplices as follows:
\begin{itemize}
    \item For $k=1$, $X_\sigma^1 \coloneqq \{\tau_1\}$.
    \item For $k=2, \ldots, m_p$:
    \begin{equation}
        X_\sigma^k \coloneqq
        \begin{cases}
            X_\sigma^{k-1} & \text{if $\tau$ is paired with $\sigma$ for the filtration $f_k$}\\
            X_\sigma^{k-1} \cup \{\tau_k\} & \text{otherwise}
        \end{cases}
    \end{equation}
    where the order induced by $f_k$ is obtained from that of $f$ with the following modifications: 
    \begin{enumerate}
        \item For any $\tau' \in X_\sigma^{k-1}$, the order of $\tau_k$ and $\tau'$ is swapped. 
        \item For any $\sigma_1, \sigma_2\in K$ with $\sigma_1, \sigma_2\notin X_\sigma^{k-1}\cup\{\tau_k\}$, the order of $\sigma_1$ and $\sigma_2$ is preserved. 
    \end{enumerate}
\end{itemize}
Now, the set $X_\sigma \coloneqq X_\sigma^{m_p}$ is the set of $p$-simplices defined in~\cref{eq:movingset}. 

One way for obtaining $X_\sigma^{m_p}$ is to iteratively compute $X_\sigma^k$ based on the definition for $k = 1, \ldots, m_p$, as described in \cref{alg:critical-set-naive}. 
To examine whether the persistence pair $(\sigma, \tau)$ changes due to swapping simplex $\tau_k$ with the simplices in the set $X_\sigma^{k-1}$, one can use the algorithm proposed in \cite{Cohen-Steiner2006}.
This algorithm computes the persistence pair changes in $O(|K|)$ time, for a single swap of adjacent simplices (in the order induced by the filtration). 
In the $k$-th for-loop, we need to swap simplices $2(k-1)$ times, and thus the overall number of swaps is $O(m_p^2)$ in the worst case.
Therefore, the overall computational complexity is $O(|K| \cdot m_p^2)$.

\begin{algorithm}[th]
    \caption{$\mathtt{MovingSet}(\tau,\sigma,t,f)$}
    \label{alg:critical-set-naive}
    \begin{algorithmic}
  	\STATE{\bf Input}: Target simplex $\tau\in \Sk_p(K)$, the simplex $\sigma$ paired with $\tau$, target value $t$, filtration $f$
  	\STATE $X_\sigma^1 \gets \{\tau\}$;
        \FOR{each $\tau_k\in \Sk_p(K)$ with $f(\tau_k)$ between $f(\tau)$ and $t$}
            \STATE transpose $\tau_k$ with each simplex in $X_\sigma^{k-1}$, updating the pairing using the algorithm in \cite{Cohen-Steiner2006}; --- ({\dag})
            \IF{$\tau_k$ becomes paired with $\sigma$}
                \STATE $X_\sigma^k \gets X_\sigma^{k-1}\cup\{\tau_k\}$;
                \STATE undo the transpositions in ({\dag}) so that $\tau_k$ is at the opposite end of $X_{\sigma}^k$; 
            \ELSE
                \STATE $X_\sigma^k \gets X_\sigma^{k-1}$;
            \ENDIF
        \ENDFOR
        \RETURN $X_\sigma^{m_p}$
    \end{algorithmic}
\end{algorithm}

To develop a faster algorithm for finding $X_\sigma^{m_p}$, the authors in \cite{nigmetov2024topological} provide a method to compute $X_\sigma^{m_p}$ in $O(m_p)$ time.
The key observation is that persistence pairing is computed by reducing the boundary matrix, which can be interpreted as finding decompositions $R_p = D_p V_p$, where matrices $R_p$ are reduced, meaning the lowest non-zeros in their columns appear in unique rows, and $V_p$ are upper-triangular invertible matrices (see~\cite[Section VII.1]{edelsbrunner2010computational}). 
Sticking with the notation of \cite{nigmetov2024topological}, we let $U_p = V_p^{-1}$.
Building on this idea, they prove the following theorem, which shows that we can compute $X_\sigma^{m_p}$ explicitly using such matrices $U_p, V_p$. 

\begin{theorem}
    Let $X_\sigma$ be the set of $p$-simplices computed by \cref{alg:critical-set-naive}.
    It holds
    \begin{equation}
        X_\sigma = \begin{cases}
            \{\tau'\mid t < f(\tau') < f(\tau), V_p[\tau', \tau] \neq 0\} & \text{if $\tau$ is a death simplex and $t < f(\tau)$},\\
            \{\tau'\mid f(\tau) < f(\tau') < t, U_p[\tau, \tau'] \neq 0\} & \text{if $\tau$ is a death simplex and $f(\tau) < t$}, \\
            \{\tau'\mid t < f(\tau') < f(\tau), U^\perp_p[\tau, \tau'] \neq 0\} & \text{if $\tau$ is a birth simplex and $t < f(\tau)$}, \\
            \{\tau'\mid f(\tau) < f(\tau') < t, V^\perp_p[\tau, \tau'] \neq 0\} & \text{if $\tau$ is a birth simplex and $f(\tau) < t$}. \\
        \end{cases}
    \end{equation}
\end{theorem}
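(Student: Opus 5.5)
We reduce everything to the behaviour of the decomposition $R_p = D_p V_p$ under a single transposition of adjacent simplices, following the vineyard analysis of \cite{Cohen-Steiner2006}. The four cases are symmetric: the two birth-simplex cases follow from the two death-simplex cases by passing to the anti-transposed (cohomological) boundary matrix. Under this duality births and deaths swap roles, $V_p$ is replaced by $U_p^\perp$ (respectively $U_p$ by $V_p^\perp$), and the order of simplices is reversed. So we treat only the case where $\tau$ is a death simplex, paired with $\sigma$ so that $\mathrm{low}(R_p[\cdot,\tau]) = \sigma$. Within that case, $t<f(\tau)$ (moving $\tau$ earlier) and $f(\tau)<t$ (moving $\tau$ later) are dual to each other via $U_p = V_p^{-1}$.

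\textbf{Case $t<f(\tau)$, $\tau$ a death simplex.} We process the $p$-simplices $\tau_k$ with $t<f(\tau_k)<f(\tau)$ in decreasing order of $f$, as in \cref{alg:critical-set-naive}, and maintain the following invariant. After step $k$, the block $X_\sigma^{k}$ sits contiguously just before the original position of the moving group, and $\tau$ has been moved past all of it. Moreover, the column of $R$ paired with $\sigma$ equals $D\cdot V[\cdot,\tau]$ restricted to the reordered basis.

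The key step is to decide when moving $\tau_k$ across the block breaks the pair $(\sigma,\cdot)$. Since $R[\cdot,\tau] = \sum_{\tau'} V[\tau',\tau]\, D[\cdot,\tau']$, the chain whose lowest entry is $\sigma$ is the sum of the boundaries of the simplices in the support of $V[\cdot,\tau]$. If $V[\tau_k,\tau]=0$, then $\tau_k$ plays no role in this chain, and after the transposition we can keep the same column combination. It is still reduced with $\mathrm{low}=\sigma$ and contains a later simplex, so $\sigma$ stays paired with $\tau$ (or with the block), and $\tau_k$ is not added. If $V[\tau_k,\tau]\neq 0$, the transposition of $\tau_k$ with the block falls into the vineyard case where the two columns are both negative with the relevant $V$-entry nonzero. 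Here the pairing swaps, so $\sigma$ becomes paired with $\tau_k$, and $\tau_k\in X_\sigma$.

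We then check that undoing the transpositions, as the algorithm does, restores a decomposition in which $V[\cdot,\tau]$ is unchanged on the remaining indices. This holds because the column operations involved only add column $\tau_k$ to columns within the block. Hence later tests still read off the original entries $V_p[\tau',\tau]$, and an induction on $k$ gives the first formula. The case $f(\tau)<t$ is handled identically, but uses row operations: moving $\tau$ later is governed by which later columns contain $\tau$ in their $V$-expansion, that is, by the entries of row $\tau$ of $U_p=V_p^{-1}$.

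\textbf{Main obstacle.} The hardest part is the invariance claim: that the relevant entries of $V_p$ (respectively $U_p$) are not altered by the earlier accepted and rejected swaps. Each vineyard transposition can modify $V$ by an elementary operation $V\mapsto P(V\pm e_ie_j^T)P$, so a careful case check is needed. We would first try to show that every such modification only adds column $\tau_k$ into columns indexed by $X_\sigma^{k-1}$, all of which move together with $\tau$. Under that claim, the nonzero pattern of column $\tau$ restricted to unprocessed simplices is preserved.
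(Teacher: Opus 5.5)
\textbf{There is a genuine gap.} It sits in the one step that carries the content: ``if $V_p[\tau_k,\tau]\neq 0$, the transposition falls into the vineyard case of two negative columns with nonzero $V$-entry, so the pairing swaps.''

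Nothing in your argument forces $\tau_k$ to be negative. Swapping a positive $p$-simplex with the negative simplex right after it never changes the pairing. Even for two adjacent negative columns with a nonzero $V$-entry, the update of \cite{Cohen-Steiner2006} exchanges partners only if the earlier column has the larger pivot, here $\mathrm{low}\,R_p[\cdot,\tau_k]>\sigma$.

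In fact the statement is false for an arbitrary decomposition $R_p=D_pV_p$. Take vertices $a<b<c<d$ followed by edges $ab<bc<ca<cd$, with $\tau=cd$, $\sigma=d$, and window $\{ca\}$. The standard reduction gives the column $V_p[\cdot,cd]=cd$, and the greedy procedure indeed rejects $ca$. Now replace that column by $cd+ca+bc+ab$, i.e.\ add the cycle column of the positive edge $ca$. This keeps $R_p$ reduced and $V_p$ upper triangular, yet it puts $ca$ into the right-hand side.

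\textbf{The missing lemma.} What you need is a normalization lemma for the specific $V_p$ of the standard left-to-right reduction; the statement as quoted leaves this choice implicit. That reduction only adds an earlier column to a later one sharing its current pivot, and pivots strictly decrease along the way. By induction, every $\tau'\neq\tau$ in the support of $V_p[\cdot,\tau]$ is then negative with $\mathrm{low}\,R_p[\cdot,\tau']>\sigma$.

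Expanding $U_p=V_p^{-1}$ as a Neumann series along paths of nonzero off-diagonal entries of $V_p$ gives a consequence. If $U_p[\tau,\tau']\neq 0$ with $\tau'\neq\tau$, then $\tau'$ is positive or has pivot below $\sigma$. The case $f(\tau)<t$ needs this property and an actual argument; saying it is ``handled identically with row operations'' supplies neither.

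\textbf{The invariance step.} Your second weak point is the one you flag yourself: the entries of $V_p[\cdot,\tau]$ must survive the vineyard updates and their undoing. This is unproven and delicate. Transposing $\tau_k$ with a block element $b$ satisfying $V_p[\tau_k,b]\neq 0$ rewrites column $b$. After undoing the transpositions you are only guaranteed \emph{some} valid decomposition, not the original one.

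You can avoid the issue by never updating $V_p$ and using a reduction-independent description of the pairing instead. In any ordering of the $p$-simplices, $\sigma$ is paired with the first $j$ such that some chain supported on the simplices up to $j$ has boundary with pivot $\sigma$. Write chains as $V_pu$. The chains with this property supported up to $\tau$ are then exactly $V_p[\cdot,\tau]+\sum_{i\in S}V_p[\cdot,i]$, where each $i\in S$ precedes $\tau$ and has $R_p[\cdot,i]=0$ or pivot below $\sigma$. By the lemma, $\max S\notin\{\tau\}\cup\mathrm{supp}\,V_p[\cdot,\tau]$. So such a chain contains $\max S$ and agrees with $V_p[\cdot,\tau]$ above it.

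At step $k$, the greedy procedure accepts $\tau_k$ exactly when every such chain supported on (simplices preceding $\tau_k$) $\cup$ (current block) $\cup\{\tau_k\}$ contains $\tau_k$. Take as induction hypothesis that the block is $\{\tau\}$ together with the support of $V_p[\cdot,\tau]$ among $\tau_2,\dots,\tau_{k-1}$. Then acceptance holds iff $V_p[\tau_k,\tau]\neq 0$:
\begin{itemize}
\item If the entry is zero, $V_p[\cdot,\tau]$ itself avoids $\tau_k$, so $\tau_k$ is rejected.
\item Otherwise $\max S$ cannot lie in the block or equal $\tau_k$, so it precedes $\tau_k$. Every chain then has entry $V_p[\tau_k,\tau]=1$ at $\tau_k$, so $\tau_k$ is accepted.
\end{itemize}
The birth cases then follow from your anti-transpose duality, applied to the standard reduction of the anti-transposed matrix.
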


\paragraph{Minimizing combined loss.}
Given a more general loss $L=\sum_{(p, q)\in \gamma}(p-q)^2$, where $\gamma\subseteq \alpha\times \alpha_0$ is a (fixed)
partial matching between the PD of the current estimate $\theta$ and a fixed, target PD $\alpha_0$, one difficulty
is that the different terms in the sum can induce several different target values for every simplex.
Hence, one has to reduce these multiple target values to a single one for each simplex.
In order to achieve this, the authors in \cite{nigmetov2024topological} propose the following heuristic.
Fix a simplex $\sigma\in K$.
Let $a\coloneqq f(\sigma)$ be the initial value, and $a_1, a_2, \ldots$ be the different target values induced by the singleton losses appearing in the combined loss for $\sigma$.
Then, they propose to choose the target value $a'$ for $\sigma$ (and the corresponding singleton loss) with:
\begin{equation}
    a' = a_{i^*}, \quad \text{where $i^*={\rm argmax}_i |a-a_i|$}.
\end{equation}
Intuitively, this advocates for using the largest push on every simplex---see~\cite[Section 3.7]{nigmetov2024topological} for a motivation of this choice.

\subsection{Gradient extensions}~\label{subsec:extensions}

In this section, we assume that some procedure for computing a gradient $(\theta,P,K)\mapsto G_{\theta,P,K}$ for $\cL$ at $\theta$ on the simplicial complex $K$ has been fixed (with $P$ being 
additional parameters depending on the procedure), whether it is the vanilla ($P=\varnothing$), stratified ($P=\{\varepsilon,m,\gamma,\beta,C,\eta\}$) or
big-step method ($P=\{\sigma, \tau\}$), and we present two extensions of these gradients, that is, two additional procedures that produce new gradients-like objects with additional benefits, such as being less sparse, more robust or more computationally efficient.

\subsubsection{Smoothing gradient with downsampling}\label{subsec:sample}

A simple idea to reduce the computational cost and sparsity of topological gradients is to compute them on several
smaller simplicial complexes of fixed and controlled sizes, and average the results. This is the approach
called {\em downsampling simplicial complexes}, 
advocated in~\cite{Wagner2021} and~\cite{solomon2020fast}. See~\cref{fig:exgrad}. In these works, two different ways are proposed for downsampling simplicial complexes.

\paragraph{Downsampling with subcomplexes.}
The first approach involves averaging gradients computed from subcomplexes of the initial simplicial complex.

\begin{definition}
    Let $K$ be a simplicial complex and $\cP(K)$ be a family of subcomplexes of $K$.
    Let $G_{\theta,P,K}$ be a gradient computation procedure.
    The {\em downsampled, or distributed gradient} is defined as:
    \begin{equation}
        \tilde G_{\theta,P,K} \coloneqq  \frac{1}{|\cP(K)|}\sum_{K'\in\cP(K)} G_{\theta,P,K'},
    \end{equation}
    where the gradients $G_{\theta,P,K'}$ were computed using the original filtration restricted to $K'$.
\end{definition}

A common example of family of subcomplexes is $\cP_n(K) \coloneqq \{K'\subseteq K\,|\, |\Sk_0(K')|=n \text{ and }\forall\sigma\in K, \Sk_0(\sigma)\subseteq\Sk_0(K')\Longrightarrow\sigma\in K'\}$, comprised of those maximal subcomplexes 
with exactly $n$ vertices. Such a family was used in~\cite{Wagner2021} for instance, in which the authors focused on the family of Vietoris--Rips
complexes computed from subsamples (of fixed size) of an initial point cloud. All such complexes can be seen as subcomplexes of the Vietoris--Rips complex of the full point cloud. 

\paragraph{Downsampling with nerves.} 
The second approach involves averaging gradients computed from {\em nerve complexes}.

\begin{definition}
    Let $K$ be a simplicial complex and $\cU$ be an open cover of $K$, that is, a family of open sets such that $K=\bigcup_{U\in\cU}U$.
    The {\em nerve} of $\cU$ is the simplicial complex $\cN(\cU)$ with vertices $\Sk_0(\cN(\cU))=\cU$ and simplices: 
    \begin{equation}
        \sigma=\{U_{i_0},\dots,U_{i_p}\}\in\cN(\cU)\Longleftrightarrow \bigcap_{j=0}^p U_{i_j}\neq\varnothing.
    \end{equation}
\end{definition}

\begin{definition}
Let $K$ be a simplicial complex and $\cC$ be a family of open covers of $K$.
Let $G_{\theta,P,K}$ be a gradient computation procedure.
Then, the {\em nerve gradient} is defined as:
\begin{equation}
    \tilde G_{\theta,P,K} \coloneqq  \frac{1}{|\cC|}\sum_{\cU\in\cC} G_{\theta,P,\cN(\cU)},
\end{equation}
where the gradients $G_{\theta,P,\cN(\cU)}$ were computed using a filtration of $\cN(\cU)$ derived from the original one.
\end{definition}

In~\cite{solomon2020fast}, the authors suggest several way for designing a filtration on $\cN(\cU)$. One possibility is, e.g., to
assign to every vertex $U_i$ of $\cN(\cU)$ the average of the original filtration values of the vertices of $K$ that belong to $U_i$,
or, more generally, any weighted average of such values such that the weights sum to one, and to extend to the whole nerve with lower-star (see~\cref{ex:sublevel}). Gradient smoothing can be made even stronger
by computing $G_{\theta,P,\cN(\cU)}$ as the integral over all such weighted averages.

\paragraph{Application to topological gradient descent.} There are several benefits for using either downsampled or nerve gradients in~\cref{eq:stochastic_gradient}. Indeed, 
they both tend to produce denser gradients (as they incorporate several gradients
computed on different complexes) in a much faster way than the original gradient procedures (as the size of complexes they are applied to are controlled).

\subsubsection{Extending gradient with diffeomorphic interpolation}
\label{subsubsec:diffeo}

This approach proposed in this section, introduced in \cite{Carriere2024}, proposes an alternative way to extend a gradient associated to the persistence diagram of a
\emph{geometric} simplicial complex $K$, that is, a simplicial complex embedded in $\bR^d$ with a filtration that is entirely parametrized by the location of the vertices of $K$, 
i.e., any parameter $\theta$ is of the form $\theta=\{x_1,\dots,x_n\}\in (\bR^d)^n$. 
Examples of such filtrations include the Vietoris--Rips, the \v Cech, or the height filtrations (see Section~\ref{subsec:background}). 
As explained before, 
any given topological loss $\cL$ yields 
a vanilla gradient $G_{\theta,\varnothing,K}\in\partial\cL(\theta)\in \bR^{n \times d}$ that is typically sparse, as its rows represent the~$x_i$'s, and are thus $0$ for any non critical vertex in the filtration.

The main idea conveyed by this approach is to build a vector field $\tilde G_{\theta,P,K} \colon \bR^d \to \bR^d$ that \emph{interpolates}
any topological gradient $G_{\theta,P,K}$ on its \emph{non-zero} entries. 
As they are many possible candidates, it is natural to seek the \emph{smoothest} way to interpolate the gradient, which is done by finding a vector field of minimal norm in a given Reproducing Kernel Hilbert Space (RKHS), hence the name of \emph{diffeomorphic interpolation}. 

\paragraph{Diffeomorphic interpolation.} Let $\theta = \{x_1,\dots,x_n\} \in \bR^{n \times d}$, let $I \subseteq \{1,\dots,n\}$ and consider a set of vectors $a_i \in \bR^d$ for $i \in I$. 
Let $k \colon \bR^d \times \bR^d \to \bR^{d \times d}$ be a matrix-valued kernel operator whose outputs are symmetric and positive definite, i.e.,~$\forall x,y, \alpha \in \bR^d$, $\alpha^T k(x,y) \alpha \geq 0$. 
This kernel induces an RKHS\footnote{In many applications, RKHS are restricted to spaces of functions valued in $\bR$, but the theory adapts faithfully to the more general setting of vector-valued maps.} $\cH \subset (\bR^d)^{\bR^d}$ whose elements are vector fields $V \colon \bR^d \to \bR^d$. 
Observe also that, as for any $\alpha,x \in \bR^d$, $V \in \cH \mapsto \alpha^T V(x) \in \bR$ is a (continuous) linear form, Riesz's representation theorem gives the existence of $k_x^\alpha \in \cH$ such that $\braket{k_x^\alpha,V}_{\cH} = \braket{\alpha,V(x)}$. 
The goal is thus to find $\tilde G_{\theta,P,K} \in \cH$ such that, for all $i \in I$, $\tilde G_{\theta,P,K}(x_i) = a_i$ (with $a_i$ being the non-zero entries of $G_{\theta,P,K}$) that would be as smooth as possible, 
i.e., of minimal norm, yielding the minimization problem:
\begin{equation}\text{minimize } \|V\|_\cH,\ \text{s.t. } V(x_i) = a_i,\ \forall i \in I.\end{equation}
The solution $\tilde G_{\theta,P,K}$ of this problem is the projection of $0$ onto the affine set $\{V \in H\,|\, V(x_i) = a_i, \forall i \in I\}$ and thus belongs to $\{V \in H\,|\, V(x_i) = 0, \forall i \in I\}^\perp$, 
and thus to $\{V \in H\,|\, \braket{k_{x_i}^{\alpha_i}, V}_\cH = 0,\ \forall i \in I,\alpha_i \in \bR^d\}^\perp$. 
Eventually $\tilde G_{\theta,P,K} \in \vspan(\{k_{x_i}^{\alpha_i}\,|\, i \in I\})$. 
This justifies to search for $\tilde G_{\theta,P,K}$ in the form of $\tilde G_{\theta,P,K}(x) = \sum_{i\in I} k(x,x_i) \alpha_i$, and the interpolation that it must satisfy yields
the following definition (see also~\cite[Theorem 8.8]{Younes2010}).

\begin{definition}
    Let $K$ be a geometric simplicial complex associated to $\theta=\{x_1,\dots,x_n\}\in(\bR^d)^n$, $k$ be a kernel on $\bR^d$, and $G_{\theta,P,K}$ be a gradient computation procedure.
    Let $\{a_i\}_{i\in I}$, $I\subseteq \{1,\dots,n\}$ be the collection of non-zero entries of $G_{\theta,P,K}$.
    Then, the {\em diffeomorphic gradient} is defined as:
    \begin{equation}
        \tilde G_{\theta,P,K}(x) = \sum_{i \in I} k(x,x_i) (\bK^{-1} a)_i,
    \end{equation}
    where $\bK$ is the block matrix $(k(x_i,x_j))_{i,j \in I}$ and $a = (a_i)_{i \in I}$.
\end{definition} 

In particular, $\tilde G_{\theta,P,K}$ inherits from the regularity of $k$ and will typically be a diffeomorphism. 
Note that $\tilde G_{\theta,P,K}$ can be understood as the convolution of $G_{\theta,P,K}$ with the kernel $k$, but involving a correction $\bK^{-1}$ guaranteeing that after the convolution, the interpolation constraint is satisfied. See~\cref{fig:exgrad}.

\begin{example}
    A natural choice of a kernel operator is to let $k$ be the Gaussian kernel defined by $k(x,y) \coloneqq \exp\left(-\frac{\|x-y\|^2}{2 \sigma^2}\right) I_d$ for some bandwidth $\sigma > 0$.
    In that setting, the expression of $\tilde{V}$ reduces to 
\begin{equation}\label{eq:expression_diffeo_simplified}
    \tilde G_{\theta,P,K}(x) = \sum_{i \in I} \rho_\sigma(\|x - x_i\|)  \alpha_i,
\end{equation}
where $\rho_\sigma(u) \coloneqq e^{-\frac{u^2}{2\sigma^2}}$, and $\alpha_i \coloneqq  (\bK^{-1} a)_i$ with $\bK = (\rho_\sigma(\|x_i - x_j\|)I_d)_{i,j \in I}$. 
\end{example}

\paragraph{Application to topological gradient descent.} 
Applying~\cref{eq:stochastic_gradient} with $\tilde G_{\theta,P,K}$ (instead on $G_{\theta,P,K}$) on the point cloud $\theta=\{x_1,\dots,x_n\}$ leads to several benefits:
\begin{itemize}
    \item From a computational perspective, the most expensive operation is to invert a $|I| \times |I|$ matrix, where $I$ denotes the non-zero entries of $G_{\theta,P,K}$ and is typically fairly small due to its sparsity (especially for the vanilla gradient $G_{\theta,\varnothing,K}$). 
    \item The diffeomorphism $\tilde G_{\theta,P,K}$ is defined on the \emph{whole space} $\bR^d$, not only on the set of vertices $\{x_1,\dots,x_n\}$. 
    It enables \emph{extrapolation} of the gradient to unseen points. This observation induces several useful consequences:
    \begin{enumerate}
    	\item Diffeomorphic gradients combine very well with distributed gradients (see Section~\ref{subsec:sample}): if $\{x_1,\dots,x_n\}$ is actually a subsample of a much larger point cloud $X$ 
    	(for which computing persistence diagrams is computationally prohibitive), then 
    	one can compute $\tilde G_{\theta,P,K}$ on $\{x_1,\dots,x_n\}$ only, and then use it in order to update the whole point cloud $X$.
    	\item Diffeomorphic gradients can be \emph{reused}: assume one has performed topological optimization on some dataset, and a new data point is given. In order to optimize it as well with vanilla gradient descent, one would have to redo topological optimization from scratch using the original dataset complemented with the new point: indeed, the vanilla gradients recorded in the previous optimization are not well-defined on the new data point. On the other hand, the sequence of diffeomorphic gradients from previous optimization can be reapplied at no additional cost as they are vector fields.
    \end{enumerate}
    \item In the regime $\alpha_k \to 0$, the update in \cref{eq:stochastic_gradient} yields the same decrease in the loss $\cL$ than an update with the vanilla gradient $G_{\theta,\varnothing,K}$. 
\end{itemize}

\begin{remark}
	When performing several iterations of topological optimization, one can collect the sequence of diffeomorphic gradients, and produce a \emph{discretized gradient flow} by composing these vector fields. 
    Note however that producing a flow from a sequence of vanilla gradients can also be done in different ways (i.e., without necessarily using diffeomorphic interpolations with kernels), the only required ingredient is to build a parametrized time varying vector field $v_t(\cdot ; \theta) \colon \bR^d \to \bR^d$ that interpolates, or at least approximate, the vanilla gradient whenever that one is non-zero. 
    An appealing alternative is thus to rely on \emph{Neural Ordinary Differential Equations} (NODEs), which are ODEs that are parametrized by neural network architectures, i.e.,~$\dot x(t) = v_t(x, \theta)$ where $v_t$ is a neural network taking $x \in \bR^d$ as input and $\theta$ denotes its parameters. 
    These neural nets are trained so that they induce ODEs which produce time-varying datasets that are as close as possible to some ground-truth datasets computed explicitly on a finite number of time points. If one measures these proximities between datasets with PDs (using, e.g., Vietoris--Rips filtrations), learning a NODE will produce a flow which tries to match the ground-truth PDs (on the provided time points). 
    Namely, the typical loss to be optimized would be $\theta \mapsto \FG( \Dgm(x_\theta(t = 1)), \alpha_{\mathrm{ref}})$, where $x_\theta(t=1)$ denotes the solution of the ODE at time $t=1$ (with parameters $\theta$), $\Dgm(x(t=1))$ is its Vietoris-Rips persistence diagram, and $\alpha_{\mathrm{ref}}$ is some target reference persistence diagram. 
\end{remark}

\subsection{Summary table}

\cref{tab:summary_table} synthesizes the guarantees and complexities associated to the gradient schemes and extensions presented in this section.
Recall that the baseline is the vanilla gradient, whose worst-case complexity is cubic with respect to the number of simplices considered $N$. 
In \cref{tab:summary_table}, we used the following notations: \begin{enumerate}
	\item $\Omega^{\texttt{Sample}}_\varepsilon$ is the complexity of sampling gradients on strata at distance $\varepsilon$ using the oracle associated to Assumption (A3) in Section~\ref{subsec:SGS},
	\item $\Omega^{\texttt{QP}}_{S_\varepsilon}$ is the complexity (that depends on the numerical solver) of solving the quadratic programming problem ${\rm argmin} \|g\|^2$ subject to $S_\varepsilon$ constraints (where $S_\varepsilon$ is the number of strata returned by the oracle),
	\item $\Omega^{\texttt{Inv}}_{d\cdot |I|}$ is the complexity of inverting a matrix of size $d\cdot |I|$ (where $d$ is the data dimension and $|I|$ is the number of non-zero entries of the gradient), and
	\item $n$ is the number of downsampled complexes with $N'$ simplices, and $m_p$ is the number of $p$-simplices, where $p$ is the dimension of the simplex associated to the singleton loss.
\end{enumerate}   

Note that both gradient schemes and both gradient extensions make the vanilla gradient denser, ensuring a smoother descent.
In terms of limitations, recall that interpolation requires geometric simplicial complexes, and that big-step gradients only work for singleton (or combined) losses.

\begin{table}[tbh]
    \centering
    \begin{tabular}{|c||c|c|}
        \hline 
		& Complexity & Guarantees (in addition to smoother descent)\\
        \hline
        \hline
         Vanilla & $O(N^3)$ & -\\ 
         \hline
         Stratified & $O(N^3) + O(\Omega^{\texttt{Sample}}_\varepsilon) + O(\Omega^{\texttt{QP}}_{S_\varepsilon})$ &  Non-asymptotic convergence \\
         Big-step & $O(N^3) + O( N \cdot m_p^2)$ & Empirically faster descent \\
         \hline
         Distributed & $O(n\cdot (N')^3)$ & Scalable + Faster iterations \\
         Diffeomorphic & $O(N^3) + O(\Omega^{\texttt{Inv}}_{d\cdot |I|})$ & Scalable  + Reusable gradients \\
         \hline
    \end{tabular}
    \caption{Comparisons of the gradient schemes and extensions presented in Section~\ref{sec:gradient_descent} for a fixed simplicial complex with $N$ simplices. }
    \label{tab:summary_table}
\end{table}

\section{An overview of some applications}\label{sec:applications}

We now provide a tour of practical applications involving topological optimization based on persistent homology.
The aim is not to be exhaustive, but rather to give an overview of typical problems for which topological optimization has proved to be useful.
We separate these problems into two families: \emph{filtration learning for producing better descriptors}, in which filtrations used for PD computation are automatically learned instead of being chosen a priori (see Section~\ref{subsec:app_filtlearn}), and
\emph{topological regularization for constraining models}, in which PDs are used as features for penalties in statistical inference and machine learning models in order to reduce complexity and/or use topological priors (see Section~\ref{sec:app_reg}).

\subsection{Filtration learning}\label{subsec:app_filtlearn}

The aim of filtration learning is to optimize filtrations so that the resulting PDs (and their subsequent vectorizations if applicable) are tailored for solving a given data science task. This provides a way for designing topological descriptors that does not rely on a user-specific filtration choice.
It is often used in a supervised context, where one is given some simplicial complexes $\{K_i\}_{1\leq i\leq n}$, together with some labels $\{y_i\}_{1\leq i\leq n}$ from a label space $\cY$, that 
a model $\phi \colon \pdspace\to\cY$ aims at predicting using the information captured by PDs. Typically, each data point induces a filtration map $F_i:\theta\mapsto f_\theta\in\Filt_{K_i}$, usually
parametrized with, e.g., a neural network, and a loss $L_i$ of the form $L_i:\alpha\mapsto \ell(y_i,\phi(\alpha))$, where $\ell$ depends on the label type (e.g., mean squared error
when $y_i\in\bR^d$, or cross-entropy when $y_i$ is a categorical variable).
The final loss is then:
\begin{equation}\label{eq:topolearn}
\cL \colon \theta\mapsto\sum_{i=1}^n L_i(\Pers\circ F_i(\theta)).
\end{equation}

\begin{remark}
    It is common that the model $\phi$ depends itself on some parameters $\theta'$, i.e., it is of the form $\phi_{\theta'}(\alpha)$. 
    These parameters are also optimized at training time,
    leading to a final loss of the form $\cL(\theta,\theta')$.
    This typically happens when, e.g., one wants to complement the topological descriptors with some standard models and/or descriptors, which come with their own parameters.
    For the sake of simplicity, we leave this point aside in this section. 
\end{remark}

\subsubsection{Filtration learning for images}

{\bf Reference article:} Barbarani et al.~\cite{Barbarani2024} \\

In this work, authors propose to rely on topological optimization to propose a methodology to detect \emph{salient points} on images, or image \emph{keypoints}. 
The main idea is to learn a filtration $F_\theta$ of which the local maxima (recorded in a PD in the sub) are interpreted as keypoints. 
In this case, the filtration map $F_\theta \colon \bR^{d_1 \times d_2} \to \bR^{d_1' \times d_2'}$ is encoded as a convolutional neural network (CNN), $d_1,d_2$ (resp.~$d_1',d_2'$) being the shape of the input (resp.~output) image. 
Given an input image $x \in \bR^{d_1 \times d_2}$, $F_\theta(x)$ can be understood as a height map amenable to persistence computation (see \cref{ex:img_filt}). 
See \cref{fig:keypoint} for a schematic illustration of the pipeline described in \cite{Barbarani2024}. 
The topological loss $L_i$ is defined on each image $I_i$ as
\begin{equation}
    L_i=\sum_{j\neq i} L_{(I_i,I_j)}(\Pers\circ F_\theta(I_i), I_j),
\end{equation}
where $L_{(I_i,I_j)}(\alpha_i, I_j)$ is the sum over the PD points in $\alpha_i$ of the products between: 1. the opposite of the lifetime of the PD point (hence strongly penalizing points close to the diagonal, thus favoring peaked height maps), and 2. a similarity term measuring the differences between $(a)$ the height map values of the critical pixels in $I_i$ associated to the persistence pair of the PD point, with $(b)$ the height map values of their corresponding pixels in $I_j$ using a ground-truth correspondence $U$ between the pixels of $I_i$ and $I_j$ (hence penalizing keypoints that are not reproducible across images). 
In a nutshell, $L_i$ is small if $F_\theta(I_i)$ has its prominent local maxima on pixels that also correspond to local maxima in the other images $\{I_j\}$ under $U$. 
The benefits on using persistent homology in this work is that it removes the need of using any prior and/or hyperparameters on the keypoint densities, scales and frequencies across images, in contrast to related works in the literature. 

\begin{figure}[ht]
	\centering
	\includegraphics[width=.5\textwidth]{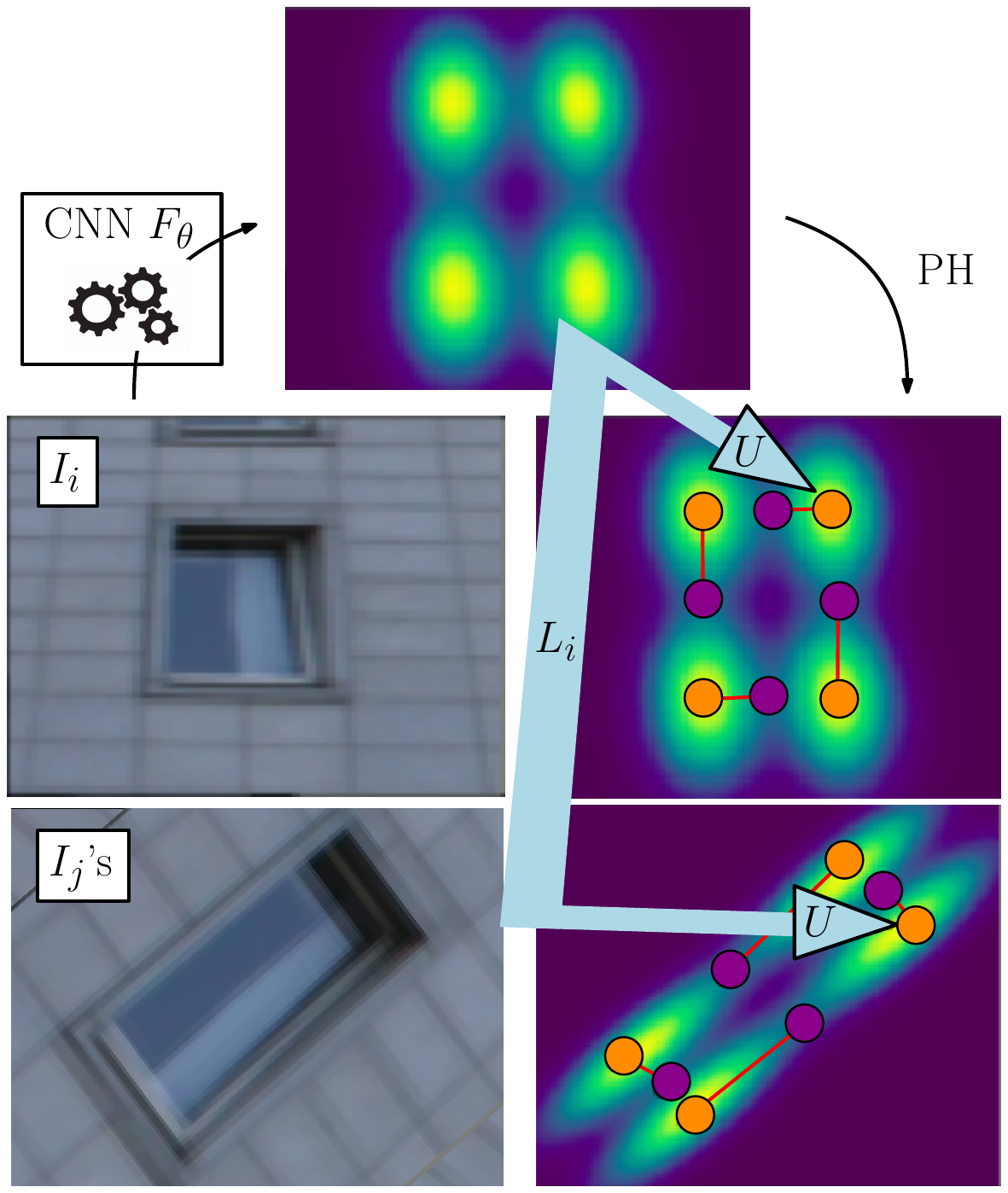}
	\caption{Filtration learning for images. Once orange/magenta pixels corresponding to prominent maxima of the height map (i.e.,~yielding points with large persistence in the PD) have been obtained for two given images, they are compared using a ground truth correspondence $U$. The goal is to optimize the CNN weights $\theta$ so that $F_\theta$ produces useful height maps, in the sense that corresponding PDs enable the identification of keypoints as local maxima of the height map. Inspired from \cite[Figure 2]{Barbarani2024}.}
    \label{fig:keypoint}
\end{figure}

\subsubsection{Filtration learning for graphs}

{\bf Reference article:} 
Horn et al.~\cite{horn2021topological} \\

Given that graphs can be interpreted as 1-dimensional simplicial complexes, it is natural to incorporate topological features in graph classification and regression. 
In the context of the work \cite{horn2021topological}, this is done in the following way. 
Graph nodes (a.k.a.~vertices) $v \in V$ come with attributes $x^{(v)} \in \bR^d$. 
A first network $\Phi \colon \bR^d \to \bR^k$ (e.g.,~a single hidden layer network in \cite{horn2021topological}) assigns to each node $x^{(v)}$ a set of $k$ values $a_1^{(v)},\dots,a_k^{(v)}$, yielding $k$ different \emph{views} of the graph $G$: each view represents the same graph $G$ but with different attributes $a_i^{(v)} \in \bR,\ i \in \{1,\dots,k\},\ v \in V$ on the nodes. 
Given these $k$ views, one can consider the filtrations $(G_{i,t})_{t \geq 0}$ for $i = 1,\dots,k$ where the set of vertices of $G_{i,t}$ is $\{v\in V\,|\, \Phi(x^{(v)})_i < t \}$, and the set of edges is $\{(v,v')\,|\, \max( \Phi(x^{(v)})_i, \Phi(x^{(v')})_i) < t \}$. 
These $k$ filtrations yield $k$ persistence diagrams (accounting for homology dimension $0$ and $1$). 
These $k$ diagrams are themselves turned into vectors (see \cref{ex:linear_reps}) and aggregated by another network $\Psi \colon \pdspace^k \to \bR^d$, eventually processed to obtained a new representation $\tilde{x}^{(v)}$ of the node $v$ which accounts for the topological structure of the initial  graph $G$ with nodes attributes $x^{(v)}$. 
Here, the maps $\Phi$ and $\Psi$ are optimized during training, involving an intermediate topological optimization step, yielding a loss of the form of Equation~\eqref{eq:topolearn}.
See~\cref{fig:togl} for a schematic illustration of the pipeline developed in  \cite{horn2021topological}. 
We refer the interested reader to~\cite{phamTopologicalDataAnalysis2025} for another example of graph filtration learning, and graph classification problems with topological descriptors.

\begin{figure}[ht]
\centering
\includegraphics[width=.99\textwidth]{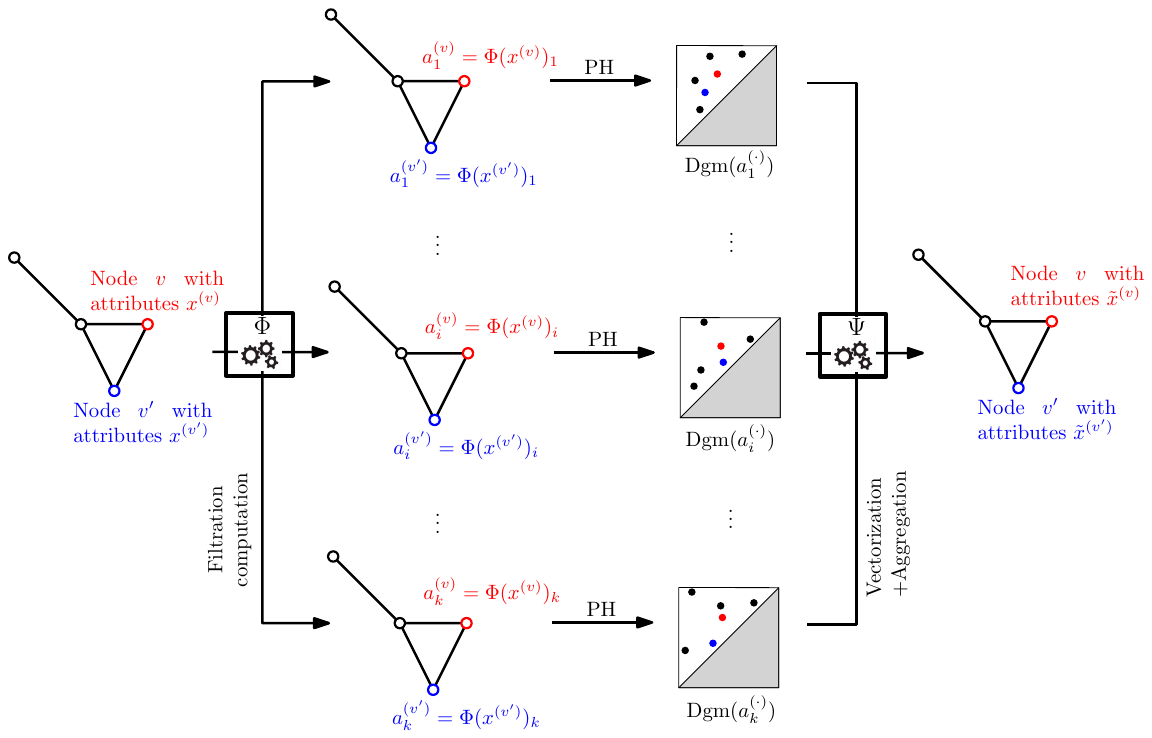}
\caption{Filtration learning for graphs. Inspired from \cite[Figure 2]{horn2021topological}.} \label{fig:togl}
\end{figure}

\subsubsection{Filtration learning for geometric complexes}

{\bf Reference article:} Nishikawa et al.~\cite{nishikawa2024adaptive} \\

Learning filtrations can also be achieved for geometric complexes induced by point clouds. Such a method has recently been proposed 
for weighted Rips filtrations (see~\cref{ex:wrfilt}).
More precisely, for a given point cloud $X$, the weight function $f=F(\theta)=f_\theta:X\to\bR$ can be parametrized
with a combination of DeepSet~\cite{zaheer2017deep} and fully-connected neural network architectures trained on the pairwise distance matrices of the point clouds. 
See~\cref{fig:topcl}.

\begin{figure}[ht]
\centering
\includegraphics[width=.8\textwidth]{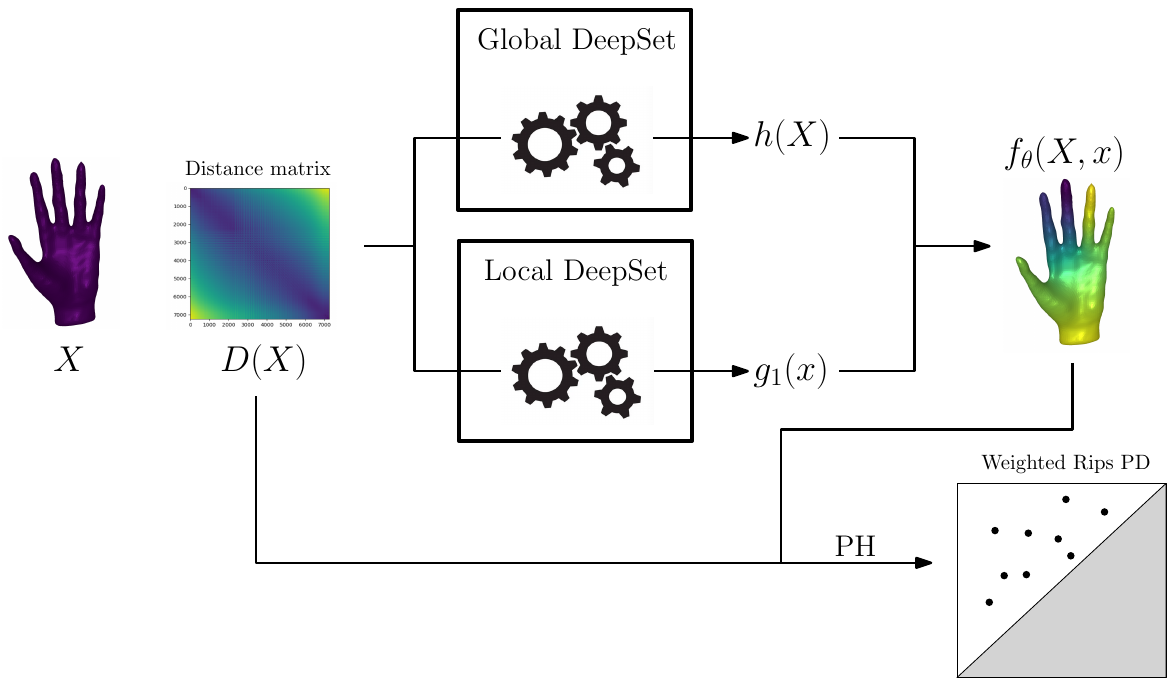}
\caption{\label{fig:topcl} Filtration learning for point clouds. }
\end{figure}

More formally, given a point cloud $X=\{x_1, \ldots, x_n\}\subset\bR^d$ and a point $x\in\bR^d$, the proposed network outputs a weight value $f_\theta(X, x)$ for the point $x$.
To make the resulting PD isometry invariant, 
it is desirable that the weight function also enjoys such invariance.
To achieve this, the network architecture computes features based on distance matrices, that can be decomposed in the following three parts:

\paragraph{Pointwise feature.}
The feature of point $x\in\bR^d$ can be obtained through aggregation of the distances $\|x-x_1\|, \ldots, \|x-x_n\|$ to the points in $X$ as follows:
\begin{equation}
    g_1(x)
    \coloneqq
    \phi^{(2)}(
        \mathbf{op} ( \{\phi^{(1)}(\|x-x_j\|)\}_{j=1}^n )
    ),
\end{equation}
where $\phi^{(1)}$ and $\phi^{(2)}$ are fully-connected neural networks.

\paragraph{Global feature.}
The global feature of the entire point cloud $X$ is inspired by the DeepSet architecture~\cite{zaheer2017deep} and can be obtained as follows. 
First, pointwise feature vectors are computed with the same architecture as $g_1$, i.e., 
\begin{equation}
    g_2(x_i)
    \coloneqq
    \phi^{(4)}(
        \mathbf{op} ( \{\phi^{(3)}(d(x_i, x_j))\}_{j=1}^n )
    ),
\end{equation}
where $\phi^{(3)}$ and $\phi^{(4)}$ are fully-connected neural networks.
Then, a feature vector $h(X)$ for the entire point cloud $X$ is obtained by applying a DeepSet architecture:
\begin{equation}
    h(X) \coloneqq \phi^{(5)}(
        \mathbf{op}(\{g_2(x_i)\}_{i=1}^n)
    ), 
    \label{eq:distmatnet}
\end{equation}
where $\phi^{(5)}$is a fully-connected neural network.

\paragraph{Combining features.}
The local and global features can be gathered in the concatenated vector $[h(X), g_1(x)]$.
The weight value of point $x$ can finally be obtained through a fully-connected neural network $\phi^{(6)}$:
\begin{equation}
    f_\theta(X,x) \coloneqq \phi^{(6)}([h(X),g_1(x)]^\top),
\end{equation}
where $\theta \coloneqq(\theta_k)_{k=1}^6$ is the set of parameters that appear in network $\phi^{(k)}$~$(k=1, \dots, 6)$.
Once the filtration $f_\theta$ is computed, the corresponding PDs can be obtained and vectorized in a differentiable way with PersLay~\cite{carriere2020perslay}, and fed to any standard deep learning model $\phi$ in order to classify point clouds,
by minimizing the cross-entropy loss $\ell$ in Equation~\eqref{eq:topolearn}.
See the corresponding article~\cite{nishikawa2024adaptive} for applications on protein and 3D mesh datasets.

\subsection{Topological regularization}\label{sec:app_reg}

Topology has also been proved to be a good method for \emph{constraining} models, either for $(i)$ limiting their \emph{complexities}, or for $(ii)$ imposing \emph{topological priors}. 

\begin{enumerate}
	
	\item[$(i)$] The more complicated a model is,
	the more likely it is that PDs computed out of it contain complex information. Hence, there are 
	various contexts in statistical inference and machine learning in which models are improved by penalizing those 
	whose PDs are too rich, which is called {\em topological regularization}.
	
	\item[$(ii)$] In the context of geometric complexes and point clouds, topological regularization can often be interpreted as
	\emph{topological priors}, i.e.,
	imposing that datasets contain specific geometric features, which is particularly relevant in \emph{dimensionality reduction} (see Section~\ref{subsec:app_dr}), or in generative models (e.g.,~generate images that contains topological priors).
	
\end{enumerate}

Given a parametrized model $\phi(\theta)$, this is usually achieved by optimizing losses of the form:
\begin{equation}\label{eq:toporeg_loss}
	\cL:\theta\mapsto L_{\rm data}(\phi(\theta)) + \lambda_{\rm topo} L_{\rm topo} (\Pers\circ F(\phi(\theta))),
\end{equation}
where $L_{\rm data}$ is a standard machine learning loss. 
The coefficient $\lambda_{\rm topo} > 0$ controls the strength of the topological penalty or prior, and is often picked with cross-validation.

\subsubsection{Topological penalization of model complexity}

{\bf Reference article:} Chen et al.~\cite{chen2019topological} \\

In this work, the authors consider a binary classification problem, i.e.,~find a model $\phi_\theta \colon \bR^d \to [-1,1]$, where $\mathrm{sign}(\phi_\theta(x))$ eventually states the class assigned to a point $x \in \bR^d$. 
Given a finite (training) sample $x_1,\dots,x_n$ and corresponding labels $\{y_1,\dots,y_n\} \subset \{-1, 1\}$, one seek to optimize the model parameters $\theta$ so that $\mathrm{sign}(\phi_\theta(x_i)) \simeq y_i,\ i=1,\dots,n$. 
For expressive classes of models (e.g.,~large neural networks), many values of $\theta$ may yield near-to-perfect accuracy on the training set; but in order to mitigate overfitting, one may balance between the model training accuracy and the model regularity. 
Of particular interest is the decision boundary $\{x\in \bR^d\,|\,\phi_\theta(x) = 0 \}$: intuitively, models with simpler decision boundaries (that still achieve a decent training accuracy) should be preferred. 

In \cite{chen2019topological}, the authors address the regularity of the decision boundary in terms of \emph{topological complexity}: ideally, the decision boundary should have as few topological features (connected components, number of holes...) as possible. 
Precisely, given $\phi_\theta$, one can use the model predictions as a filtration
for the whole ($k$-NN graph of the) data space $X$, i.e., one can use $F \colon \phi_\theta \mapsto \{\phi_\theta (x)\}_{x\in X}$ in~\cref{eq:toporeg_loss}.
As the classification boundary of the model is exactly $\phi_\theta^{-1}(\{0\})$, one can thus set $L_{\rm topo}={\rm PersTot}$ (see \cref{ex:perstot}), or the following loss:
\begin{equation}\label{eq:loss}
	L_{\rm topo} \colon \alpha\mapsto \sum_{(b,d)\in \alpha} \min\{|b|,|d|\}, 
\end{equation}
and then apply these losses to $\Dgm(F_\theta)$, where $F_\theta \coloneqq F(\phi_\theta)$, restricted to the PD points whose birth times are negative and death times are positive. 
As for the term $L_{\rm data}$, it is usually defined as any of the standard classification scores from the data science literature.
See~\cref{fig:clreg}.
This work showcases in particular an important application of topological regularization on graphs. 
It relies on the capacity of graphs (such as, e.g., $k$-nearest neighbor ($k$-NN) graphs) to model the data space. 
Namely, when given a function $f$ defined on a finite set of data points, one can extend the domain of the function to a $k$-NN graph built on top of the data points in order to be able to compute $\Dgm(f)$ (which would not be well-defined otherwise). 
This in turn allows to use $\Dgm(f)$ to penalize and reduce the complexity of the function $f$. 
See also~\cite{Hacquard2022} for a similar technique applied to the regression of functions defined on manifolds (and approximated by $k$-NN graphs computed on finite samples).

\begin{figure}[ht]
	\centering
	\includegraphics[width=.9\textwidth]{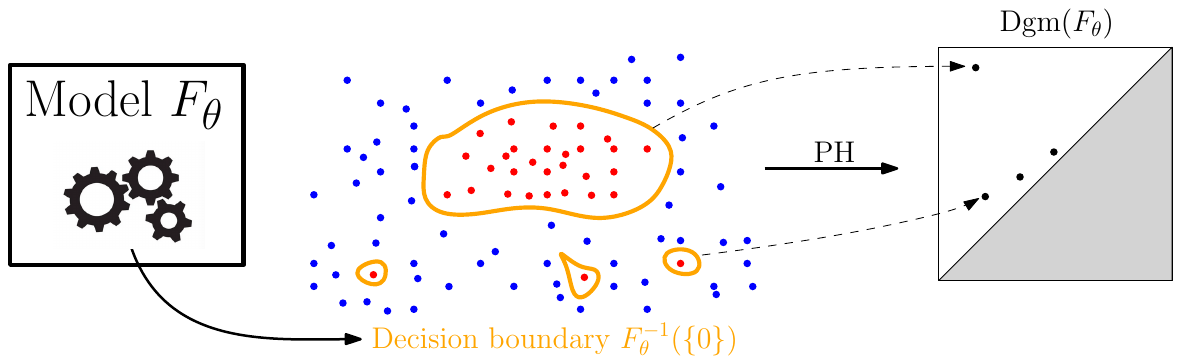}
	\caption{\label{fig:clreg} A schematic illustration of the use of topological regularization for binary classifiers defined. On the left, a dataset made of two classes (red and blue points, better in color) and the classification boundary of the model (orange). On the right, the PD of that boundary. Intuitively, the small circles used to catch isolated red points are likely to yield overfitting. They are accounted as points close to the diagonal in the persistent diagram. Penalizing the apparition of topological features in the classification boundary that are only used to catch few points is likely to mitigate overfitting. Inspired from \cite[Figure 2]{chen2019topological}.}
\end{figure}

\subsubsection{Penalization to favor topological priors}\label{subsec:app_dr}

\paragraph{Topology-constrained image generation.} {\bf Reference article:} Wang et al.~\cite{wangtopogan} \\

Constraining models with topological penalties has found some success on generative models for images.
Indeed, considering, e.g., generative adversarial models (GANs), one can complement the standard GAN losses 
(the discriminator and generator losses) with a topological penalty that forbids the generator to produce images with
incorrect topologies. 
Incorrect topologies are criteria that are difficult to measure and quantify with standard image descriptors, but that can be easily obtained
on binary images
using, e.g., the distance to the closest black pixel as filtration values (referred to as the distance transform (DT) in the article). 
See \cref{fig:topogan}.
More specifically, the topological GAN loss can be written as:
\begin{equation}
L_{\rm topo}(\Pers\circ F(\phi(\theta))) = W(\{\alpha_i\}_i, \{\alpha_j(\phi(\theta))\}_j),
\end{equation}
where $W$ is an optimal transport distance between two sets of PDs (using the $1$st diagram distance $\FG_1$ as ground metric, see \cref{def:dist_wasserstein}), and $i$ (resp. $j$) ranges over the index set of the real (resp. fake, or synthetic) images. 
As the synthetic PDs $\{\alpha_j\}_j$ are computed
on the synthetic images, which are themselves produced by the generator $\phi$, they depend on the generator parameters $\theta$.
Overall, this helps to produce images whose geometric and topological features match better those of the images in the training set. 
Works sharing a similar goal have been proposed in mechanical sciences \cite{behzadi2022gantl,sugai2024data,hu2024if,kii2025data}, where persistence-based losses are included to favor the apparition of certain topological features in material design, or topological diversity in the population of material designed. 

\begin{figure}[ht]
	\centering
	\includegraphics[width=.7\textwidth]{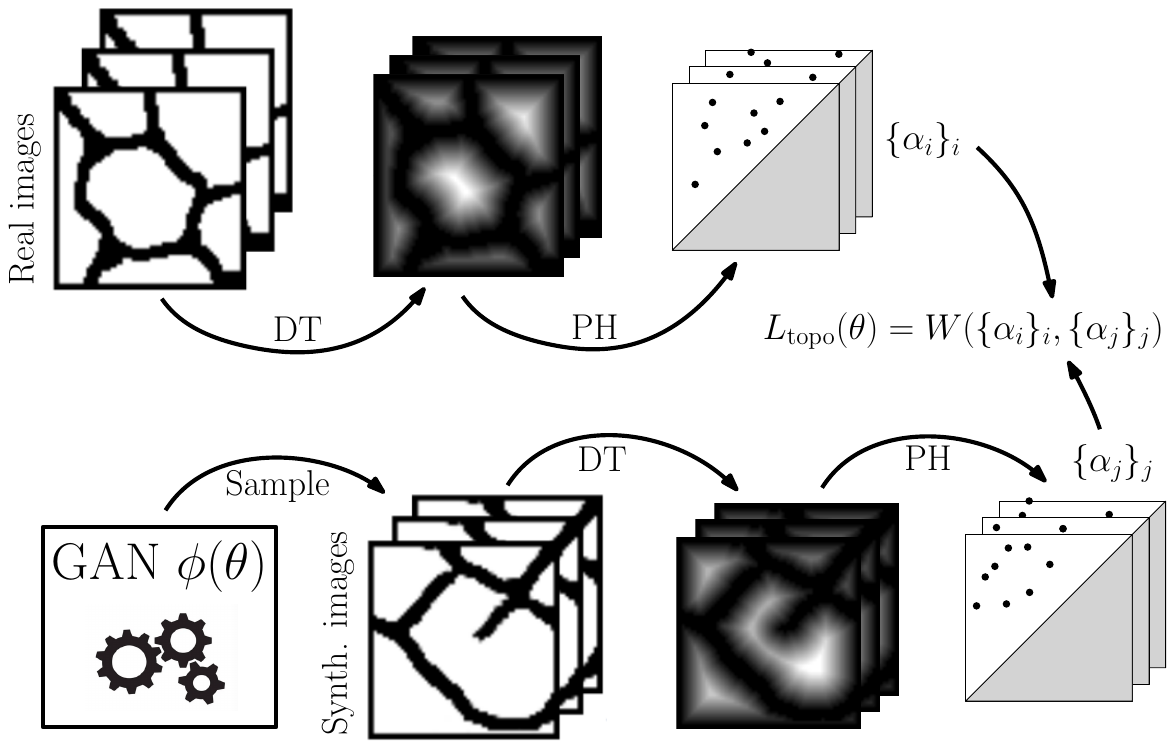}
	\caption{Topological regularization for images. Inspired from Wang et al.~\cite{wangtopogan}.}\label{fig:topogan}
\end{figure}

\paragraph{Dimensionality reduction.} {\bf Reference article:} Moor et al.~\cite{moor2019topological} \\

The main goal of dimensionality reduction is to visualize
datasets in high dimensions. Given a dataset $X\in\bR^{n\times d}$, representing $n$ points
with $d \gg 1$ dimensions, the goal is thus to find a \emph{reduced dataset} $\tilde X\in\bR^{n\times d'}$, with $d' \ll d$, typically $d'=2$ or $3$.
Additionally, this reduced dataset should be faithful, that is, close to $X$ with respect to a given notion of proximity between $X$ and $\tilde X$.
For instance, multidimensional scaling (MDS) aims at preserving the pairwise distances of $X$, i.e., the pairwise distances between 
embedding points in $\tilde X$ should be as close as possible to the corresponding distances in the original dataset $X$.
In addition to such losses, it is also natural to ask that the topology, understood through the lens of persistent homology, of $\tilde X$ and $X$ should 
coincide to some extent. 
This can be achieved with the loss provided in \cref{eq:toporeg_loss}, applied to point clouds:
\begin{equation}\label{eq:topodimred}
	\cL(\theta) \coloneqq L_{\mathrm{data}}(\tilde X(\theta)) + \lambda_{\mathrm{topo}} L_{\rm topo}(\Pers\circ F(\tilde X(\theta)),
\end{equation}
where $F$ is a filtration map on the $\tilde X(\theta)$, $L_{\rm topo} \colon \cD \to \bR$ is a loss that controls the PD of the reduced dataset, and
$L_{\rm data}$ is a standard geometric loss, such as, e.g., the reconstruction loss of autoencoder neural networks.
For instance, in \cite{moor2019topological}, authors consider an encoder $E_\theta \colon \bR^{d} \to \bR^{d'}$ producing low-dimensional embeddings and set $\tilde X(\theta) \coloneqq E_\theta(X)$, a decoder $D_{\theta'} \colon \bR^{d'} \to \bR^d$, and thus $L_{\rm data} (\tilde{X}(\theta)) \coloneqq |D_{\theta'} \circ E_\theta (X) - X|^2$. 
On the other hand, in order to favor encoders that preserves the topology (namely, the Vietoris--Rips PD) of the input point cloud $X$, one may take~$L_{\rm topo} = \FG_q(\cdot,\DgmRips(X))$ (see \cref{ex:dist_to_dgm})\footnote{Strictly speaking, the topological loss used in \cite{moor2019topological} is slightly different and involves minimizing the differences between the distances associated to the critical pairs in both PDs---we leave this subtlety aside in this survey.}.
See~\cref{fig:toae}, as well as~\cite{doraiswamyTopoMap0DimensionalHomology2021, vandaele2021topologically, Wagner2021, Carriere2024} for other instances of topological losses for dimensionality reduction. 
Note also that in~\cite{Carriere2024}, the use of the diffeomorphic gradient (see Section~\ref{subsubsec:diffeo}) allows to decouple \cref{eq:topodimred}: one can compute the topological regularization on the output of a black-box, pre-trained model $\phi_\theta$ \emph{after it has been trained}, and use the corresponding sequence $G$ of vector fields to produce a \emph{new} model $\tilde \phi_\theta \coloneqq G \circ \phi_\theta$.

\begin{figure}[ht]
	\centering
	\includegraphics[width=\textwidth]{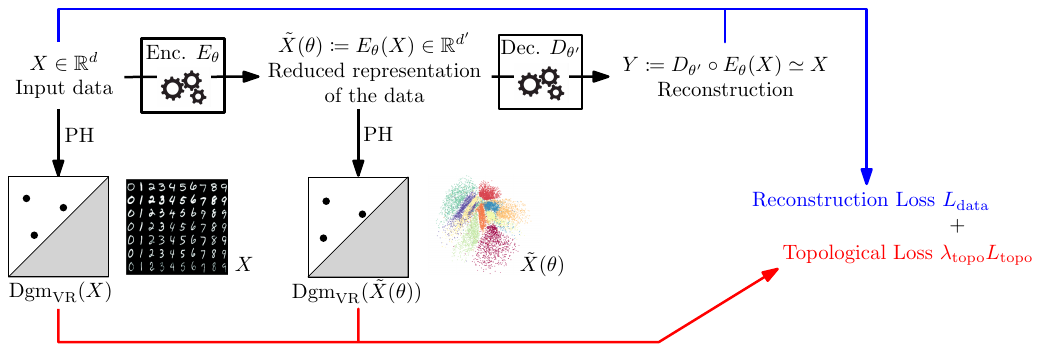}
	\caption{\label{fig:toae} Topological regularization for point clouds. Inspired from \cite[Figure 2]{moor2019topological}.}
\end{figure}

\paragraph{Topology-aware segmentation.} {\bf Reference article:} Liu et al.~\cite{liu2022toposeg}

Consider a dataset consisting of a collection of point clouds in $\mathbb{R}^d$, each containing $N$ labeled points. 
Let $\scC\coloneqq\{1, \ldots, C\}$ be the set of labels and $\scX$ be the space of point clouds. 
The goal of segmentation tasks is to learn a model $\phi_\theta\colon\scX\times\bR^d\to\scC$ to estimate the label $\phi_\theta(X, x_k)$ of the point $x_k$ in the point cloud $X$.
As usual, $\theta$ represents the trainable parameters. 
Typically, one can use neural networks such as PointNet++~\cite{qi2017pointnet++} and DGCNN~\cite{phan2018dgcnn} for $\phi_\theta$.

Let $\{\{(x^i_{k}, y^i_{k})\}_{k=1}^N\}_{i=1}^M$ be a training dataset, where $x^i_{k}$ is the $k$-th point in the $i$-th point cloud, and $y^i_{k}\in\scC$ denotes its label. 
Treating this problem as a classification task, the model $\phi_\theta$ can be optimized using a standard cross-entropy loss, denoted by $L_{\mathrm{CE}}$ hereafter. 
However, while neural networks showcase remarkable ability to learn local feature of point clouds, 
practitioners observe that they often fail to capture global features, especially topological features.
To deal with this issue, the authors of \cite{liu2022toposeg} propose to incorporate a topological term in their loss function to make the segmented point cloud have close topological structures to the ground truth.
Namely, given $t>0$, one constructs two filtrations $F^{(c)}_\theta(X_i), G^{(c)}(X_i)\colon R[X]_t\to\bR$ on the Vietoris-Rips complex $R[X]_t$ using the output of model $\phi_\theta$ and the ground truth in the following way:
\begin{equation}
	F^{(c)}_\theta(X_i)(\sigma) = 1 - \min_{x_{ij}\in\sigma}\phi^{(c)}_\theta(X_i, x_{ij}),
	\quad
	G^{(c)}(X_i)(\sigma) = 1 - \min_{x_{ij}\in\sigma}g^{(c)}_{ij}.
\end{equation}
They define their topological loss $L_{\mathrm{topo}}$ as (a variant of) the FG distance between the persistence diagrams of $F_\theta$ and $G$:
\begin{equation}
	L_{\mathrm{topo}}(f_\theta, g)\coloneqq\sum_{k=1}^M\sum_{i=1}^N\sum_{c=1}^C \FG_2(D(F^{(c)}_\theta(X_i)), D(G^{(c)}(X_i))),
\end{equation}

Minimizing the following combined loss of the form of Equation~\eqref{eq:toporeg_loss}
\begin{equation}
	L_{\mathrm{CE}}(\phi_\theta, g)+\lambda_{\rm topo} L_{\mathrm{topo}}(\phi_\theta, g)
\end{equation}
favors a segmentation with topological structures close to the ground truth. 
Mixing these two terms eventually produces a segmentation generally closer to that given by the ground truth than using $L_\mathrm{CE}$ alone \cite[Table 1]{liu2022toposeg}. 
A similar approach has been explored in different works, see for instance \cite{demir2023topology,clough2020topological,brito2025persistent}.

\begin{figure}
    \centering
    \includegraphics[width=0.9\linewidth]{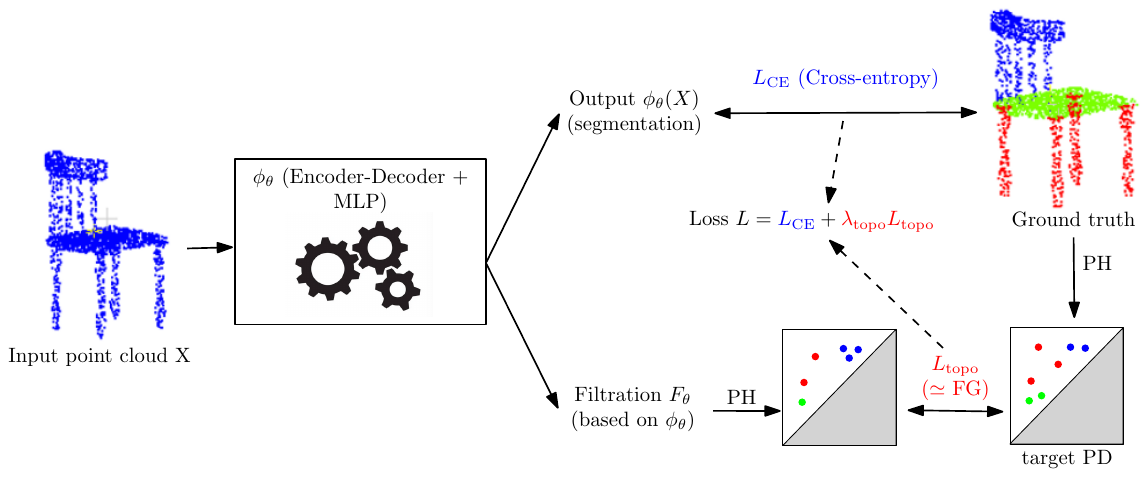}
    \caption{Schematic overview of topological segmentation as proposed by \cite{liu2022toposeg}. The model $\phi_\theta$ is trained in order to both produce a segmentation close to that given by the ground truth, both in terms of cross-entropy and in terms of similarity between the computed persistence diagrams. Inspired from \cite[Figure 2]{liu2022toposeg}}
    \label{fig:placeholder}
\end{figure}

\section{Numerical illustrations}\label{sec:expes}

In this section, we showcase the empirical behaviors of the different topological gradients and their extensions presented in~\cref{sec:gradient_descent}. 
We start with a simple proof-of-concept experiment, and then consider a \emph{topology-preserving} dimensionality reduction problem,  
a classical application of topology-based optimization with gradient descent (see also~\cref{subsec:app_dr}). 
We stress that the goal of this section is mostly pedagogical: it does not aim at establishing state-of-the-art results but to provide an all-in-one-place illustration of the different persistence-based topological optimization variants discussed in this survey. 
Our code\footnote{Note that, in its current implementation, our code can only be applied to point clouds.} is available at
\url{https://github.com/git-westriver/benchmark_ph_optimization}. 

\subsection{Illustration of the different gradient descent methods}

\paragraph{A first proof-of-concept.} In this experiment, we optimize the coordinates of a point cloud $X$ with respect to a loss that is based on the Vietoris--Rips persistence diagram $\DgmRips(X)$. 
The point cloud $X$ is initialized in $\bR^2$ as $100$ points close to a unit circle, along with one outlier near the origin.
We write $X=(x_0,\dots,x_{100}) \in \bR^{2 \times 100}$ and optimize $X$ with the loss function 
\begin{equation}
    \scL(X) \coloneqq - \FG_2(\DgmRips^{(1)}(X), \varnothing) + \mathrm{Reg}(X),
\end{equation}
where $\DgmRips^{(1)}(X)$ is the subset of $\DgmRips(X)$ corresponding to homology dimension 1, and $\mathrm{Reg}(X) = \sum_{(x,y) \in X} (|x| - 2)_+^2 + (|y|-2)_+^2$ is a regularization term that penalizes points escaping the compact set $[-2,2]^2$. 
Intuitively, minimizing this loss amounts to maximizing the total $1$-persistence in $X$, i.e., amounts to making the initial loop (that is notably perturbed by the outlier) as salient as possible. 
We use the different optimization methods described in \cref{sec:gradient_descent}. 
\cref{fig:poc} shows the snapshots with the different gradient displayed, along with the loss evolution of gradient steps for each optimization method; \cref{tab:execution_time_poc} displays the execution times over 20 steps.

For each method, we have two hyper-parameters: 
$(a)$ the learning rate $\eta$ used to minimize the loss, 
and chosen from $\eta\in\{0.064, \allowbreak 0.128, \allowbreak 0.256\}$ , and $(b)$
the decay rate $\gamma$ used to decrease $\eta$ with $\eta\cdot \gamma^{t-1}$ at the $t$-th step, and chosen from $\gamma\in\{1.0, 0.9, 0.8, 0.7\}$. 
For each method, we selected the values of $\eta,\gamma$ that eventually yield the smallest loss value after $20$ gradient steps.

\begin{figure}[htbp]
    \centering
    \includegraphics[width=0.95\linewidth]{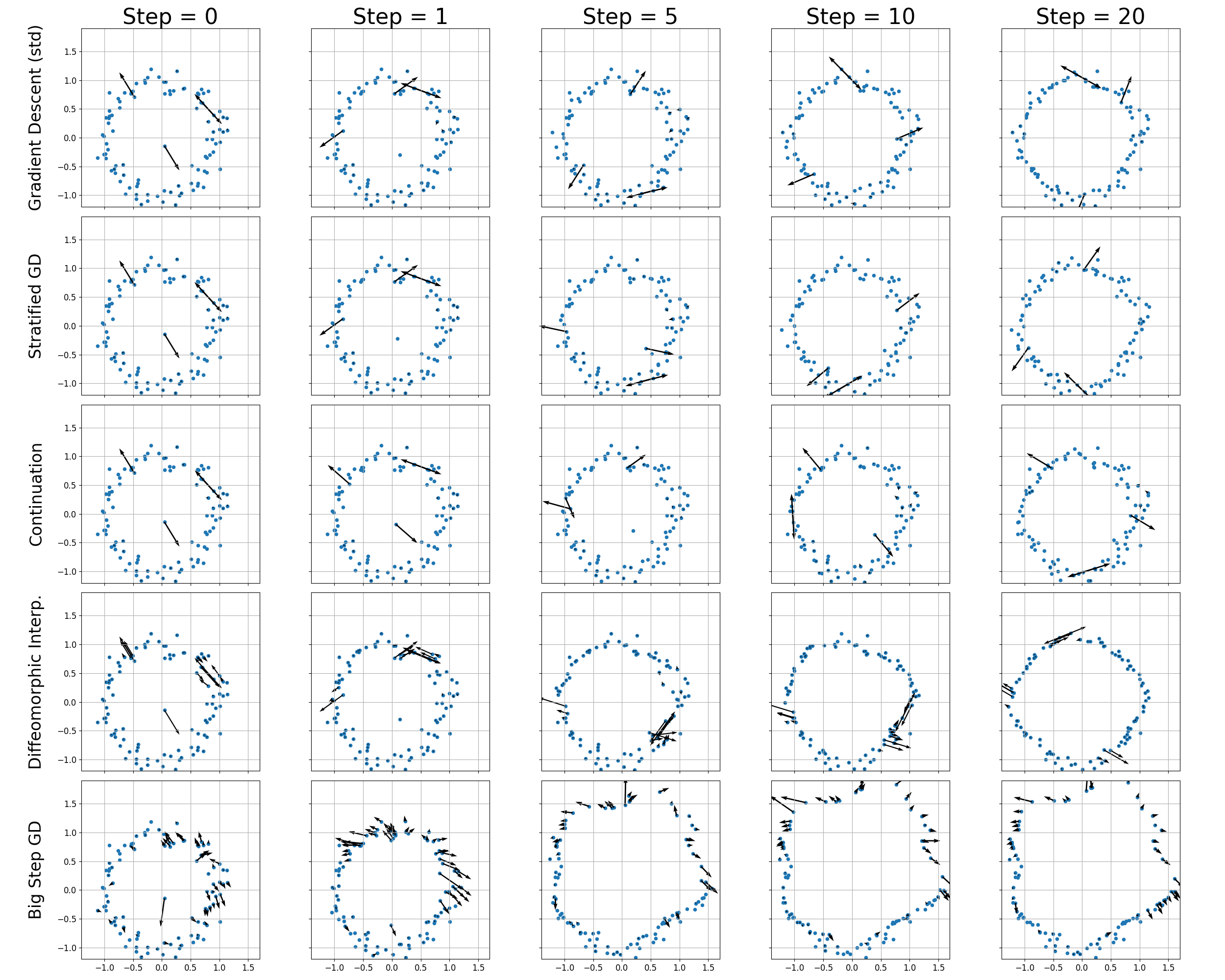}
    \captionof{figure}{Snapshot of $X$ over timesteps $0,1,5,10,20$ for the different methods.}
    \label{fig:poc}
    
    \vspace{.5cm}
    \begin{minipage}[t]{0.49\textwidth}
        \centering
        \includegraphics[width=\linewidth]{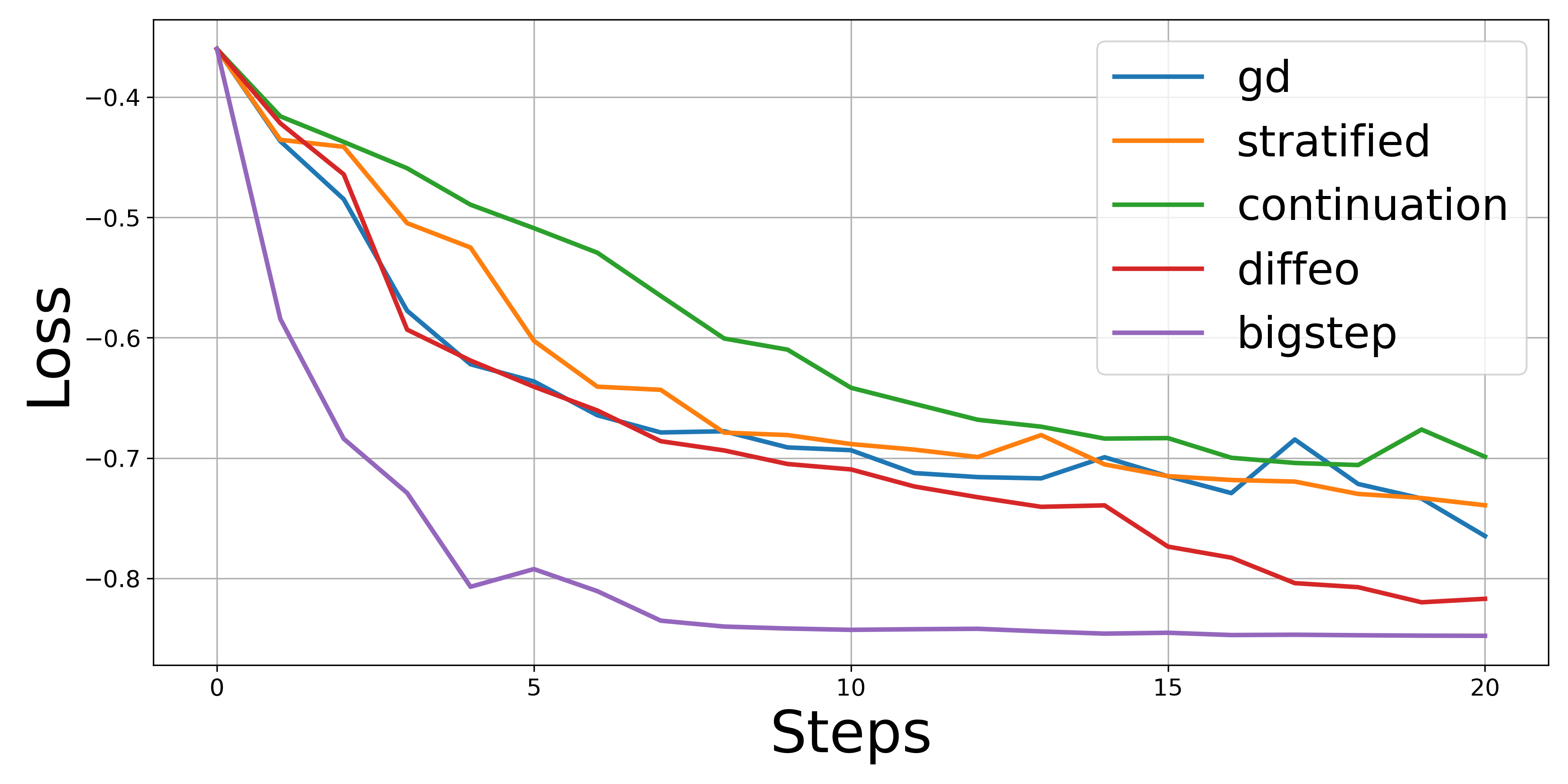}
        \captionof{figure}{Evolutions of the losses over steps.}
        \label{fig:poc_evolution}
    \end{minipage}
    \hfill
    \begin{minipage}[t]{0.49\textwidth}
    \vspace{-3cm}
    \centering
    
    \captionof{table}{Running time over 20 steps.}
    \label{tab:execution_time_poc}
    \smallskip
    
    \begin{tabular}{|c|c|}
    \hline
    Method & Time (s)\\
    \hline
    Vanilla            
        & $0.55 \pm 0.025$ \\
    Stratified
        & $4.55 \pm 0.326$ \\
    Big-step
		& $62.1 \pm 0.064$ \\
    Continuation                
        & $0.60 \pm 0.006$ \\
    Diffeomorphic                     
        & $0.64 \pm 0.004$ \\
    \hline
    \end{tabular}
    \end{minipage}
\end{figure}

The main observation that one can make from \cref{fig:poc} is that, as expected, standard gradients (first row) of $\scL$ are typically sparse, moving only few points at each step. 
A similar comment holds for the stratified gradient descent approach and the continuation one. 
In contrast, the diffeomorphic (fourth row) and big-step (fifth row) gradients provide a gradient-like update that is non-zero on much more points of $X$, yielding overall more efficient updates.

Overall, the big-step gradient yields, by a significant margin, the most efficient updates in terms of loss decrease, reaching a near-global configuration in less than 10 iterations. 
This comes at a computational price: big-step gradients are significantly longer to compute than vanilla ones and all other alternatives showcased here\footnote{Note that stratified gradients are also longer to compute due to the need of exploring nearby strata, while barely presenting other benefits in this simple setting. Recall that the purpose of this approach is mostly to provide convergence guarantees, which are not necessary in this experiment.}.

\paragraph{Scaling topological optimization via subsampling.} All the methods considered in this proof-of-concept experiment require to repeatedly compute the Vietoris--Rips persistence diagram of $X$ (in homology dimension $1$) in order to get the corresponding gradient. 
This is known to be prohibitive for large point clouds (with around a few thousands of points) and invite the practitioner to rely on subsampling methods. 
Point clouds that are close in the Gromov--Hausdorff distance yield close VR persistence diagrams thanks to \cref{th:stability}. However, this is not true when it comes to computing gradients: a critical pair in a subsample has little probability to also be critical in the original point cloud. 
Furthermore, sparsity of standard gradients is an even bigger issue when the input object is large, as only a tiny fraction of points will be updated at each step. 

We investigate qualitatively two approaches that help mitigating this phenomenon: the diffeomorphic and distributed gradients (see \cref{subsec:sample}). 
We recall that diffeomorphic interpolation takes a (standard) gradient, possibly computed on a subsample $X'$ of a large point cloud $X$, provides a vector field defined on the whole space (hence in particular on the support of $X$, not only $X'$). 
The distributed gradient consists, substantially, in averaging the gradients of several subsamples of $X$, also yielding denser gradients.

For illustrative purpose, we first consider a point cloud $X \subset \bR^2$ with $n = 2,000$ points, close to a unit circle, and consider the loss $\scL(X) = \FG_2^2(\DgmRips^{(1)}(X),\varnothing)$, i.e.,~the goal is to minimize the persistence of the underlying loop, by both collapsing it (i.e., by reducing the death time) and tearing it (i.e., by increasing the birth time). 
We fix the subsampling size at $s = 50$, and showcase on \cref{fig:poc_subsamplig} the vanilla gradient (computed from a given subsample of size $s$), the diffeomorphic gradient and the distributed gradient averaged over 100 repetitions. Both diffeomorphic and distributed gradients achieve their goal of providing an alternative to the vanilla gradient that is non-zero on much more points in $X$. The former acts at a local scale, while the latter is more global, a possibly appealing property which comes at the price of requiring to compute more (small) persistence diagrams to get a single gradient-like object. 

\begin{figure}
    \centering
    \includegraphics[width=\linewidth]{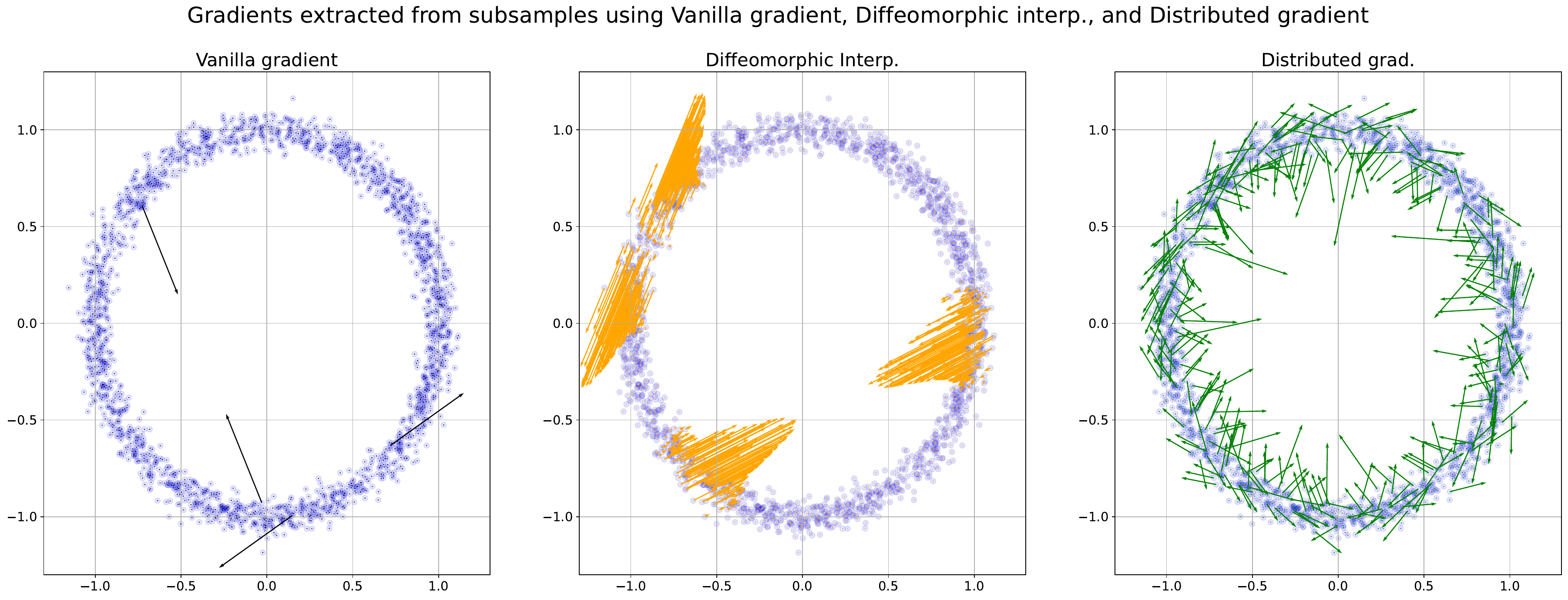}
    \caption{(Left) Vanilla gradient computed on a subsample of the original point cloud. Two pairs of points were critical in that subsample (one corresponding to a reduction of the death time, one to an increase of the birth time), yielding only four non-zero components in the gradient. (Middle) The diffeomorphic gradient extends the vanilla one in a smooth way on the input point cloud $X$, and thus follows a similar pattern while moving substantially more points than the vanilla gradient. (Right) The distributed gradient also exhibits much more non-zero components than the vanilla one.}
    \label{fig:poc_subsamplig}
\end{figure}

In order to compare these different methods, we reproduce an experiment proposed in \cite{Carriere2024}. We let $X$ be the \texttt{Stanford bunny} \cite{turk1994zippered}, a 3D point cloud made of $n = 35,947$ points, and consider the loss $\scL(X) = - \FG_2^2(\DgmRips^{(2)}(X), \varnothing) + \mathrm{Reg}(X)$, where (as in the previous experiment) $\mathrm{Reg}$ is a confinement term that penalizes points that would go out of the compact set $[-1, 1]^3$. 
This loss aims at increasing the persistence of the bunny's cavity (in homology dimension 2). 
The size of $X$ prevents from a direct computation of $\DgmRips^{(2)}(X)$ and gradients of the loss $\scL$ must be replaced by estimates obtained from subsamples. 
We therefore fix a subsampling size of $s=100$, and compare the vanilla gradient descent with the diffeomorphic gradient descent (with $\sigma = 0.05$, the value used in \cite{Carriere2024}) and the distributed gradient descent (summing $10$ vanilla gradients)\footnote{These values make the running time per step of diffeomorphic and distributed gradients similar.}. 
Eventually, we also consider the descent scheme using the diffeomorphic interpolation of a distributed gradient (also with $10$ repetitions) instead of a vanilla gradient. 
Results are displayed in \cref{fig:poc_subsampling2}. 
As in \cite{Carriere2024}, the vanilla gradient descent---which updates the position of only very few points (around 4 among the $\sim 36$k points of $X$) at each iteration---does not produce any perceptible modifications after 200 epochs. A similar comment holds for the distributed gradient approach: as we average only 10 vanilla gradients at each iteration, one may expect only $\sim 40$ points to be moved at each iteration, which remains too low to produce noticeable changes. As in \cite{Carriere2024}, the diffeomorphic interpolation approach produces satisfactory results on this type of task. 
Combining the diffeomorphic and distributed gradients yields the best lost decrease \emph{per step}, suggesting that taking combinations of these two methods can be efficient in some situations. 
Note that in terms of raw running times, distributed gradients (with 10 repetitions) require to compute 10 persistence diagrams and 10 corresponding vanilla gradients at each update step (hence a running time per step about 10 times longer than using diffeomorphic gradients alone), which overall makes diffeomorphic gradients faster to reach a near global optimum on this task. 
Nonetheless, it suggest that distributed and diffeomorphic gradients can interplay positively.

\begin{remark}
    Similarly, the interaction between big-step and diffeomorphic gradients was initiated in \cite{Carriere2024}, but no clear benefit was identified. More generally, investigating the possible interactions (suggested by \cref{fig:exgrad}) between topological gradients further is an interesting future research track. 
    Moreover, combining variants of the standard gradient descent algorithm (e.g.,~using momentum\footnote{Indeed, momentum (which accumulates gradients over iterations) provides a natural way to tackle vanilla gradient sparsity.} as explored in \cite[\S 4.1]{nigmetov2024topological}) with the topological gradients presented in this article is also another important research avenue.
\end{remark}

\begin{figure}
    \centering
    \includegraphics[height=3.5cm]{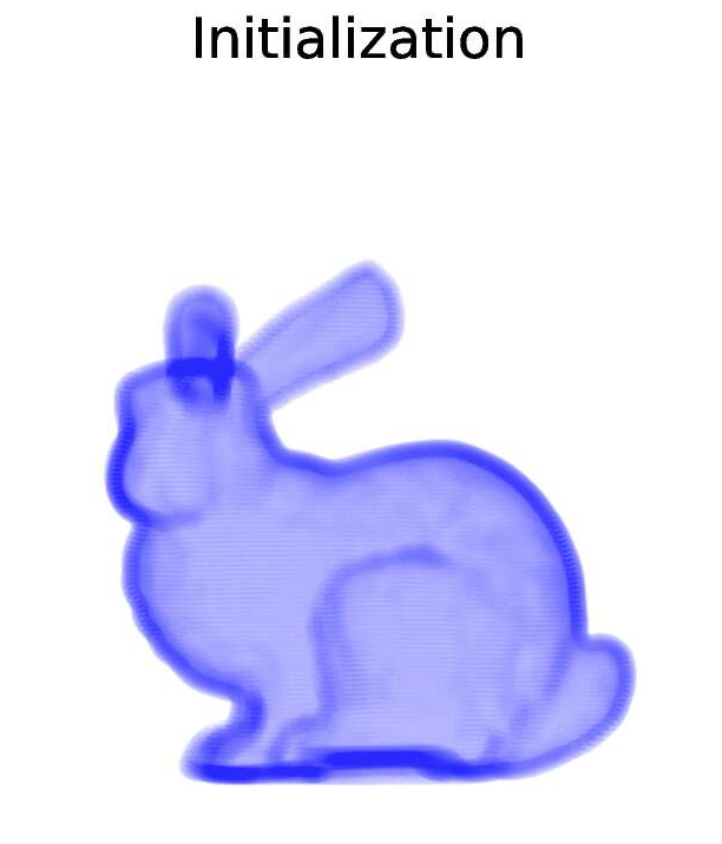}
    \includegraphics[height=3.5cm]{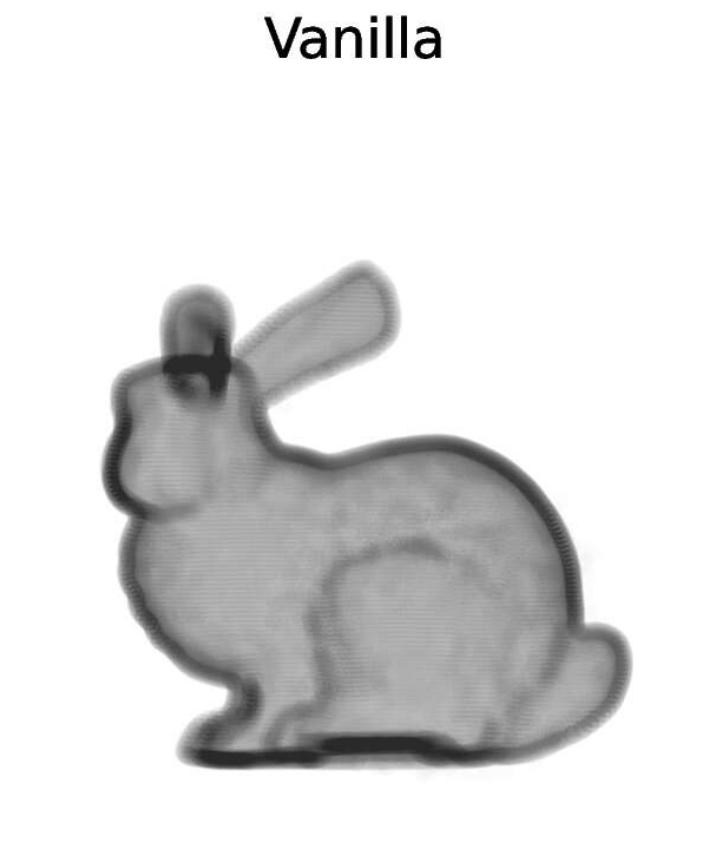}
    \includegraphics[height=3.5cm]{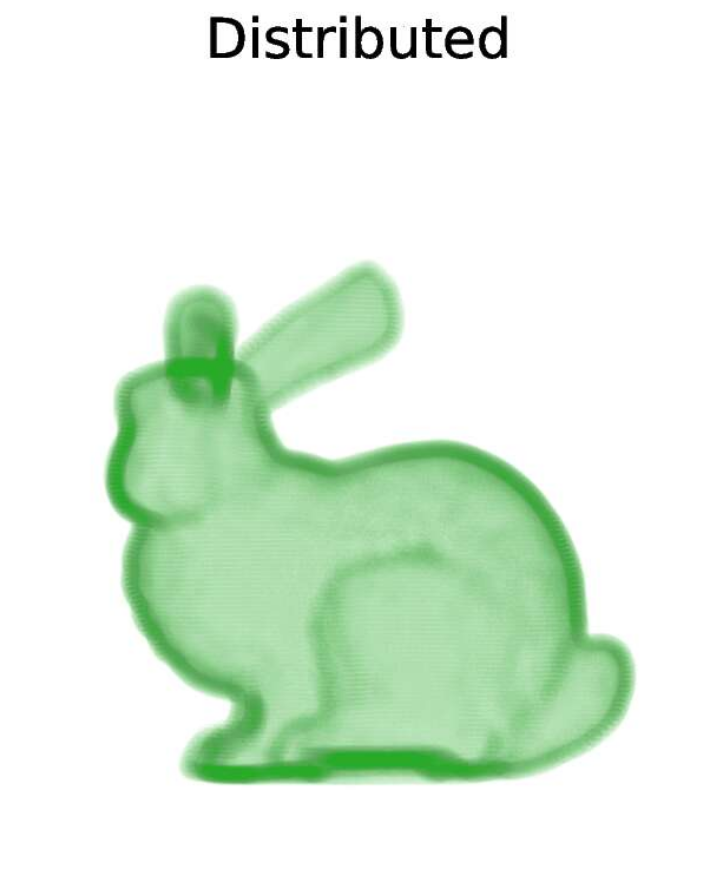}
    \includegraphics[height=3.5cm]{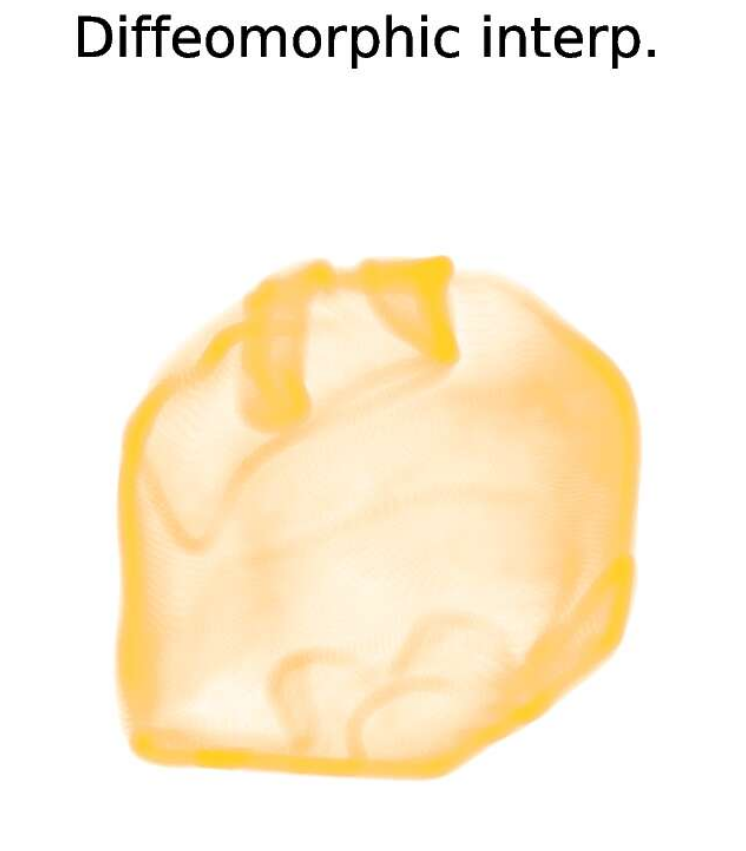}
    \includegraphics[height=3.5cm]{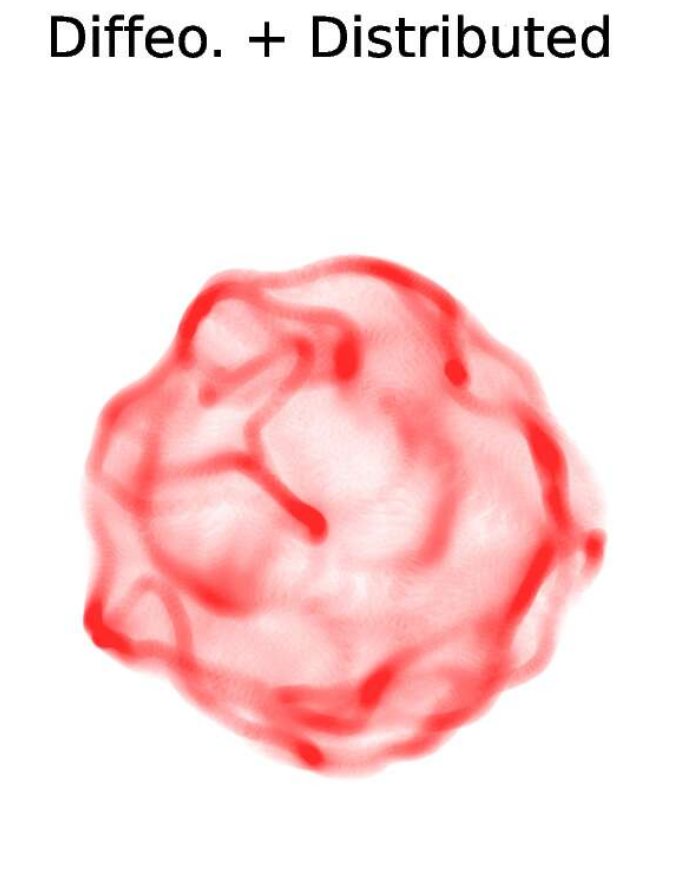}
    \includegraphics[height=5cm]{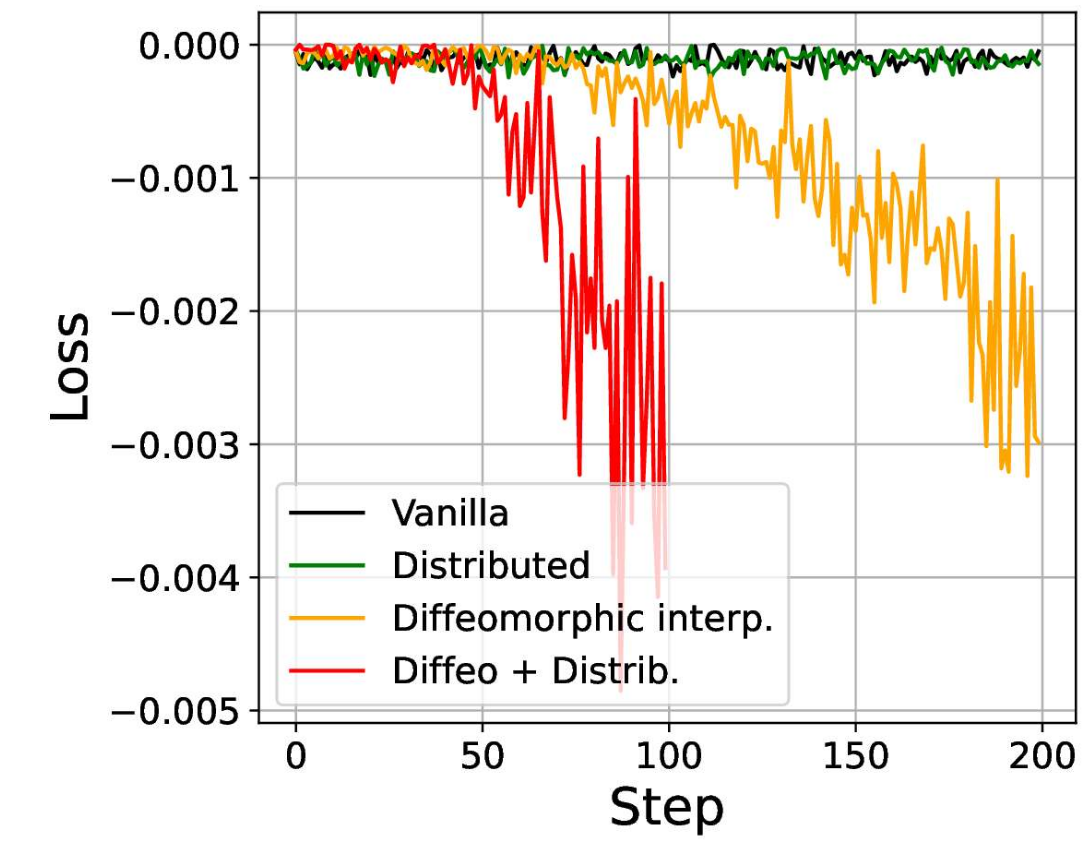}
    \caption{(Top row, from left to right) The initial point cloud $X$, the output of vanilla gradient descent with subsampling after $200$ epochs (barely any changes), the output of distributed gradient descent (averaged over 10 samples) after 200 epochs (barely any changes), the output of diffeomorphic gradient descent (using vanilla gradient) after 200 epochs, and the output of diffeomorphic gradient descent (using distributed gradients averaged over 10 samples) after 200 epochs. (Bottom row) Evolution of the losses across iterations.}
    \label{fig:poc_subsampling2}
\end{figure}

\subsection{Topological Autoencoder}

In this section, we reproduce the experimental setting of topological autoencoders described in \cite{Carriere2021a} (see also \cref{subsec:app_dr} and \cite{moor2019topological}). 
More precisely, we generated a point cloud in $\bR^3$ comprised of two nested circles (see \cref{fig:3D_visualize_topoae}) and embedded it non-linearly in $\bR^9$ by converting each point $p=(x,y,z)$ into the exponential of the $3 \times 3$ anti-symmetric matrix whose coefficients are $x,y$, and $z$.
The converted point cloud is denoted by $X \subset \bR^9$. 
The Vietoris--Rips persistence diagram of $X$ in homology dimension $1$ exhibits two loops (as a smooth transformation of the nested circles depicted in \cref{fig:3D_visualize_topoae}). 
The goal of this experiment is to propose an embedding of $X$ in $\bR^2$ using an autoencoder that would preserve topology, i.e.,~that would contain (exactly) two underlying loops. 

We use an autoencoder whose encoder $f_\theta \colon \bR^9 \to \bR^2$ and decoder $g_\theta \colon \bR^2 \to \bR^9$ are both made of three fully-connected layers with width $32$ and ReLU activations as well as batch normalization. 
We initialized the autoencoder by pre-training with the usual reconstruction loss (900 epochs). 
See the leftmost part of \cref{fig:topoae_embedding}. 
Then, we trained the autoencoder using the sum of the reconstruction loss and the topological loss computed as 
\begin{align}
    \scL(\theta) = \|X - (g_\theta \circ f_\theta)(X)\|_2 + \lambda \cdot \FG_2\left(\DgmRips^{(1)}(X), \DgmRips^{(1)}(f_\theta(X))\right),
\end{align}
i.e., the second diagram distance between the Vietoris-Rips persistence diagrams (in homology dimension 1) of the original space ($\bR^9$) and the latent space ($\bR^2$).
For the optimization of the loss function with the topological loss, we used the different topological gradients described in \cref{sec:gradient_descent}. 
See \cref{fig:topoae_embedding} for the resulting point clouds in $\bR^2$ and \cref{tab:execution_time} for the execution times over 100 epochs for each gradient. 

\begin{figure}[htbp]
	\centering
	\begin{minipage}{0.59\textwidth}
		\centering
	    \includegraphics[width=0.4\linewidth]{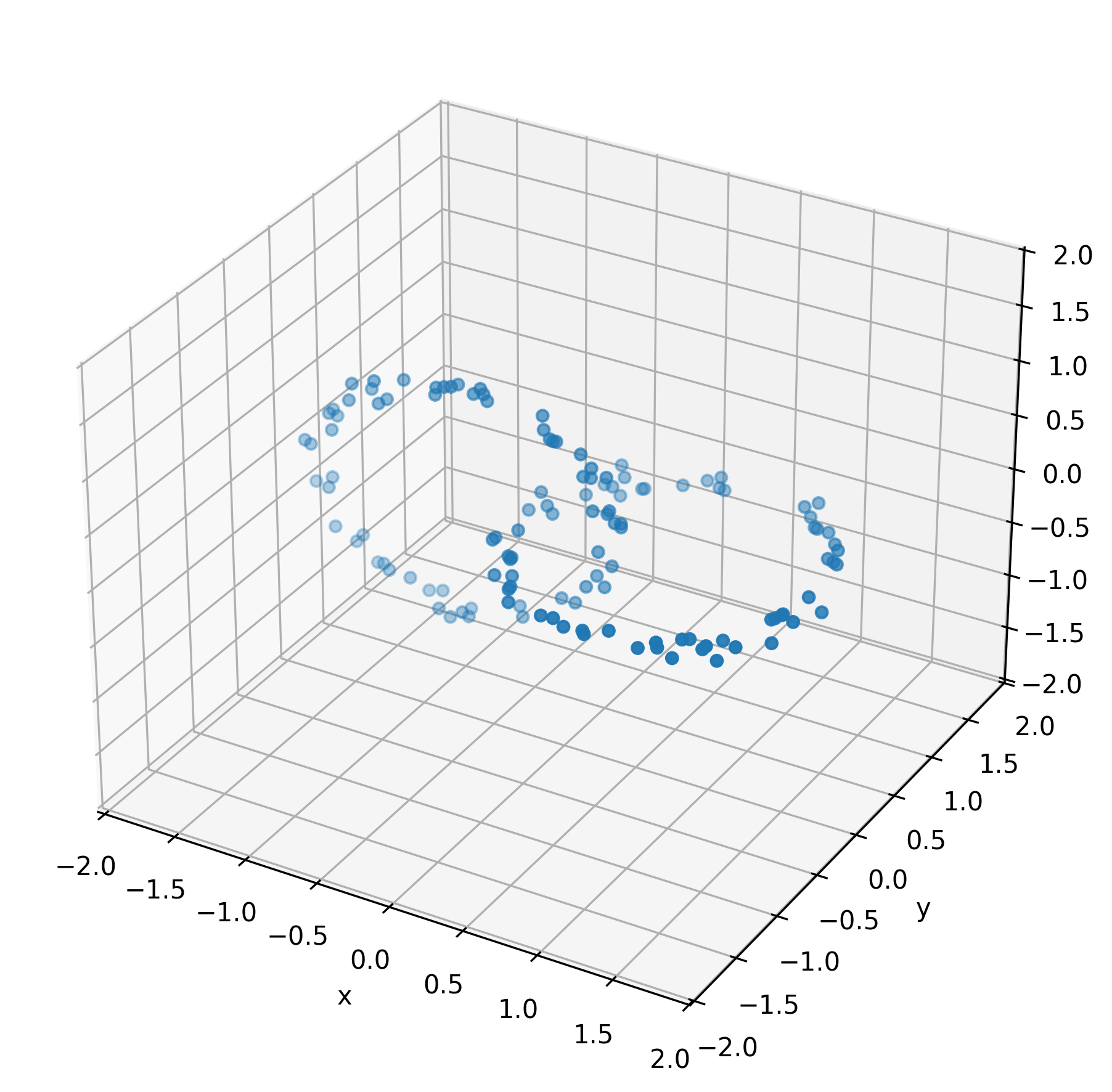}
        \caption{The original point cloud in $\bR^3$}
        \label{fig:3D_visualize_topoae}
	\end{minipage}
	\hfill
	\begin{minipage}{0.4\textwidth}
	\centering
	\captionof{table}{Running time over 100 steps.}
	\label{tab:execution_time}
	\begin{tabular}{|c|c|}
	\hline
	Method & Time (s)\\
	\hline
	Vanilla            
	& $3.39 \pm 0.06$ \\
	Stratified 
	& $4.54 \pm 0.6$ \\
	Big-step                    
	& $776.5 \pm 2.99$\\
	Continuation                
	& $1.25 \pm 0.06$ \\
	Diffeomorphic                      
	& $1.89 \pm 0.07$ \\
	\hline
	\end{tabular}
	\end{minipage}
\end{figure}

\begin{figure}[htbp]
  \centering

  \begin{minipage}[t]{0.25\textwidth}
    \centering
    \includegraphics[width=\linewidth]{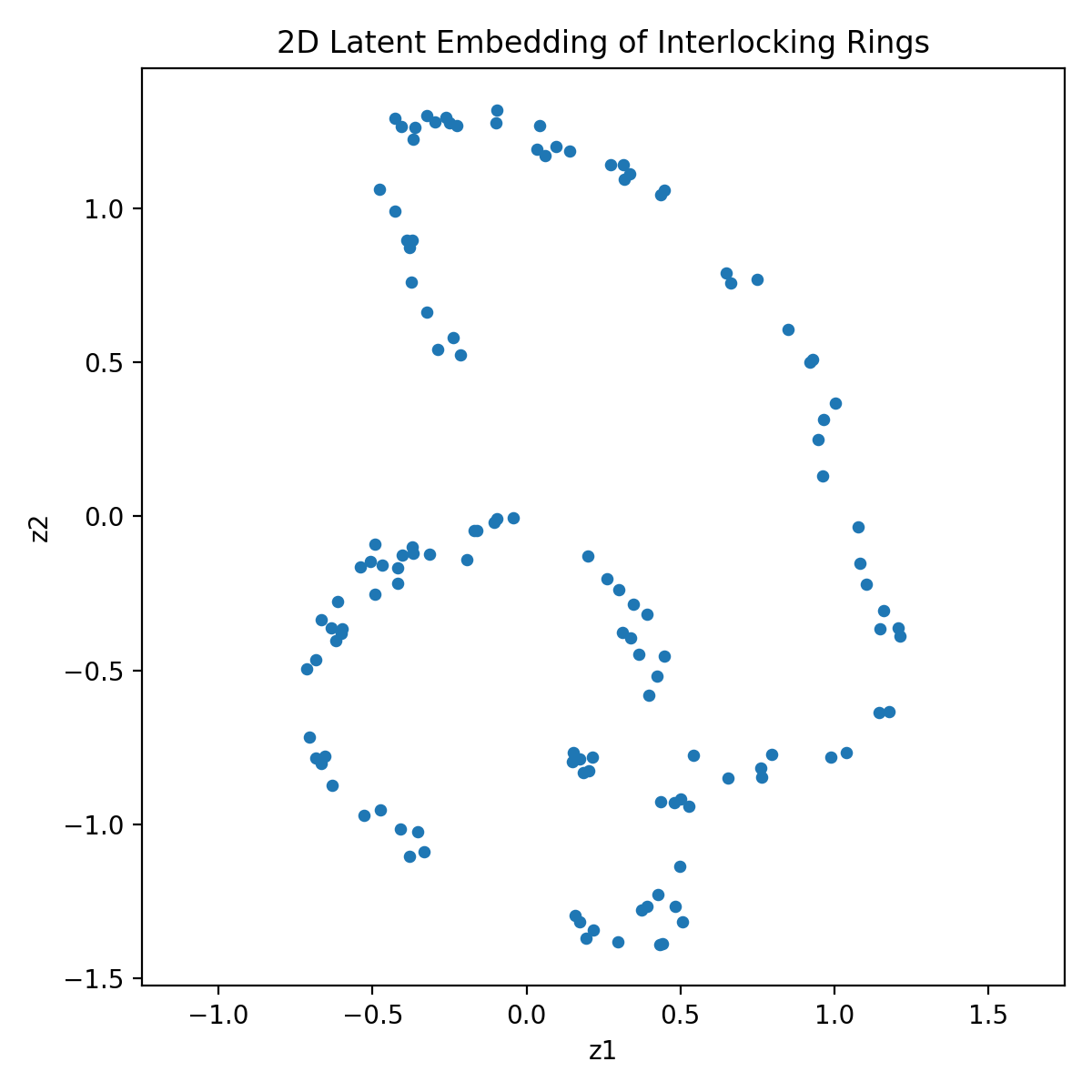}
  \end{minipage}\hfill
  \begin{minipage}[t]{0.25\textwidth}
    \centering
    \includegraphics[width=\linewidth]{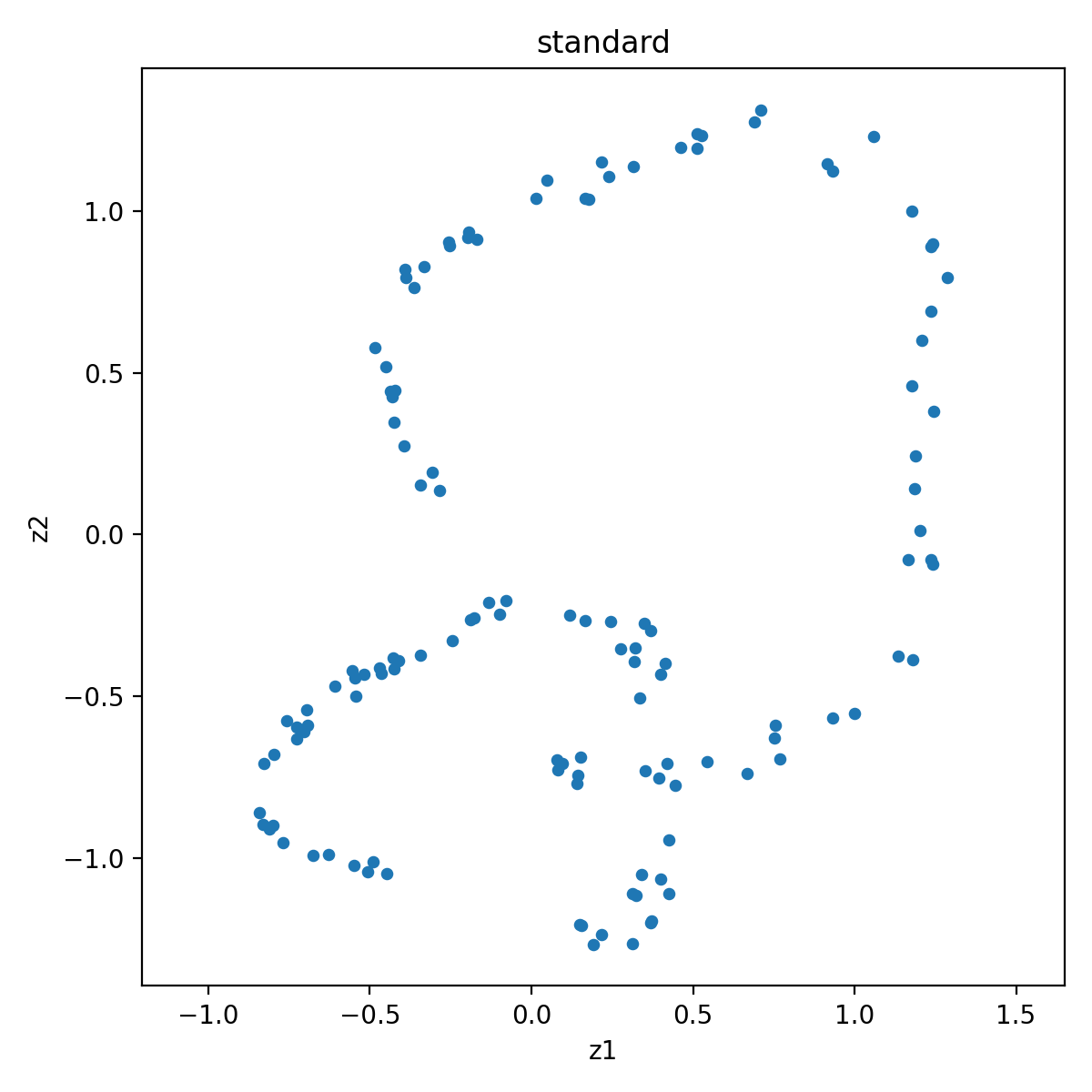}
  \end{minipage}\hfill
  \begin{minipage}[t]{0.25\textwidth}
    \centering
    \includegraphics[width=\linewidth]{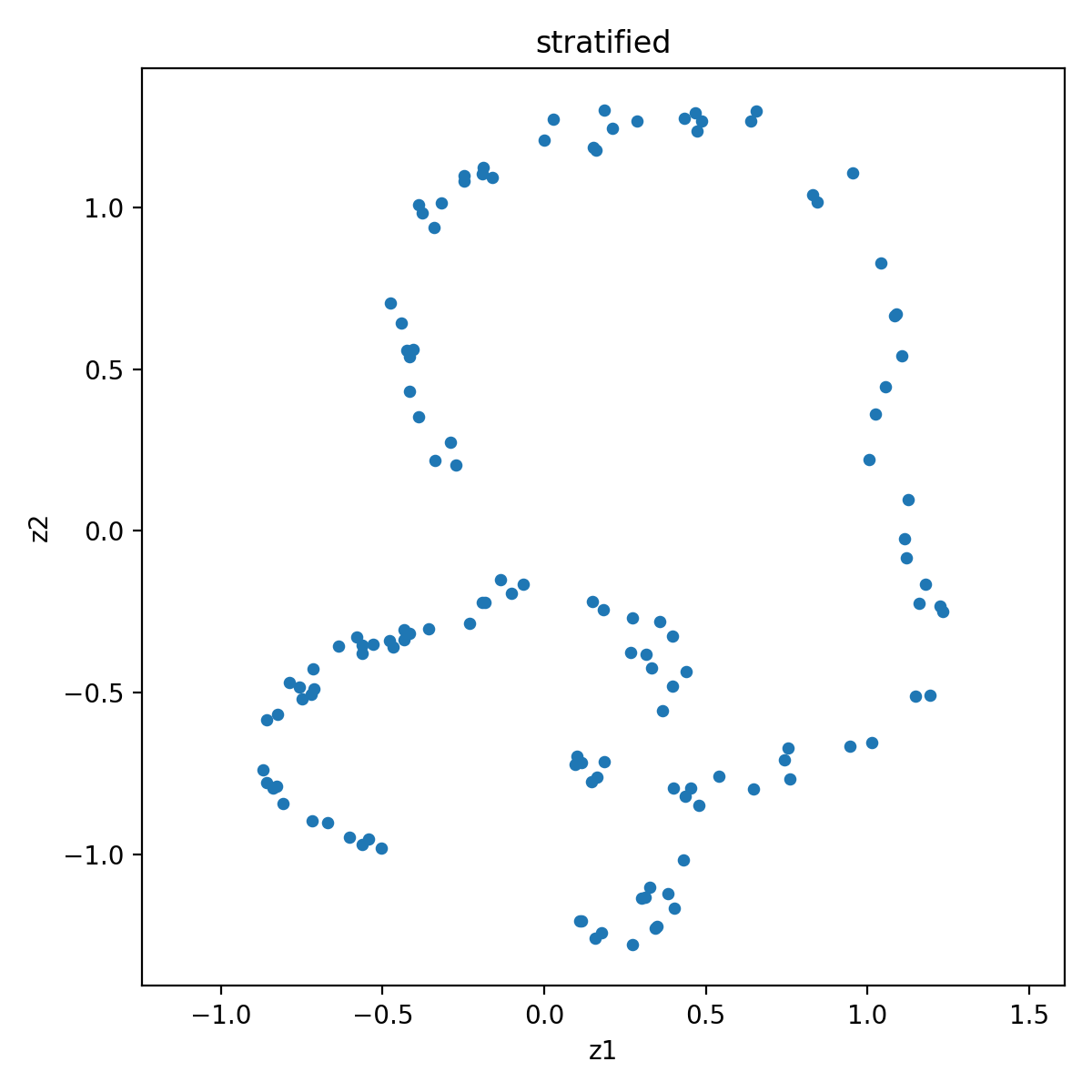}
  \end{minipage} \\

  \vspace{2mm} 

  \begin{minipage}[t]{0.25\textwidth}
    \centering
    \includegraphics[width=\linewidth]{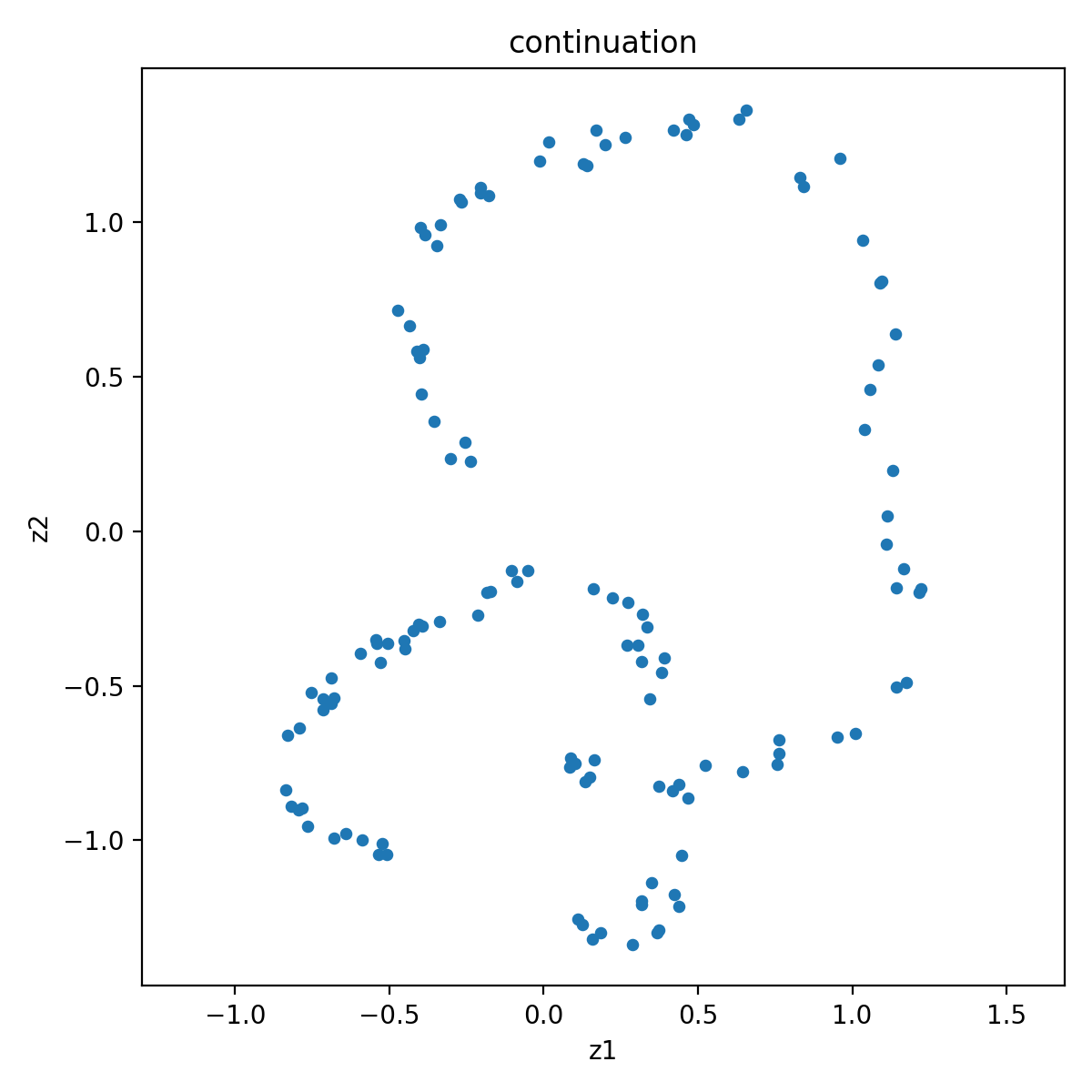}
  \end{minipage}\hfill
  \begin{minipage}[t]{0.25\textwidth}
    \centering
    \includegraphics[width=\linewidth]{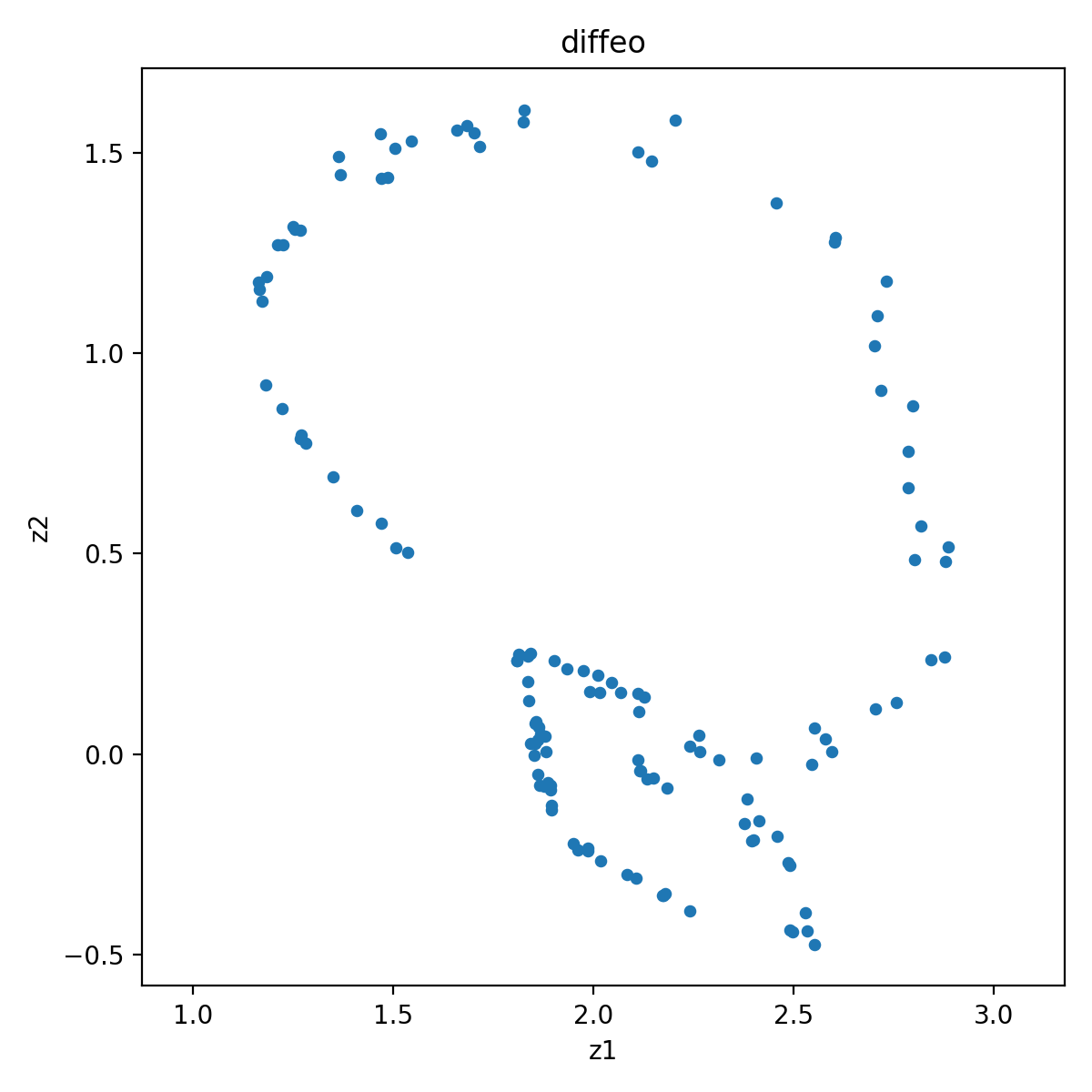}
  \end{minipage}\hfill
  \begin{minipage}[t]{0.25\textwidth}
    \centering
    \includegraphics[width=\linewidth]{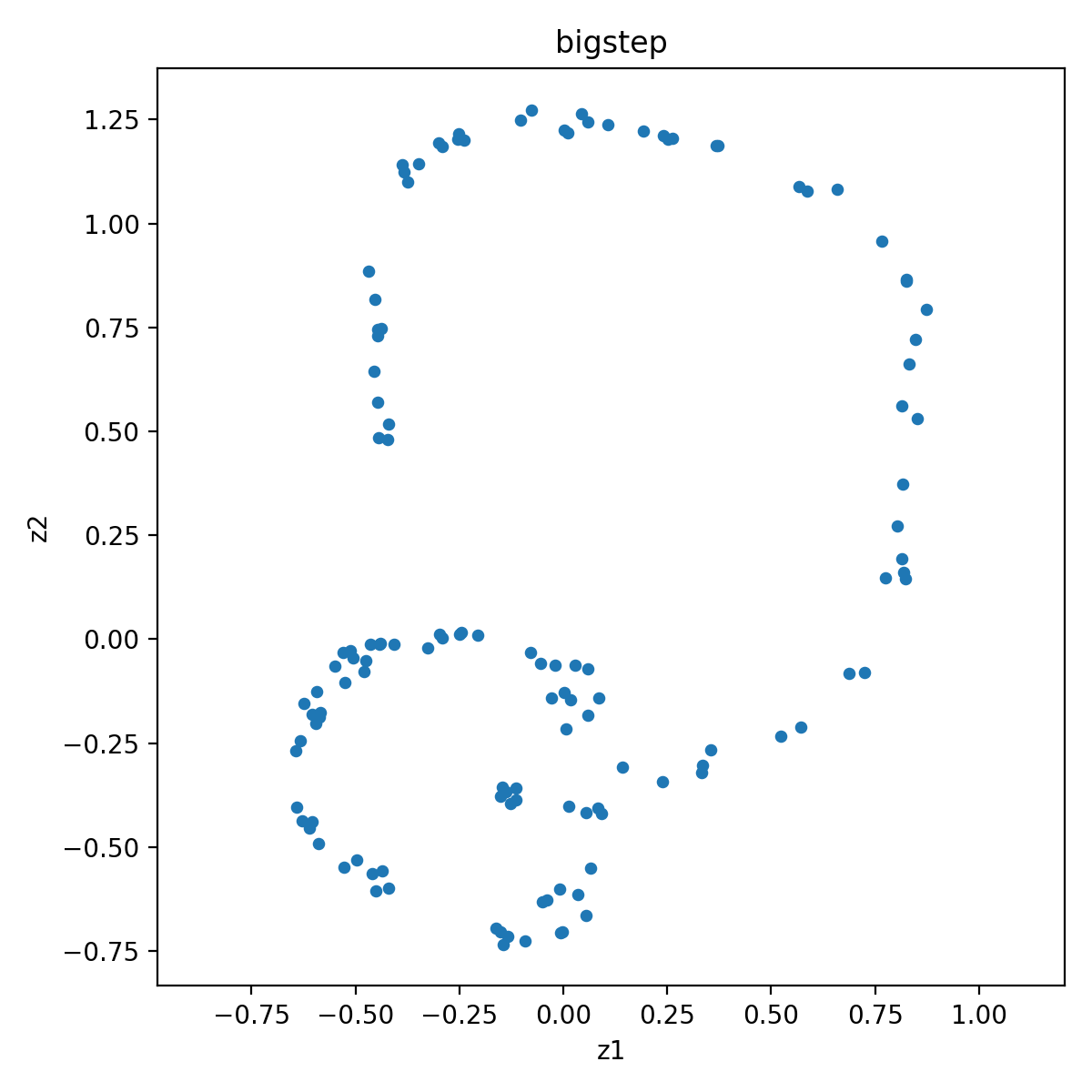}
  \end{minipage}

  \caption{Embedding of the point cloud $X$ in $\bR^2$ for the different topological gradient descents considered. (Top row, from left to right) No topological regularization, standard and stratified gradient descents. (Bottom row, from left to right) Continuation, diffeomorphic, and big-step gradient descents.}
  \label{fig:topoae_embedding}
\end{figure}

First, one may observe that even without adding an explicit topological penalty (top left of \cref{fig:topoae_embedding}), the latent space embedding already exhibits a fairly satisfying topology, which is a consequence of the implicit regularity of the encoder. The vanilla gradient descent, the stratified gradient descent and the continuation approach barely improve on this
initial try. 
As in the previous experiment, the diffeomorphic gradient descent (computed with vanilla gradients) and big-step gradient descent yield the best output (i.e., the lowest topological loss), the latter at the price of a higher computational time. 
This experiment once again suggests that mitigating sparsity of gradients in topological optimization tends to produce better results.

\section{Conclusion}

Persistence-based topological optimization is one of the most promising research directions in single-parameter persistent homology. 
The ability of learning topological descriptors, enforcing topological priors in scientific fields (computer vision, material science, computational biology, etc.) and to use topological quantities to regularize the training of machine learning models are exciting tracks worth of development.

It remains important to stress that this tool comes with important limitations if used naively: the poor computational efficiency inherent to the computation of persistence diagrams on large-scale problems (that typically requires resorting to subsampling), but also the sparsity of vanilla gradients described at the end of Section~\ref{subsec:vanilla} yielding to slow optimization schemes. 
This second aspect, specific to topological optimization, has attracted interest in the recent years and independent workarounds have been proposed, each with their own merits, though a more uniform and canonical formalism has yet to come. 

We hope that this survey along with the library we provide will give the necessary tools for mathematicians and computer scientists interested in working on this topic to get started easily. 

\paragraph{Open questions.} In addition to the refinement of the tools presented in this survey, we identified some research directions that would contribute to develop the field:
\begin{enumerate}
    \item \textbf{Creating topology.} Destroying topology in a given object (e.g.,~a point cloud) is a fairly easy task: one just has to take a loss function that penalizes points away from the diagonal, such as the total persistence (see \cref{ex:perstot}), and update the filtration values accordingly (typically reducing the death time, and possibly increasing the birth time as well). In contrast, creating topology if none is initially present in the filtration (i.e.,~the corresponding PD is empty) requires, at the level of PDs, to create a brand new point on the diagonal and then push it away from it. Unfortunately, the number of topological features (or the absence of) is generically a locally stable property of the filtration. That is, at the level of the filtration, no topology would be created when performing a gradient descent. This may explain why topological optimization has seen much more success when used as a regularization tool (i.e.,~limiting/destroying topological features) rather than a way to create topology (e.g.,~in generative models). 
    \item \textbf{Exploring non-gradient-based optimization.} While gradients are obviously the most important tool in optimization in general, the difficulties they pose in persistence-based optimization (non-smoothness, non-convexity, sparsity, etc) suggest that it may be worth seeking for other optimization routines in TDA. One appealing idea would be to rely on \emph{genetic algorithms}, where the filtration values would evolve randomly and be iteratively selected based on their quality (measured with respect to the considered loss function). This type of algorithm may in particular be a first attempt to tackle the question of creating topology presented above. 
    \item \textbf{Extensions to multiparameter persistence.} This survey focuses on single-parameter persistent homology, that is we restricted to filtrations valued in $\bR$. In contrast, multiparameter persistence consider functions valued in $\bR^d$. While being more general and appealing, this setting comes with huge theoretical and computational challenges, as no canonical counterparts of PDs can be defined \cite{carlsson2007theory,bauer2022generic}. Several descriptors have been defined in the literature (see for instance \cite{botnan2022signed,carriere2020multiparameter,chen2024emp,corbet2019kernel}). The question of involving these descriptors in the context of topological optimization has recently been considered in \cite{scoccola2024differentiability,carriere2024sparsification,loiseaux2024multipers} and is a natural track to follow. 
\end{enumerate}

\clearpage
\bibliographystyle{plain}
\bibliography{references}

\end{document}